\documentclass[12pt]{article}
\usepackage{eurosym}
\usepackage{amsmath}
\usepackage{float}
\usepackage[footnotesize]{caption}
\usepackage{subcaption}
\usepackage[margin=1.25in]{geometry}
\usepackage{color}
\usepackage[pdftex]{graphicx}
\usepackage{graphicx}
\usepackage{psfrag}
\usepackage{Here}
\usepackage{multirow}
\usepackage{pgf}
\usepackage{color}
\usepackage{pgf,pgfarrows,pgfnodes,pgfautomata,pgfheaps,pgfshade}
\usepackage[T1]{fontenc}
\usepackage{amsfonts}
\usepackage{amssymb, amsmath, amsthm,bm}
\usepackage{lscape}
\usepackage{natbib}
\usepackage{enumerate}
\usepackage{bbm}
\usepackage{lscape}
\usepackage{diagbox}
\usepackage{comment}
\usepackage{makecell}
\usepackage{titling}
\usepackage{setspace}
\usepackage{rotating}
\usepackage{mathtools}
\usepackage{xr-hyper}
\usepackage{afterpage}
\usepackage{lipsum}
 \usepackage[colorlinks,linkcolor={red},citecolor={blue},urlcolor={red},filecolor={red}]{hyperref}
\allowdisplaybreaks

\numberwithin{equation}{section}

\newtheorem{theorem}{Theorem}[section]

\newtheorem{assumption}{Assumption}

\newtheorem{remark}{Remark}
\begin{document}
\setlength{\droptitle}{-6em}
\title{Estimation and Inference for the $\tau$-Quantile of Heterogeneous Individual-Specific Coefficients\thanks{
  We would like to thank seminar participants at the University of Connecticut for their comments and suggestions. Computer programs to replicate the numerical analyses are available from the authors. All remaining errors are our own.}}
\author{Antonio F. Galvao\thanks{Department of Economics, Michigan State University, East Lansing,
MI 48824. E-mail: agalvao@msu.edu.} \and Ulrich Hounyo\thanks{Department of Economics, University at Albany -- State University
of New York, Albany, NY 12222. E-mail: khounyo@albany.edu.} \and Jiahao Lin\thanks{Department of Economics, University at Albany -- State University
of New York, Albany, NY 12222. E-mail: jlin28@albany.edu.} }
\maketitle
\begin{abstract}

\noindent This paper proposes estimation and inference procedures for quantiles of the heterogeneous individual-specific coefficients in panel data. Unlike conventional panel quantile regression, which focuses on outcome heterogeneity, our approach targets the $\tau$-quantile of the cross-sectional distribution of individual-specific slopes. We establish the asymptotic theory under both stochastic and deterministic designs, with convergence rates $\sqrt{N}$ and $\sqrt{N\sqrt{T}}$, respectively. We also develop two corresponding bootstrap procedures for practical inference, and formally establish their validity. The suggested methods are of practical interest since they require weaker sample size growth conditions than standard fixed-effect quantile regression, and accommodate large $N$ settings.  Numerical simulations and an empirical application illustrate the empirical effectiveness of the methods under both designs.

\vspace{0.5cm}

\noindent \textbf{JEL Classification}: C22, C23.

\medskip{}
\noindent \textbf{Keywords}: quantile regression, heterogeneous coefficients, panel data, asymptotic theory. 
\end{abstract}

\vspace*{-0.5cm}

\vfill{}

\thispagestyle{empty}

\pagebreak{}

\section{Introduction}

Quantile regression (QR) has emerged as a central framework in econometric analysis for modeling heterogeneous relationships across different points of the conditional distribution of an outcome variable (Koenker and Bassett, \citeyear{koenker1978regression};  Koenker, \citeyear{koenker2005quantile}).
Unlike mean regression approaches, it provides a more comprehensive view of the data-generating process by allowing the effects of covariates to vary across quantiles. In panel data settings, QR offers flexible means to capture individual-specific unobserved heterogeneity and distributional dynamics over time (see, e.g., Koenker, \citeyear{Koenker04}; Canay, \citeyear{canay2011simple}; Galvao, \citeyear{galvao2011quantile}). These methods are particularly useful for uncovering how explanatory variables influence the outcome differently across its conditional distribution, thereby providing richer insights than conventional mean-based methods.

Standard fixed effects panel QR models are, however, designed to primarily address heterogeneity in the outcome variable conditional on covariates through only allowing individual specific effects that shift the intercept, while slope parameters are typically assumed homogeneous across individuals. As a result, existing panel QR methods focus on heterogeneity in the conditional distribution of outcomes. In contrast, in this paper, we study heterogeneity in the structural parameters themselves by estimating quantiles of individual-specific regression coefficients.

This paper examines a different dimension of heterogeneity, across individuals $i$, in panel data quantile models by focusing on the distribution of individual-specific slope coefficients, while accounting for individual-specific fixed effects. We are interested in how structural effects themselves vary across units, rather than how effects differ across outcomes. This distinction is crucial in several empirical cases, for example, for understanding underlying differences in behavioral responses, treatment effects, or policy sensitivities. 

In particular, in education or labor economics, researchers may wish to assess how the effect of class size or job training varies across individuals, rather than how the conditional distribution of outcomes changes with these factors (see, e.g., Heckman, LaLonde, and Smith, \citeyear{heckman1999econometrics}; Krueger, \citeyear{krueger1999experimental}; Heckman and Vytlacil, \citeyear{heckman2005structural}; Bitler, Gelbach, and Hoynes, \citeyear{bitler2006what}). Despite its empirical importance, this type of heterogeneity has received limited formal attention within a unified statistical framework.

We propose a two-step estimation framework for the \(\tau\)-quantile of the cross-sectional distribution of heterogeneous individual slope coefficients in panel data. In the first step, we obtain unit-specific estimators, \(\{\hat{\theta}_{Ti}\}_{i=1}^N\), using the time-series dimension, $T$, for estimation, where the number of cross-sectional units is $N$. In the second step, we do not pool these estimates to recover a common coefficient (e.g. Galvao and Wang, \citeyear{galvao2015efficient}), nor do we use \(\tau\) to index the conditional quantile of the outcome variable as in the standard panel QR (e.g. Galvao, Gu, and Volgushev, \citeyear{GalvaoGuVolgushev20}). Instead, we apply the quantile operator $Q_{\tau}$ across \(i\) to estimate the \(\tau\)-quantile of the latent cross-sectional distribution of the heterogeneous coefficients:
$\hat{\theta}_\tau = Q_\tau\big(\{\hat{\theta}_{Ti}\}_{i=1}^N\big)$. 
Thus, $\tau$ indexes location in the distribution of heterogeneous coefficients across individuals, not location in the conditional distribution of $y_{it}$ given regressors. By considering several values of $\tau$, we summarize different parts of the distribution of structural effects, such as lower-tail, median, and upper-tail coefficients across units.

We develop an asymptotic theory for the proposed estimator under two distinct designs: stochastic and deterministic. These correspond to two common empirical scenarios. The \emph{stochastic-design} case
applies when the heterogeneous coefficients represent random draws from a larger population, and inference targets the population distribution of individual effects.
In this case, the estimator achieves $\sqrt{N}$-consistency and asymptotic normality under a mild condition on the sample size growth $\sqrt{N}/T=O(1)$, which is weaker than the standard requirement $N/T=O(1)$ in fixed-effect panel QR; see, e.g., Galvao and Kato \citeyearpar{galvao2016smoothed}.
The \emph{deterministic-design} case is relevant when the researcher observes the full large cross-sectional population of interest and therefore treats the heterogeneous parameters as fixed. It is also relevant when the object of interest is the empirical distribution of the heterogeneous parameters. In this case, the estimator converges at the novel
rate $\sqrt{N\sqrt{T}}$, reflecting the absence of cross-sectional
randomness, but requiring a more restrictive growth condition $T^{1/2}\ll N\ll \frac{T^{3/2}}{(\log T)^2}$
to ensure bias control and limit validity.  

To conduct practical inference, we introduce two bootstrap procedures tailored to the two designs considered: the stochastic-design quantile bootstrap (SQB) and the centered deterministic-design quantile bootstrap (CDQB). Both methods are formally shown to provide consistent approximations to the corresponding asymptotic distributions of the estimators. The SQB accounts for both cross-sectional and time-series randomness, while the CDQB conditions on the realized heterogeneity and resamples over the time dimension. 

We provide numerical simulations to evaluate the proposed methods in finite samples. Simulation results provide evidence that the two bootstrap procedures complement each other: SQB performs well in stochastic designs, while CDQB yields  more appropriate inference in deterministic settings across a wide range of $(N,T)$ configurations. 

An empirical application to mutual fund performance illustrates the usefulness of the proposed method. By estimating fund-specific coefficient quantiles over a range of $\tau$ values, we document substantial cross-sectional variation in return- and liquidity-timing abilities, whereas volatility-timing and abnormal-return heterogeneity are limited. The results indicate that slope heterogeneity among fund managers is asymmetric.

Our work is related to the literature on heterogeneous effects and their
distributions, including sorted effects, random functions, structural
estimation, and quantile effects; see, among others,
Matzkin (\citeyear{matzkin2003nonparametric}),
Heckman and Vytlacil (\citeyear{heckman2005structural}),
Graham and Powell (\citeyear{graham2012identification}),
Arellano and Bonhomme (\citeyear{arellano2012identifying}),
Chernozhukov, Fernández-Val, and Melly
(\citeyear{chernozhukov2013inference}), Chernozhukov, Fernández-Val, and Luo (\citeyear{chernozhukov2018sorted}), and 
Fernández-Val et al. (\citeyear{fernandez2022dynamic}). Unlike these studies, which focus
on heterogeneity in outcome distributions, treatment effects,
identification, or coefficient functionals, we study quantiles of
heterogeneous structural parameters that are first estimated from panel data. This leads to a distinct two-step problem with joint asymptotics in
both the cross-sectional and time-series dimensions.

Our work is also related to the heterogeneous panel-data literature with
unit-specific coefficients, including random-coefficient, group-level
estimation; see Swamy (\citeyear{swamy1970efficient}), Pesaran, Shin, and
Smith (\citeyear{pesaran1999pooled}), Pesaran
(\citeyear{pesaran2006estimation}), Pesaran and Yamagata
(\citeyear{pesaran2008testing}), Chetverikov, Larsen, and Palmer (\citeyear{chetverikov2016iv}), Liao and Yang
(\citeyear{liao2017uniform}), Li, Cui, and Lu
(\citeyear{li2020efficient}), and Melly and Pons (\citeyear{melly2025minimum}). While that literature typically targets a
common parameter, an average or long-run effect, or tests for slope
heterogeneity, our object is the \(\tau\)-quantile of the cross-sectional
distribution of the heterogeneous coefficients. Thus, although our first
step estimates unit-specific coefficients, the second step is used to
study the cross-sectional pattern of the heterogeneous coefficients, rather
than a pooled or average effect.

The remainder of the paper is structured as follows. Section \ref{sec:estimation}
outlines the model and estimation procedures. Section \ref{sec: Asymptotic Theory}
establishes the asymptotic results under a set of high-level conditions. Section \ref{sec: Bootstrap}
presents the bootstrap-based inference procedures. Section \ref{sec:An-Application}
presents an application to the least squares estimator as the first-step estimator. Section \ref{sec: Simulation}
examines the finite-sample performance of the proposed method. Section
\ref{sec: Empirical Studies} reports an empirical application. Section
\ref{sec: Conclusion} concludes.  Mathematical derivations are provided in the appendix, with supporting lemmas collected in an online Supplemental Appendix.


\section{Model and Estimation}\label{sec:estimation}

\subsection{Model}

We consider a linear panel-data model with heterogeneous coefficients:
\begin{equation}
y_{it}=\alpha_{i0}+\bm{z}_{it}^{\top}\bm{\beta}_{i0}+\varepsilon_{it}\equiv\bm{Z}_{it}^{\top}\bm{\theta}_{i0}+\varepsilon_{it},\qquad t=1,\ldots,T,\;i=1,\ldots,N,\label{eq: dgp}
\end{equation}
where $y_{it}$ denotes the outcome variable, $\alpha_{i0}$ is the
$i$-th individual specific fixed effect, $\bm{\beta}_{i0}$ is the $i$-th
individual specific slope coefficient, $\bm{Z}_{it}=(1,\bm{z}_{it}^{\top})^{\top}$
is a $K$-dimensional regressor vector, and $\varepsilon_{it}$ is
an unobserved innovation term independent of $\bm{Z}_{it}$. We consider the following assumption throughout.

\begin{assumption}\label{as: iid}
 $\varepsilon_{it}$ is i.i.d. across both $i$ and $t$.  The regressors $\bm z_{it}$ are either i.i.d. across both $i$ and $t$, or common across units, with
$\bm z_{it}=\bm z_t$ and $\bm z_t$ i.i.d. over $t$. Moreover, the sequence
$\{(\bm{z}_{it}^{\top},\varepsilon_{it})\}$ is independent of
$\{\bm{\theta}_{i0}\}$.
\end{assumption}

Assumption \ref{as: iid} allows the regressors either to vary i.i.d. across both cross-section, \(i\), and time-series, \(t\), or to be common across units. In either case, all cross-sectional heterogeneity is captured by the unit-specific parameters \(\{\bm\theta_{i0}\}_{i=1}^N\).

The parameter vector $\bm{\theta}_{i0}\in\Theta\subseteq\mathbb{R}^{K}$
collects the unit-specific intercept and slope coefficients. Throughout,
the parameter space $\Theta$ is assumed to be compact. We denote
the $p$-th coordinate of $\bm{\theta}_{i0}$, for $p\in\{1,\ldots,K\}$,
by $\theta_{i0,p}$. Thus, $\theta_{i0,p}$ may represent the intercept
$\alpha_{i0}$ or any component of the slope vector $\bm{\beta}_{i0}$.
The empirical distribution function of $\{\theta_{i0,p}\}_{i=1}^{N}$
is defined as: 
$F_{N}(x,p)=\frac{1}{N}\sum_{i=1}^{N}\mathbf{1}\{\theta_{i0,p}\leq x\}$.

For $\tau\in(0,1)$, the main parameter of interest is the $\tau$-th
quantile of the cross-sectional distribution $\{\theta_{i0,p}\}_{i=1}^{N}$.
The sequence $\{\bm{\theta}_{i0}\}_{i=1}^{N}$ may be treated as either
deterministic or stochastic, depending on the specification adopted.
In this context, two scenarios arise depending on the researcher's
objective and data structure:

\vspace{0.2cm}

\noindent \textbf{(i) Stochastic $\{\bm\theta_{i0}\}$:} This setting
applies when a random sample of $N$ units is drawn from a large population,
and the target is the $\tau$-quantile of the population distribution of $\theta_{i0,p}$.
If $\{\theta_{i0,p}\}_{i=1}^{N}$ are i.i.d. draws, the empirical
distribution function $F_{N}(x,p)$ converges to the true population
CDF as $N\to\infty$: 
$\lim_{N\to\infty}F_{N}(x,p)=P(\theta_{i0,p}\leq x)$, 
where $P(\theta_{i0,p}\leq x)$ is the true population CDF. The $\tau$-th
quantile is then: 
\begin{equation}
\theta_{\tau,p}^{{\rm {S}}}=\inf\{x:P(\theta_{i0,p}\leq x)\geq\tau\},\label{eq:thetaS}
\end{equation}
where the superscript ${\rm {S}}$ indicates the stochastic nature
of the quantile.

\noindent \textbf{(ii) Deterministic $\{\bm\theta_{i0}\}$:} This case
occurs when the full population is observed (e.g., all countries'
oil reserves) or when the researcher is interested in the limiting
distribution of an observed sample of size $N$ (e.g., students' IQs
in a college of size $N$). When $\{\theta_{i0,p}\}_{i=1}^{N}$ are
deterministic, the empirical distribution function converges to a limiting
distribution $F_\mathrm{D}(x,p)$ as $N\to\infty$, which may differ from the
population distribution: 
$\lim_{N\to\infty}F_{N}(x,p)=F_\mathrm{D}(x,p)$. 
The quantile is then given by: 
\begin{equation}
\theta_{\tau,p}^{{\rm {D}}}=\inf\{x:F_\mathrm{D}(x,p)\geq\tau\},\label{eq:thetaD}
\end{equation}
where the superscript ${\rm {D}}$ denotes the deterministic nature
of the quantile.\footnote{{Both stochastic and deterministic treatments of heterogeneous coefficients appear in the literature, although the specification is sometimes left implicit. On the one hand, a stochastic interpretation is adopted in the random-coefficient and heterogeneous-panel literature; see, e.g., Pesaran, Shin, and Smith (\citeyear{pesaran1999pooled}), Hsiao and Pesaran (\citeyear{hsiao2008random}), and Li, Cui, and Lu (\citeyear{li2020efficient}). On the other hand, a deterministic interpretation is common in panel-data settings where the heterogeneous parameters are treated as fixed unknown quantities attached to the observed units. This perspective underlies the individual fixed effects in standard panel linear regression or panel QR, and also appears in models with individual-specific slope effects; see, e.g., Polachek and Kim (\citeyear{polachek1994panel}). See also Su, Shi, and Phillips (\citeyear{su2016identifying}) for a useful discussion of the contrast between homogeneous, random-coefficient, and grouped heterogeneous coefficient models.}}

 \begin{remark} In the deterministic setting, the limiting
distribution $F_\mathrm{D}(x,p)$ is determined by the observed data, and the
quantile $\theta_{\tau,p}^{{\rm {D}}}$ reflects information within this fixed set of observations. It is important to note that $F_\mathrm{D}(x,p)$
may differ from the population distribution $P(\theta_{i0,p}\leq x)$,
as it depends on the sample. For example, if students' IQs are randomly
distributed nationwide, but in a high-quality college of size $N$,
the IQ distribution is fixed and influenced by specific criteria (e.g.,
high admissions standards), the empirical distribution within the
college will converge to a limiting distribution $F_\mathrm{D}(x,p)$ when $N$
is large, which differs from the national distribution $P(\theta_{i0,p}\leq x)$.
\end{remark}

The current formulation defines $\theta_{\tau,p}$ as an \emph{unconditional} quantile across individual specific slope parameters. A natural extension considers \emph{conditional} quantiles that depend on observed cross-sectional characteristics $\bm{\omega}_{i}$: \begin{equation} Q_{\tau}(\theta_{i0,p}\mid\bm{\omega}_{i})=\eta(\tau)+\bm{\omega}_{i}^{\top}\bm{\lambda}(\tau),\label{eq: second step} \end{equation} where $\eta(\tau)$ and $\bm{\lambda}(\tau)$ denote the $\tau$-th specific intercept and slope coefficients, respectively. It is important to note that this specification represents a cross-sectional relationship rather than a panel data model, since the time series observations are used only to estimate $\theta_{i0}$ for each unit $i$. The unconditional quantile model discussed earlier is a special case obtained by setting $\bm{\omega}_{i}=\bm{0}$, in which case $\eta(\tau)=\theta_{\tau,p}$.

To clarify the potential empirical scope of the proposed framework, we present simple examples that contrast our approach with conventional methods, highlighting the interpretational gains from targeting quantiles of structural effects rather than conditional outcome quantiles.

\paragraph{Example 1: Quantile of heterogeneous slopes.}

Suppose that student \(i\)'s GPA, \(y_{it}\), depends on class size ${Z}_{it}$
via 
$y_{it}=\alpha_{i0}+{Z}_{it}\beta_{i0}+\varepsilon_{it}$. 
Here, $\beta_{i0}$ reflects the extent of student $i$'s sensitivity to class size. We then rank the coefficients $\{\beta_{i0}\}_{i}$ according to their magnitudes, from the least to the most sensitive. Some students may be less affected ($\tau=0.1$), while others more dependent ($\tau=0.9$). Our target is $\beta_{\tau}$, the $\tau$-quantile of these sensitivities.

We estimate $\beta_{\tau}$ in two steps: 
\begin{equation*}
(\widehat{\alpha}_{i},\widehat{\beta}_{i})=\arg\min_{\alpha_{i},\beta_{i}}\frac{1}{T}\sum_{t=1}^{T}(y_{it}-\alpha_{i}-{Z}_{it}\beta_{i})^{2},\quad\text{and}\quad\widehat{\beta}_{\tau}=\arg\min_{\beta}\frac{1}{N}\sum_{i=1}^{N}\rho_{\tau}(\widehat{\beta}_{i}-\beta)
,\end{equation*}
where $\rho_\tau(\cdot)$ is the standard check function, $\rho_{\tau}(u)=u(\tau-\mathbf{1}\{u\leq0\})$.

By contrast, the standard fixed-effect quantile regression (FE-QR) imposes 
$Q_{\tau}(y_{it}\mid{Z}_{it})=\alpha_{i0}(\tau)+{Z}_{it}\beta(\tau)$, 
so that $\beta(\tau)$ captures the effect for the $\tau$-quantile
\emph{conditional GPA} student (i.e., $\tau$-quantile $y_{it}$ conditional
on $Z_{it}$), not the $\tau$-quantile \emph{class-size sensitive}
student (i.e., $\tau$-quantile $\beta_{i}$). Equivalently, in conventional FE-QR, \(\tau\) indexes heterogeneity in the conditional distribution of the outcome \(y_{it}\), while the slope parameter is typically common across individuals at each fixed \(\tau\). In our framework, \(\tau\) instead indexes heterogeneity in the cross-sectional distribution of the coefficients \(\beta_{i0}\) themselves. 

\paragraph{Example 2: Quantile of average wages.}

Suppose wages follow 
$Y_{it}=\theta_{i0}+\varepsilon_{it}$, 
with $\theta_{i0}$ the long-run average wage of individual $i$.
Apply the estimator $\widehat{\theta}_{Ti}=\frac{1}{T}\sum_{t=1}^{T}Y_{it}$.
Our object of interest is the $\tau$-quantile of wage levels across individuals, based on the collection of individual-specific long-run wage estimates:
$\widehat{\theta}_{\tau}
=
\arg\min_{\theta}\frac{1}{N}\sum_{i=1}^{N}\rho_{\tau}(\widehat{\theta}_{Ti}-\theta)$.

By contrast, pooled QR estimates
$\widetilde{\theta}_{\tau}
=
\arg\min_{\theta}\frac{1}{NT}\sum_{i=1}^{N}\sum_{t=1}^{T}\rho_{\tau}(Y_{it}-\theta)$,
which corresponds to the overall $\tau$-quantile computed from all observations \(\{Y_{it}\}_{i,t}\), and therefore ignores cross-sectional heterogeneity in long-run wage levels.

Likewise, individual-specific or time-specific QR captures quantiles within a given unit or within a given period, but does not recover the cross-sectional quantile of long-run averages. For example, individual-specific QR yields the $\tau$-quantile wage over time for each individual \(i\),
$
(\widetilde{\theta}_1,\ldots,\widetilde{\theta}_N)
=
\arg\min_{(\theta_1,\ldots,\theta_N)}
\frac{1}{NT}\sum_{i=1}^{N}\sum_{t=1}^{T}\rho_{\tau}(Y_{it}-\theta_i)$, 
while time-specific QR yields the $\tau$-quantile wage across individuals at each time period \(t\),
$(\widetilde{\theta}_1,\ldots,\widetilde{\theta}_T)
=
\arg\min_{(\theta_1,\ldots,\theta_T)}
\frac{1}{NT}\sum_{i=1}^{N}\sum_{t=1}^{T}\rho_{\tau}(Y_{it}-\theta_t)$. 
These objects characterize within-individual or within-period outcome heterogeneity, rather than the cross-sectional heterogeneity of individual-specific long-run wage levels.

\subsection{Estimation}

\label{sec: estimation}

We now describe how to estimate the parameters of interest, $\theta_{\tau,p}^{{\rm {S}}}$
and $\theta_{\tau,p}^{{\rm {D}}}$, defined in equations \eqref{eq:thetaS}
and \eqref{eq:thetaD}, respectively. Define $\bm{X}_{it}\equiv\bigl(y_{it},\,\bm{Z}_{it}^{\top}\bigr)^{\top}$.
Since $y_{it}$ depends on the unknown parameter $\bm{\theta}_{i0}$,
the vector $\bm{X}_{it}$ also inherits this dependence. For clarity,
we may explicitly write $y_{it}(\bm{\theta}_{i0})$ and $\bm{X}_{it}(\bm{\theta}_{i0})$
when needed. We propose the following two-step estimation procedure for estimation of the
distributional quantile $\theta_{\tau,p}$:

\begin{description}
  \item[Algorithm 1.] \textbf{Two-step Estimation Procedure} \label{Algorithm:Two-step Estimation Procedure}
\end{description}

\paragraph{Step 1 (Individual estimation).}

For each individual $i$, obtain an estimator of $\theta_{i0}$: $\widehat{\bm{\theta}}_{Ti}=\widehat{\bm{\theta}}_{Ti}\left(\left\{ \bm{X}_{it}\left(\bm{\theta}_{i0}\right)\right\} _{t}\right)$.

\paragraph{Step 2 (Quantile aggregation).}

Given the collection $\{\widehat{\bm{\theta}}_{Ti}\}_{i=1}^{N}$,
estimate the $\tau$-quantile of them by 
$\widehat{\theta}_{\tau,p}=\arg\min_{\theta}\;\frac{1}{N}\sum_{i=1}^{N}\rho_{\tau}\!\left(\widehat{\theta}_{Ti,p}-\theta\right)$.

This two-step procedure first estimates unit-level parameters and then
summarizes their cross-sectional heterogeneity through quantiles. We do not
restrict the first-step estimator \(\widehat{\bm{\theta}}_{Ti}\), but instead
impose high-level conditions on its properties. Section~\ref{sec:An-Application} further 
provides an example that satisfies these assumptions.

Step~2 differs from the aggregation step in conventional panel QR, which often averages individual estimates or uses a
minimum-distance criterion to recover a common parameter or average effect;
see, e.g., Galvao and Wang (\citeyear{galvao2015efficient}). By contrast,
we allow \(\theta_{i0}\) to differ across individuals and treat this
heterogeneity as the object of interest. Hence, Step~2 uses sample
quantiles rather than averages, so that different values of \(\tau\)
describe different parts of the cross-sectional distribution of
\(\{\theta_{i0}\}_{i=1}^N\).\footnote{Our framework is also distinct from the unconditional QR of Firpo, Fortin, and Lemieux (\citeyear{firpo2009unconditional}). Their method studies how covariates affect quantiles of the unconditional distribution of the outcome via recentered influence function regressions. By contrast, our object of interest is the $\tau$-quantile of the cross-sectional distribution of latent heterogeneous coefficients $\{\theta_{i0}\}_{i=1}^N$. Thus, here $\tau$ indexes heterogeneity in structural parameters across individuals, rather than heterogeneity in the unconditional distribution of observed outcomes.}

Replacing the Step~2 quantile by a sample mean targets a different object:
the cross-sectional average \(E(\theta_{i0})\) in the stochastic case, or
its empirical analogue in the deterministic case. This remains distinct
from the minimum-distance estimator of Galvao and Wang
(\citeyear{galvao2015efficient}), where \(\theta_{i0}=\theta_0\) for all
\(i\) and averaging recovers the common parameter. Here,
\(\theta_{i0}\) is heterogeneous, so averaging summarizes the mean of the
heterogeneous coefficients rather than a common structural value. We
discuss this alternative estimator in Section~\ref{sec:rates}.

The same estimator \(\widehat{\theta}_{\tau,p}\) is used for both
\(\theta_{\tau,p}^{\rm S}\) and \(\theta_{\tau,p}^{\rm D}\), but the targets
and rates differ. It approximates the stochastic-population quantile
\(\theta_{\tau,p}^{\rm S}\) at rate \(\sqrt N\), reflecting cross-sectional
sampling uncertainty, and the deterministic-population quantile
\(\theta_{\tau,p}^{\rm D}\) at the faster rate \(\sqrt{N\sqrt T}\), since
only first-step estimation error remains.

For example, in the mutual fund application, if the observed funds are a
random sample from a broader population, the target is the population
\(\tau\)-quantile of managerial skill, \(\theta_{\tau,p}^{\rm S}\). If
instead the observed funds are treated as the fixed population of interest, e.g., the  mutual funds actively managed by Wall Street institutions, 
the target is the empirical cross-sectional quantile,
\(\theta_{\tau,p}^{\rm D}\). Thus, although the estimator has the same form,
it converges to different parameters at different rates because the source
of uncertainty differs.

For notational simplicity, we first consider the scalar case \(K=1\) and
suppress the subscript \(p\). The multivariate case follows elementwise for \(p=1,\ldots,K\).

\section{Asymptotic Theory}\label{sec: Asymptotic Theory} 

We now establish the asymptotic properties of the two-step estimator considering two settings separately: $(i)$ stochastic $\{{\theta}_{i0}\}_{i}$, and $(ii)$ deterministic $\{{\theta}_{i0}\}_{i}$.

\subsection{Stochastic $\left\{ \mathbf{\theta}_{i0}\right\}_{i}$}

\label{sec: random}

In the stochastic setting, $\bm{X}_{it}$, and consequently $\widehat{{\theta}}_{\tau}$,
involve two layers of randomness: the first due to the randomness
of ${\theta}_{i0}$ itself, and the second due to sampling $\bm{X}_{it}({\theta}_{i0})$
conditional on ${\theta}_{i0}$. We introduce conditions that ensure
the consistency and asymptotic normality.

\begin{assumption} \label{as: thetai iid} (i) ${\theta}_{i0}$ is
i.i.d. over $i$ with a distribution $F_\mathrm{S}$; (ii) ${\theta}_{\tau}^{{\rm {S}}}$
is the unique minimizer of $E\left[\rho_{\tau}\left({\theta}_{i0}-{\theta}\right)\right]$
over the compact set $\Theta$; (iii) $\theta_\tau^\mathrm{S}$ is an interior point of $\Theta$. \end{assumption}

\begin{assumption} \label{as: continuously differentiable} (i) $\theta_{i0}$
has a continuous density $f_\mathrm{S}$ over $\Theta$;
(ii) $f_\mathrm{S}\left(\theta_{\tau}^\mathrm{S}\right)\in(0,\infty)$ and $f_\mathrm{S}\left(\theta\right)$ is continuously differentiable and bounded on a neighborhood of  $\theta_{\tau}^\mathrm{S}$. 
\end{assumption}

Assumption \ref{as: thetai iid} requires  $\theta_{i0}$ to be stochastic, and imposes an identification condition for the parameter of interest
$\theta_{\tau}$. Assumption $\ref{as: continuously differentiable}$ places continuity
and smoothness conditions on the density of $\theta_{i0}$ that appear
in the QR literature. Denote the CDF and PDF of the standard Gaussian distribution by
$\Phi$ and $\phi$, respectively. Define the asymptotic variance function ${\sigma}\left({\theta}_{i0}\right)^{2}=\lim_{T\to\infty}Var\left(\sqrt{T}\widehat{{\theta}}_{Ti}\vert{\theta}_{i0}\right)$, where the function ${\sigma}\left({\theta}\right)^{2}$ is identical over $i$ under Assumption \ref{as: iid}. We now consider the high-level conditions for the first step estimation.

\begin{assumption}[High-level conditions for the first-step
estimation, stochastic] \label{as: high level random-1} Let $W_{T}\left({\theta}_{i0}\right)={\sigma}\left({\theta}_{i0}\right)^{-1}\sqrt{T}\left(\widehat{{\theta}}_{Ti}-{\theta}_{i0}\right)$.
There exist 
\begin{align*}
R_{T,1}\left(x,{\theta}_{i0}\right)&=P\left(W_{T}\left({\theta}_{i0}\right)\le x\vert{\theta}_{i0}\right)-\Phi\left(x\right),\\
R_{T,2}\left(x,{\theta}_{i0}\right)&=P\left(W_{T}\left({\theta}_{i0}\right)\le x\vert{\theta}_{i0}\right)-\left[\Phi\left(x\right)+T^{-1/2}p_{1, \theta_{i0}}\left(x\right)\phi\left(x\right)+T^{-1}p_{2, \theta_{i0}}\left(x\right)\phi\left(x\right)\right].
\end{align*}
Here, \(p_{1,\theta_{i0}}\) and \(p_{2,\theta_{i0}}\) are finite-order
Edgeworth correction polynomials, 
\(
p_{1,\theta_{i0}}(x)=b_1(x)'\kappa_{3,\theta_{i0}},
\ 
p_{2,\theta_{i0}}(x)
=
b_2(x)'\kappa_{4,\theta_{i0}}
+
b_3(x)'\kappa_{3,\theta_{i0}}\otimes \kappa_{3,\theta_{i0}},
\)
where \(b_1(x)\), \(b_2(x)\), and \(b_3(x)\) are known finite vectors of
polynomials, and
\(\kappa_{3,\theta_{i0}}\) and \(\kappa_{4,\theta_{i0}}\) collect the relevant cumulant terms that are continuously differentiable in a neighborhood of $\theta_\tau$. Moreover,  
\begin{enumerate}[(i)]
\item (Berry-Esseen bound) $E_{\theta_{i0}}\left[\sup_{x\in\mathbb{R}}\left|R_{T,1}\left(x,{\theta_{i0}}\right)\right|\right]=o\left(1\right)$; 
\item (Bounded variance) $0<\inf_{{\theta_{i0}}\in\Theta}{\sigma}\left({\theta_{i0}}\right)^{2}\le\sup_{{\theta_{i0}}\in\Theta}{\sigma}\left({\theta_{i0}}\right)^{2}<\infty$; 
\item (Differentiable variance) ${\sigma}\left({\theta_{i0}}\right)^{2}$
is continuously differentiable over $\Theta$; 
\item (Edgeworth expansion) $E\left\|\kappa_{3, \theta_{i0}}\right\|<\infty$, $E\left\|\kappa_{4, \theta_{i0}}\right\|<\infty$,  $E_{\theta_{i0}}\left[\sup_{x\in\mathbb{R}}\left|R_{T,2}\left(x,{\theta_{i0}}\right)\right|\right]=o\left(T^{-1}\right)$.
\end{enumerate}
\end{assumption} Assumption \ref{as: high level random-1}(i) ensures
a first-order Gaussian approximation for $W_{Ti}$ in expectation. Assumption
$\ref{as: high level random-1}$(ii) imposes eigenvalue bounds on
${\sigma}_{i}$, ruling out degenerate scaling. Assumption $\ref{as: high level random-1}$(iii)
requires that the asymptotic variance is differentiable. Assumption $\ref{as: high level random-1}$(iv)
strengthens (i) by requiring a two-term Edgeworth expansion with
a $o\left(T^{-1}\right)$ remainder; hence it is a strictly stronger,
second-order refinement of the normal approximation.

\begin{theorem} \label{thm: theta_tau consistent} Under Assumptions \ref{as: iid}-\ref{as: continuously differentiable}, and \ref{as: high level random-1}(i)-(ii), as $N,T\rightarrow\infty$, 
$\widehat{\theta}_{\tau}\xrightarrow{P}\theta_{\tau}^{{\rm {S}}}$.
\end{theorem}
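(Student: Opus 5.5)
The plan is to show that the empirical CDF of the first-step estimates, $\widehat F_N(x)=N^{-1}\sum_{i=1}^N\mathbf 1\{\widehat\theta_{Ti}\le x\}$, converges pointwise in probability to $F_\mathrm{S}(x)$ at every $x$. The quantile $\widehat\theta_\tau$ is a generalized inverse of $\widehat F_N$, and $F_\mathrm{S}$ is strictly increasing near $\theta_\tau^{\rm S}$ by Assumption \ref{as: continuously differentiable}(ii). So for any $\epsilon>0$ the event $|\widehat\theta_\tau-\theta_\tau^{\rm S}|>\epsilon$ is contained in
\[
\bigl\{\widehat F_N(\theta_\tau^{\rm S}-\epsilon)\ge\tau\bigr\}\cup\bigl\{\widehat F_N(\theta_\tau^{\rm S}+\epsilon)<\tau\bigr\}.
\]
Both events have vanishing probability, because $F_\mathrm{S}(\theta_\tau^{\rm S}-\epsilon)<\tau<F_\mathrm{S}(\theta_\tau^{\rm S}+\epsilon)$. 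Here uniqueness and interiority (Assumption \ref{as: thetai iid}(ii)--(iii)) guarantee that these strict inequalities hold and that the neighbourhood lies inside $\Theta$. Hence it suffices to prove $\widehat F_N(x)\xrightarrow{P}F_\mathrm{S}(x)$ for fixed $x$.

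\textbf{Mean.} First compute $E\widehat F_N(x)=P(\widehat\theta_{Ti}\le x)$. Conditioning on $\theta_{i0}$ and writing $\widehat\theta_{Ti}\le x$ as $W_T(\theta_{i0})\le \sqrt T(x-\theta_{i0})/\sigma(\theta_{i0})$, we get
\[
P(\widehat\theta_{Ti}\le x)=E\Bigl[\Phi\Bigl(\tfrac{\sqrt T(x-\theta_{i0})}{\sigma(\theta_{i0})}\Bigr)\Bigr]+E\Bigl[R_{T,1}\Bigl(\tfrac{\sqrt T(x-\theta_{i0})}{\sigma(\theta_{i0})},\theta_{i0}\Bigr)\Bigr].
\]
The second term is bounded by $E_{\theta_{i0}}\sup_x|R_{T,1}(x,\theta_{i0})|=o(1)$ by Assumption \ref{as: high level random-1}(i). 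In the first term, Assumption \ref{as: high level random-1}(ii) bounds $\sigma$ away from $0$ and $\infty$. So the integrand converges to $\mathbf 1\{\theta_{i0}<x\}$ whenever $\theta_{i0}\neq x$, and the event $\{\theta_{i0}=x\}$ has probability zero because $\theta_{i0}$ has a density. Dominated convergence then gives $E\widehat F_N(x)\to F_\mathrm{S}(x)$.

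\textbf{Variance.} By Assumption \ref{as: iid} and the i.i.d.\ structure of $\theta_{i0}$, the estimates $\widehat\theta_{Ti}$ are i.i.d.\ across $i$ when the regressors vary over $i$. Then $\mathrm{Var}(\widehat F_N(x))\le 1/(4N)\to0$, and Chebyshev's inequality finishes the argument.

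\textbf{Main obstacle.} The difficult step is the common-regressor case $\bm z_{it}=\bm z_t$, where the $\widehat\theta_{Ti}$ are dependent across $i$ through $\{\bm z_t\}$. There I would condition on $\mathcal Z=\{\bm z_t\}_t$. Given $\mathcal Z$, the pairs $(\theta_{i0},\{\varepsilon_{it}\}_t)$ are independent over $i$, so the indicators are conditionally independent and
\[
\mathrm{Var}\bigl(\widehat F_N(x)\mid\mathcal Z\bigr)\le 1/(4N).
\]
It then suffices to show that $E[\widehat F_N(x)\mid\mathcal Z]=P(\widehat\theta_{Ti}\le x\mid\mathcal Z)\to F_\mathrm{S}(x)$ in probability. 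For this, the bound $|P(\widehat\theta_{Ti}\le x\mid\mathcal Z)-P(\widehat\theta_{Ti}\le x)|$ can be controlled via the variance decomposition: the identical conditional probabilities across $i$ make the between-term $\mathrm{Var}(P(\widehat\theta_{Ti}\le x\mid\mathcal Z))$ equal to the covariance of two units' indicators. That covariance tends to zero because $\widehat\theta_{Ti}-\theta_{i0}=O_P(T^{-1/2})$ (from (i)--(ii)) makes each indicator asymptotically equal to $\mathbf 1\{\theta_{i0}\le x\}$. These limiting indicators are independent across $i$, since $\theta_{i0}$ is independent of $\mathcal Z$ and i.i.d. The mean computation above then completes the proof.
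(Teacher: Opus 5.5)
Your argument is correct, but it takes a different route from the paper. The paper uses the extremum-estimator template. It applies a (conditional) uniform LLN to $N^{-1}\sum_{i}\rho_\tau(\widehat\theta_{Ti}-\theta)$. A Portmanteau and Gaussian-integral/change-of-variables computation then gives $E[\rho_\tau(\widehat\theta_{Ti}-\theta)]\to E[\rho_\tau(\theta_{i0}-\theta)]$ for each $\theta$. This is upgraded to uniform convergence on the compact $\Theta$ via the Lipschitz property of $\rho_\tau$. Finally, Theorem~2.1 of Newey and McFadden is applied with the identification condition in Assumption~\ref{as: thetai iid}(ii).

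You instead use the fact that the second step is a scalar sample quantile. Pointwise convergence in probability of the empirical CDF of the $\widehat\theta_{Ti}$ at the two points $\theta_\tau^{\mathrm S}\pm\epsilon$, together with $F_\mathrm{S}(\theta_\tau^{\mathrm S}-\epsilon)<\tau<F_\mathrm{S}(\theta_\tau^{\mathrm S}+\epsilon)$, already suffices. This removes the need for uniform convergence and the argmin theorem. Your mean step also uses Assumption~\ref{as: high level random-1}(i) exactly in its stated in-expectation form: you bound $E[R_{T,1}]$ by $E\sup_x|R_{T,1}|$ and apply dominated convergence, rather than passing through a conditional weak-convergence statement.

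Your common-regressor case is more explicit than the paper's. The paper centers at the unconditional expectation right after invoking a conditional ULLN, leaving implicit why the $\mathcal Z$-conditional mean has the same deterministic limit. Your identity $\mathrm{Var}\bigl(P(\widehat\theta_{Ti}\le x\mid\mathcal Z)\bigr)=\mathrm{Cov}\bigl(\mathbf{1}\{\widehat\theta_{T1}\le x\},\mathbf{1}\{\widehat\theta_{T2}\le x\}\bigr)$, combined with the $L^2$ approximation of each indicator by $\mathbf{1}\{\theta_{i0}\le x\}$, supplies exactly that step. You could even skip the conditioning and use $\mathrm{Var}(\widehat F_N(x))=N^{-1}\mathrm{Var}(\mathbf{1}\{\widehat\theta_{T1}\le x\})+(1-N^{-1})\mathrm{Cov}(\cdot,\cdot)\to0$ directly.

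What the paper's route buys is portability: it is the argument that extends to the conditional second step \eqref{eq: second step}, where no CDF-inversion argument is available.

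One cosmetic point: if $\widehat\theta_\tau$ is an arbitrary minimizer of the check function rather than the left-continuous inverse, your second event should read $\{\widehat F_N(\theta_\tau^{\mathrm S}+\epsilon)\le\tau\}$. This is harmless, because $F_\mathrm{S}(\theta_\tau^{\mathrm S}+\epsilon)>\tau$ holds strictly.
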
 Theorem \ref{thm: theta_tau consistent} establishes
the consistency of $\widehat{\theta}_{\tau}$. Note that no restriction
is imposed on the sample size growth ratio of $N$ and $T$. Let $f'_\mathrm{S}$ and $\sigma'$ denote the derivative of the density function and the asymptotic variance function, respectively.


\begin{theorem} \label{thm: theta_tau normality} Under Assumptions \ref{as: iid}-\ref{as: high level random-1}, 
as $N,T\rightarrow\infty$ with $\sqrt{N}/T=O(1)$, 
\begin{equation*}
\sqrt{N}\left(\widehat{\theta}_{\tau}-\theta_{\tau}^{{\rm {S}}}\right)\xrightarrow{d}\mathcal{N}\left(B_\mathrm{S},f_\mathrm{S}\left(\theta_{\tau}\right)^{-2}\tau\left(1-\tau\right)\right),
\end{equation*}
where 
$B_\mathrm{S}  =\lim_{N,T\to\infty}-\frac{\sqrt{N}}{T}\sigma\left(\theta_{\tau}\right)\left(\frac{f_\mathrm{S}\left(\theta_{\tau}\right)^{-1}f'_\mathrm{S}(\theta_{\tau})}{2}\sigma\left(\theta_{\tau}\right)+\sigma'\left(\theta_{\tau}\right)+\int p_{1,\theta_\tau}(x)\phi(x)dx\right)$.\footnote{For the standard first-order Edgeworth correction
\(p_{1,\theta_\tau}(x)=(1-x^2)\kappa_{\theta_\tau}/6\), we have
\(\int p_{1,\theta_\tau}(x)\phi(x)\,dx=0\). Hence, in this common case, the bias term can be further simplified.}
\end{theorem}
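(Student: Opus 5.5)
The estimator $\widehat{\theta}_{\tau}$ is the sample $\tau$-quantile of $\{\widehat{\theta}_{Ti}\}_{i=1}^N$, which are i.i.d. across $i$ under Assumptions \ref{as: iid} and \ref{as: thetai iid}. I would therefore reduce the problem to a Bahadur-type representation for the sample quantile of an i.i.d. sample whose distribution depends on $T$. Define $G_T(x)=P(\widehat{\theta}_{Ti}\le x)$ and let $\theta_{\tau,T}$ solve $G_T(\theta_{\tau,T})=\tau$.

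I would split the error as
\[
\sqrt N(\widehat\theta_\tau-\theta^{\rm S}_\tau)=\sqrt N(\widehat\theta_\tau-\theta_{\tau,T})+\sqrt N(\theta_{\tau,T}-\theta^{\rm S}_\tau).
\]
The first term is a stochastic part and the second a deterministic bias.

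\textbf{Stochastic part.} I would use the convexity argument of Knight/Pollard, or directly the standard inversion argument for quantiles. This gives
\[
\sqrt N(\widehat\theta_\tau-\theta_{\tau,T})=-g_T(\theta_{\tau,T})^{-1}N^{-1/2}\sum_i\big(\mathbf 1\{\widehat\theta_{Ti}\le\theta_{\tau,T}\}-\tau\big)+o_P(1),
\]
where $g_T=G_T'$. The required ingredients are as follows.
\begin{enumerate}[(a)]
\item $G_T\to F_{\rm S}$ uniformly, and $g_T\to f_{\rm S}$ near $\theta_\tau$. This follows by writing
\[
G_T(x)=E\big[\Phi(\sqrt T(x-\theta_{i0})/\sigma(\theta_{i0}))\big]+E\big[R_{T,1}\big],
\]
so the smoothed density converges to $f_{\rm S}$ by Assumption \ref{as: continuously differentiable} and Assumption \ref{as: high level random-1}(ii).
\item A stochastic equicontinuity bound for the empirical process of indicators over shrinking neighborhoods. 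This is routine for i.i.d. data, since the class is VC.
\end{enumerate}
The Lindeberg CLT applied to the Bernoulli sum with variance $\tau(1-\tau)$ then yields the limiting variance $f_{\rm S}(\theta_\tau)^{-2}\tau(1-\tau)$. If differentiability of $G_T$ is delicate, I would instead work with $\sqrt N\big(G_T(\theta_{\tau,T}+h/\sqrt N)-\tau\big)\approx f_{\rm S}(\theta_\tau)h$. This only requires the local expansion derived in the bias step, with remainders that are uniform over $h$ in compacts.

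\textbf{Bias part.} This is the main obstacle. Using Assumption \ref{as: high level random-1}(iv), I would expand
\[
G_T(x)=E\Big[\Phi(u)+T^{-1/2}p_{1,\theta_{i0}}(u)\phi(u)+T^{-1}p_{2,\theta_{i0}}(u)\phi(u)\Big]+o(T^{-1}),\qquad u=\sqrt T(x-\theta_{i0})/\sigma(\theta_{i0}).
\]
I would then change variables $\theta_{i0}=x-\sigma s/\sqrt T$ (implicitly, since $\sigma$ depends on $\theta_{i0}$) and integrate against $f_{\rm S}$. A second-order Taylor expansion of $f_{\rm S}(\theta)/\text{Jacobian}$ and of $\sigma(\theta)$ around $x$ should give
\[
G_T(x)=F_{\rm S}(x)+T^{-1}c(x)+o(T^{-1}).
\]
Here $c(x)$ collects terms involving $\tfrac12(\sigma^2 f_{\rm S})'(x)$ plus the Edgeworth contribution $\sigma f_{\rm S}\int p_1\phi$. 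The symmetric first-order $\int s\phi(s)\,ds=0$ term vanishes, and the $T^{-1/2}p_1$ term contributes at order $T^{-1}$ only after one derivative, because it is integrated against $\phi$ with a smooth weight. Inverting via $\theta_{\tau,T}-\theta_\tau=-c(\theta_\tau)/(f_{\rm S}(\theta_\tau)T)+o(T^{-1})$ gives the bias
\[
\sqrt N\,c/(f_{\rm S}T)\to B_{\rm S},
\]
which is finite under $\sqrt N/T=O(1)$. The remaining $o(\sqrt N/T)$ terms vanish. Matching constants should reproduce
\[
-\sigma\Big(\tfrac{f_{\rm S}'}{2f_{\rm S}}\sigma+\sigma'+\int p_1\phi\Big).
\]
The delicate points are the tails: the region where $|s|$ is large must be shown to be negligible, using the compactness of $\Theta$, the bounded $\sigma$, and the Gaussian tails. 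The Taylor remainder must also be uniform, which uses continuous differentiability of $f_{\rm S}$, $\sigma^2$ and the cumulants near $\theta_\tau$. Finally, I would combine the two parts by Slutsky's lemma.
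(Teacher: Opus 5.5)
Your proposal is correct and follows essentially the paper's proof. Both use a Bernoulli CLT with variance $\tau(1-\tau)$, stochastic equicontinuity of the indicator class, a local slope $f_{\rm S}(\theta_\tau)$, and the bias $\tau-P(\widehat{\theta}_{Ti}\le\theta_{\tau})=-T^{-1}\bigl[\tfrac12(\sigma^2 f_{\rm S})'(\theta_\tau)+\sigma(\theta_\tau)f_{\rm S}(\theta_\tau)\int p_{1,\theta_\tau}\phi\bigr]+o(T^{-1})$, obtained from the Edgeworth expansion and a change of variables. The only difference is bookkeeping: you absorb the bias into the pseudo-quantile $\theta_{\tau,T}$ of $G_T$ and invert, while the paper centers at $\theta_\tau$ and divides $\sqrt N\,E[\widehat{h}_i(\theta_\tau)]$ by $Q=f_{\rm S}(\theta_\tau)$.

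Two small fixes are needed.

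First, take the local slope from your fallback using Assumption \ref{as: high level random-1}(iv). Its $o(T^{-1})$ remainder is $o(N^{-1/2})$ precisely because $\sqrt N/T=O(1)$. The $o(1)$ term $E[R_{T,1}]$ in your step (a) says nothing about $g_T$, so (a) cannot supply the slope.

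Second, when regressors are common across units ($\bm z_{it}=\bm z_t$), the $\widehat{\theta}_{Ti}$ are i.i.d.\ only conditionally on $\{\bm z_t\}$. The reduction to an i.i.d.\ sample quantile must therefore be made conditionally, as the paper does.
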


The asymptotic variance is $f_\mathrm{S}(\theta_{\tau})^{-2}\tau(1-\tau)$, which
is the same as in the infeasible benchmark case, where the latent parameters
$\{\theta_{i0}\}_{i}$ are directly observed. The key intuition is
that the first-step estimation error in $\widehat{\theta}_{Ti}$ does
not contribute to the asymptotic variance at the first order. Because
each $\theta_{i0}$ is estimated individually, this error is idiosyncratic
across $i$, and its contribution to the variance averages out sufficiently
fast (at the $T^{-1/2}$ scale) relative to the second-step
cross-sectional quantile estimation. Instead, the effect of first-step
estimation appears through the bias term $B_\mathrm{S}$, which captures
the discrepancy 
$P(\widehat{\theta}_{Ti}\le\theta_{\tau}^{{\rm S}})-P(\theta_{i0}\le\theta_{\tau}^{{\rm S}})$. 
Thus, estimating $\theta_{i0}$ in the first step changes the centering
of the second-step quantile estimator, but not its leading stochastic
fluctuation. As a result, the asymptotic variance coincides with that
in the oracle case based on directly observed $\{\theta_{i0}\}_{i}$.

Regarding the required growth condition on the sample size, in the stochastic case, the estimation error of $\widehat{\theta}_{Ti}$ averages out across the $N$ units because of cross-sectional independence, which permits a relatively milder restriction than in conventional FE-QR settings ($N/T=O(1)$). The condition $\sqrt{N}/T=O(1)$ ensures that the bias term $B_\mathrm{S}$ remains bounded. If, in addition, $\sqrt{N}/T=o(1)$, then the bias vanishes asymptotically, and 
\begin{equation*}
\sqrt{N}\bigl(\widehat{\theta}_{\tau}-\theta_{\tau}^{{\rm S}}\bigr)\xrightarrow{d}\mathcal{N}\!\left(0,f_\mathrm{S}(\theta_{\tau})^{-2}\tau(1-\tau)\right).
\end{equation*}
By contrast, if $\sqrt{N}/T\to c\in(0,\infty)$, then $B_\mathrm{S}$ is
asymptotically non-negligible and induces a centering shift, which
may invalidate standard inference if left unaccounted for. Nonetheless,
as shown below, the bootstrap is able to replicate both this bias
term and asymptotic variance. We will discuss the requirements on
the sample size and convergence rates further in Section \ref{sec:rates}
below.

\subsection{Deterministic $\left\{ \theta_{i0}\right\} _{i}$}

In the deterministic scenario, the only source of randomness comes from sampling $\bm{X}_{it}$ conditional on $\{\theta_{i0}\}_{i}$. We introduce
conditions that ensure the consistency of $\widehat{\theta}_{\tau}$.

\begin{assumption} \label{as: uniquely minimizer-1}(i) $\left\{ \theta_{i0}\right\} _{i}$
are deterministic over the compact set $\Theta$; (ii) $\theta_{\tau}^{{\rm {D}}}$ is
the unique minimizer of 
\(
S\left(\theta\right)=\lim_{N\rightarrow\infty}\frac{1}{N}\sum_{i=1}^{N}\rho_{\tau}\left(\theta_{i0}-\theta\right); 
\) (iii) $\theta_\tau^\mathrm{D}$ is an interior point of $\Theta$.
\end{assumption}

\begin{assumption}[High-level conditions for the first step estimation,
deterministic] \label{as: high level fixed-1} Conditional on $\{{\theta}_{i0}\}_{i}$,
\begin{enumerate}[(i)]
\item (Uniform consistency) $\sup_{i\le N}\left\vert \widehat{{\theta}}_{Ti}-{\theta}_{i0}\right\vert =o_{P}\left(1\right)$; 
\item (Bounded variance) $0<\inf_{i\le N}{\sigma}\left({\theta}_{i0}\right)^{2}\le\sup_{i\le N}{\sigma}\left({\theta}_{i0}\right)^{2}<\infty$; 
\item (Differentiable variance) ${\sigma}\left({\theta}_{i0}\right)^{2}$
is continuously differentiable over $\Theta$; 
\item (Edgeworth expansion)  $\sup_{i\le N}\left\|\kappa_{3, \theta_{i0}}\right\|<\infty$, $\sup_{i\le N}\left\|\kappa_{4, \theta_{i0}}\right\|<\infty$, $\sup_{i\le N}\sup_{x\in\mathbb{R}}\left|R_{T,2}\left(x,{\theta}_{i0}\right)\right|=o\left(T^{-1}\right)$.
\end{enumerate}
\end{assumption} Assumption \ref{as: uniquely minimizer-1} is an identification condition
of $\theta_{\tau}$.  Assumption \ref{as: high level fixed-1} is the deterministic-design counterpart of Assumption \ref{as: high level random-1}, with two main differences. First, for consistency of \(\widehat{\theta}_{\tau}\), we impose uniform consistency of the individual estimators rather than a Berry-Esseen type condition. Second, the required conditions are assumed to hold uniformly over \(i\), rather than in expectation.

\begin{theorem} \label{thm: theta_tau consistent fixed} Under Assumptions \ref{as: iid}, \ref{as: uniquely minimizer-1}, and 
\ref{as: high level fixed-1}(i),
as $N,T\rightarrow\infty$, 
$\widehat{\theta}_{\tau}\xrightarrow{P}\theta_{\tau}^{{\rm {D}}}$.
\end{theorem}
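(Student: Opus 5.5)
We treat $\widehat{\theta}_{\tau}$ as an M-estimator and use the standard argmin consistency argument for convex objectives. Define the sample objective
\[
\widehat S_N(\theta)=\frac{1}{N}\sum_{i=1}^{N}\rho_{\tau}\bigl(\widehat{\theta}_{Ti}-\theta\bigr),
\]
the infeasible oracle objective $S_N(\theta)=N^{-1}\sum_{i=1}^{N}\rho_{\tau}(\theta_{i0}-\theta)$, and its deterministic limit $S(\theta)$ from Assumption \ref{as: uniquely minimizer-1}(ii). All probability statements are conditional on $\{\theta_{i0}\}_i$, which is deterministic here. The only randomness comes from $\{\bm X_{it}\}$, which is independent of $\{\theta_{i0}\}$ by Assumption \ref{as: iid}.

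\textbf{Step 1: sample versus oracle objective.} The check function is Lipschitz with constant $\max(\tau,1-\tau)\le 1$, so for every $\theta$
\[
\bigl|\widehat S_N(\theta)-S_N(\theta)\bigr|\le \frac{1}{N}\sum_{i=1}^{N}\bigl|\widehat{\theta}_{Ti}-\theta_{i0}\bigr|\le \sup_{i\le N}\bigl|\widehat{\theta}_{Ti}-\theta_{i0}\bigr|=o_P(1)
\]
by Assumption \ref{as: high level fixed-1}(i). This bound is uniform in $\theta\in\Theta$. This is the only place where first-step estimation enters, and the uniform-in-$i$ condition makes it immediate. No rate restriction between $N$ and $T$ is needed, since the bound does not depend on $N$.

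\textbf{Step 2: oracle objective versus limit.} $S_N$ is deterministic, and by Assumption \ref{as: uniquely minimizer-1}(ii) it converges pointwise to $S$. Each $S_N$ is convex and $1$-Lipschitz on the compact set $\Theta$, so the family is equicontinuous. Hence pointwise convergence upgrades to uniform convergence on $\Theta$ (Arzel\`a--Ascoli, or the convexity lemma of Rockafellar / Pollard). The limit $S$ is then also convex and continuous. Combining with Step 1,
\[
\sup_{\theta\in\Theta}\bigl|\widehat S_N(\theta)-S(\theta)\bigr|=o_P(1).
\]
I expect this step to be the main (mild) obstacle: the assumption only states that the limit $S$ exists, so the uniformity has to be derived from the Lipschitz structure rather than from a law of large numbers.

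\textbf{Step 3: argmin argument.} Fix $\epsilon>0$. Since $\theta_{\tau}^{\rm D}$ is the unique minimizer of the continuous function $S$ on the compact $\Theta$ (Assumption \ref{as: uniquely minimizer-1}(ii)), we have
\[
\delta:=\inf_{|\theta-\theta_\tau^{\rm D}|\ge\epsilon,\ \theta\in\Theta}S(\theta)-S(\theta_\tau^{\rm D})>0 .
\]
If $\widehat S_N$ is within $\delta/3$ of $S$ uniformly, then any minimizer of $\widehat S_N$ lies in the $\epsilon$-ball around $\theta_\tau^{\rm D}$. By Step 2 this happens with probability tending to one, which gives $\widehat{\theta}_{\tau}\xrightarrow{P}\theta_{\tau}^{\rm D}$. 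The same holds for any measurable selection from the argmin, if it is not unique.

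One issue remains: the sample quantile minimizes $\widehat S_N$ over $\mathbb R$, not over $\Theta$. Interiority of $\theta_\tau^{\rm D}$ (Assumption \ref{as: uniquely minimizer-1}(iii)) together with convexity of $\widehat S_N$ handles this. Once the minimizer over $\Theta$ lies in the interior with probability approaching one, it is a local, hence global, minimizer over $\mathbb R$.
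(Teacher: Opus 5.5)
Your proposal is correct and takes the same route as the paper. The paper also bounds $\sup_{\theta}|\widehat S_N(\theta)-S_N(\theta)|$ by $\sup_{i\le N}|\widehat\theta_{Ti}-\theta_{i0}|$ using the Lipschitz property of $\rho_\tau$, and then invokes Theorem~2.1 of Newey and McFadden directly. Your Steps 2--3 spell out what that citation needs but the paper leaves implicit: uniform convergence of $S_N$ to its limit $S$ via equicontinuity, and the passage from $\Theta$ to $\mathbb{R}$ via convexity and interiority (convexity also shows that every minimizer over $\mathbb{R}$, not just one, lies in the $\epsilon$-ball).
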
 Theorem \ref{thm: theta_tau consistent fixed} establishes
consistency of $\widehat{\theta}_{\tau}$ under uniform consistency of heterogeneous estimators. A condition of the form
\(
\frac{\log N}{T}=o(1)
\)
is standard for establishing uniform consistency over $i$, i.e., Assumption~\ref{as: high level fixed-1}(i); see, for instance, Kato, Galvao, and Montes-Rojas~(\citeyear{KATO201276}). Such a ratio restriction typically appears when the parameter sequence is deterministic. 

As in the stochastic case, the asymptotic normality result requires an assumption on the distributional behavior of $\{\theta_{i0}\}$. In the present setting, however, $\{\theta_{i0}\}$ are deterministic, so the assumption must be formulated directly in terms of their limiting empirical distribution.

\begin{assumption}\label{as: thetai different thetatau-2} There exists $\varepsilon>0$ such that: (i) The limiting distribution \(F_\mathrm{D}=\lim_{N\to\infty}F_N\) is twice continuously differentiable on a neighborhood \(\mathcal{N}_{\varepsilon}(\theta_\tau)\) of \(\theta_\tau\), with density \(f_\mathrm{D}=F'_\mathrm{D}\). Moreover,
$f_\mathrm{D}(\theta_\tau)\in(0,\infty)$.
(ii)  Let $\theta_{(1)}\leq\cdots\leq\theta_{(N)}$ denote the order
statistics of the fixed array $\{\theta_{i0}\}_{i=1}^{N}$ and $\Delta_i\equiv \theta_{(i+1)}-\theta_{(i)}$, then
$\max_{\theta_{(i)}\in \mathcal{N}_{\varepsilon}(\theta_{\tau})}\left|\theta_{(i)}-F_\mathrm{D}^{-1}\!\left(\tfrac{i}{N}\right)\right|=O\!\left(N^{-1}\right)$
 and 
$\max_{\theta_{(i)}\in \mathcal{N}_{\varepsilon}(\theta_{\tau})}|\Delta_{i}-\Delta_{i+1}|=O(N^{-2})$.

\end{assumption}

Assumption \ref{as: thetai different thetatau-2} is closely related
to Assumption \ref{as: continuously differentiable},
but here the focus is on the distribution of the deterministic array $\{\theta_{i0}\}$.
Part (i) ensures that the limiting distribution has a strictly positive
and smooth density around $\theta_{\tau}$, which is essential for
quantile identification. 

Part (ii) imposes a local regularity condition on the fixed coefficient array near
the target quantile. It requires that, in a shrinking neighborhood of
\(\theta_{\tau}\), the ordered coefficients can be well approximated by a
smooth quantile grid. The condition rules out local bunching, isolated gaps,
and rapidly changing adjacent spacings around \(\theta_{\tau}\), since such
irregularities can affect the behavior of the plug-in sample quantile after
first-step estimation error is introduced. The condition is plausible when the
observed units form a dense cross-sectional array whose coefficient values vary
smoothly near \(\theta_{\tau}\), as in applications where the observed finite
population is interpreted as an approximation to a larger population of
heterogeneous agents.\footnote{In the empirical application, this condition can
be assessed by plotting the ordered first-step coefficient estimates around
each target quantile and by examining the adjacent spacings in the same local
neighborhood. If the estimated coefficient profiles show no visible bunching,
isolated gaps, or abrupt changes in local spacings near the reported quantiles,
the deterministic-design approximation is more plausible.}


Together, Assumption \ref{as: thetai different thetatau-2} ensures that the empirical quantile around $\theta_{\tau}$ behaves as if drawn from
a smooth underlying distribution with density $f_\mathrm{D}(\theta_{\tau})>0$,
making subsequent asymptotic expansions and limit arguments valid.


\begin{theorem} \label{thm: theta_tau normality fixed} Under Assumptions \ref{as: iid}, 
\ref{as: uniquely minimizer-1}-\ref{as: thetai different thetatau-2}, as $N,T\rightarrow\infty$ with 
$T^{1/2}\ll N\ll \tfrac{T^{3/2}}{(\log T)^2}$, 
\begin{equation*}
\sqrt{N\sqrt{T}}\left(\widehat{\theta}_{\tau}-\theta_{\tau}^{{\rm {D}}}\right)\xrightarrow{d}\mathcal{N}\left(0,\frac{\sigma(\theta_{\tau})}{\sqrt{\pi}f_\mathrm{D}(\theta_{\tau})}\right).
\end{equation*}
\end{theorem}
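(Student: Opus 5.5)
The plan is to treat $\widehat{\theta}_{\tau}$ as the $\tau$-quantile of the empirical distribution $G_N(x)=N^{-1}\sum_{i=1}^N\mathbf{1}\{\widehat{\theta}_{Ti}\le x\}$. The argument runs through a Bahadur-type linearization. Conditional on the deterministic array $\{\theta_{i0}\}$, the indicators are independent across $i$ by Assumption \ref{as: iid}, but not identically distributed. Let $\bar G_N(x)=E[G_N(x)]=N^{-1}\sum_i P(\widehat{\theta}_{Ti}\le x)$ and $r_{NT}=(N\sqrt T)^{-1/2}$. The target representation is
\begin{equation*}
\widehat{\theta}_{\tau}-\theta_{\tau}^{\rm D}=-\frac{G_N(\theta_\tau)-\bar G_N(\theta_\tau)}{f_\mathrm{D}(\theta_\tau)}-\frac{\bar G_N(\theta_\tau)-\tau}{f_\mathrm{D}(\theta_\tau)}+o_P(r_{NT}).
\end{equation*}
Theorem \ref{thm: theta_tau consistent fixed} gives consistency, which localizes the analysis to $\mathcal N_\varepsilon(\theta_\tau)$.

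First I compute the local behaviour of $\bar G_N$. Write $u_i(x)=\sqrt T(x-\theta_{i0})/\sigma(\theta_{i0})$. By Assumption \ref{as: high level fixed-1}(iv), $P(\widehat\theta_{Ti}\le x)=\Phi(u_i)+T^{-1/2}p_{1}(u_i)\phi(u_i)+O(T^{-1})$, uniformly in $i$.

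Units with $|x-\theta_{i0}|\gg \sqrt{\log T/T}$ contribute $\Phi(u_i)\in\{0,1\}$ up to negligible error. Only the roughly $N\sqrt{\log T/T}$ units near $x$ matter, and this cutoff is where the $(\log T)^2$ factor enters. For those units, Assumption \ref{as: thetai different thetatau-2}(ii) lets me replace $\theta_{(i)}$ by the grid $F_\mathrm{D}^{-1}(i/N)$ with error $O(N^{-1})$. I then convert the sums into Riemann integrals in $u$ with mesh $\sqrt T/(N f_\mathrm{D}\sigma)$. The spacing condition $|\Delta_i-\Delta_{i+1}|=O(N^{-2})$ makes the Riemann error second order.

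This should give
\begin{equation*}
\bar G_N(\theta_\tau)-\tau=O(N^{-1})+O(T^{-1})+o(r_{NT}).
\end{equation*}
Here the $O(T^{-1})$ term collects three pieces: the smoothing bias from $f_\mathrm{D}'$ and $\sigma'$, the $T^{-1/2}\!\int p_1\phi$ term multiplied by the width $T^{-1/2}$ of the window, and the $p_2$ term. The rate condition $T^{1/2}\ll N\ll T^{3/2}/(\log T)^2$ is exactly what makes $\sqrt{N\sqrt T}\,(N^{-1}+T^{-1}\log T)\to0$. The same calculation gives $\bar G_N'(x)\to f_\mathrm{D}(\theta_\tau)$ uniformly on $|x-\theta_\tau|\le C r_{NT}\log T$.

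Second, the variance. We have $\mathrm{Var}(G_N(\theta_\tau))=N^{-2}\sum_i\Phi(u_i)(1-\Phi(u_i))+o(N^{-1}T^{-1/2})$. Via the same Riemann-sum conversion this equals
\begin{equation*}
\frac{\sigma(\theta_\tau)f_\mathrm{D}(\theta_\tau)}{N\sqrt T}\int_{\mathbb R}\Phi(u)(1-\Phi(u))\,du\,(1+o(1)),
\end{equation*}
and $\int\Phi(1-\Phi)=E|Z_1-Z_2|/2=1/\sqrt\pi$. This is where $N\gg T^{1/2}$ is needed: the number of effective units $N/\sqrt T$ must diverge. The Lindeberg condition then holds trivially for bounded summands, because the variance sum diverges on the $N^{-1}$ scale. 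Hence $r_{NT}^{-1}(G_N-\bar G_N)(\theta_\tau)\to\mathcal N(0,\sigma f_\mathrm{D}/\sqrt\pi)$, and dividing by $f_\mathrm{D}$ gives the stated variance $\sigma/(\sqrt\pi f_\mathrm{D})$.

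Third, the stochastic equicontinuity step, which I expect to be the main obstacle. I must show
\begin{equation*}
\sup_{|x-\theta_\tau|\le Cr_{NT}}\bigl|(G_N-\bar G_N)(x)-(G_N-\bar G_N)(\theta_\tau)\bigr|=o_P(r_{NT}).
\end{equation*}
The increment variance is about $N^{-1}\cdot(\text{fraction of units whose }\widehat\theta_{Ti}\text{ falls between }\theta_\tau\text{ and }x)\approx N^{-1}r_{NT}$, because $\bar G_N'$ is bounded. This gives a standard deviation $\sqrt{r_{NT}/N}=o(r_{NT})$, since $Nr_{NT}\to\infty$. I would make the bound uniform by a grid of mesh $r_{NT}/\log N$ combined with Bernstein's inequality and monotonicity of $G_N$, as in the classical Bahadur argument.

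With these three pieces, inverting $G_N(\widehat\theta_\tau)=\tau+O(N^{-1})$ gives the linearization above. Slutsky's lemma then yields the theorem.
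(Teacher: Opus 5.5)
Your proposal is correct and is essentially the paper's argument. Since $N^{-1}\sum_{i}\widehat h_i(\theta)=\tau-G_N(\theta)$, your Bahadur linearization is exactly the paper's decomposition $0=A^{\rm D}+B^{\rm D}+C^{\rm D}$, with the same ingredients:
\begin{enumerate}[(i)]
\item the same Lindeberg CLT, with variance $f_\mathrm{D}(\theta_\tau)\sigma(\theta_\tau)\int\Phi(1-\Phi)=f_\mathrm{D}(\theta_\tau)\sigma(\theta_\tau)/\sqrt{\pi}$;
\item the same $\sqrt{2\log T}$-window bias control using the grid condition;
\item the slope $Q^{\rm D}=f_\mathrm{D}(\theta_\tau)$;
\item local stochastic equicontinuity on an $N^{-1/2}T^{-1/4}$ neighbourhood.
\end{enumerate}

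The one step you leave implicit is the preliminary rate $\widehat\theta_\tau-\theta_\tau=O_P(N^{-1/2}T^{-1/4})$, which the paper proves as a separate lemma. You need it before applying your equicontinuity bound on $|x-\theta_\tau|\le Cr_{NT}$, and consistency alone does not give it. It does follow from monotonicity of $G_N$: $\{\widehat\theta_\tau>\theta_\tau+Cr_{NT}\}\subseteq\{G_N(\theta_\tau+Cr_{NT})<\tau\}$ (and symmetrically on the other side), combined with your mean expansion and Chebyshev's inequality.
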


In the asymptotic variance, $\sigma(\theta_\tau)$ appears linearly rather than squared, which may seem unusual; this linear form results from the Gaussian integral in the Edgeworth expansion and represents the variance of the scaled linearized term.

It is worth emphasizing that the validity of Theorem~\ref{thm: theta_tau normality fixed} differs substantially from that of Theorem~\ref{thm: theta_tau normality}. When \(\theta_{i0}\) is deterministic, the proof becomes considerably more delicate for at least three reasons. First, one must handle the nonsmooth empirical distribution function directly. Second, the objective function, which involves an indicator function, is itself nonsmooth, and in the present setting this lack of smoothness cannot be circumvented by passing to its expectation. Third, the usual stochastic equicontinuity arguments are no longer directly applicable under the unconventional rate \(\sqrt{N\sqrt{T}}\), so the proof must proceed by a different route.

The growth-rate restrictions on sample size $N$ and $T$ serve two distinct purposes. First, the condition $T^{1/2}\ll N$ is required for the CLT of the non-centered
empirical component. Intuitively, this ensures that a sufficiently
large number of individuals lie within a $T^{-1/2}$-neighborhood
of $\theta_{\tau}$, which is necessary for stable estimation of the
quantile functional. In the extreme case where $T=\infty$, each $\theta_{i0}$
can be estimated perfectly, i.e., $\widehat{\theta}_{Ti}=\theta_{i0}$.
Consequently, there is no stochastic fluctuation
and no CLT to derive. Second, the condition $N\ll \frac{T^{3/2}}{(\log T)^2}$ controls the asymptotic bias, which
arises from two sources: the estimation error in $\widehat{\theta}_{Ti}$,
and the smoothing approximation error incurred when replacing the
indicator function by its continuous counterpart
$\Phi(u)$. Unlike in the stochastic case, the bias in the deterministic setting does not have a simple closed-form expression, because it depends on the local arrangement of the fixed values \({\theta_{i0}}\) near \(\theta_{\tau}\). We therefore control it uniformly over \(i\) in the proof, where the decomposition shows that the bias is of order \(O(\frac{\sqrt{N\sqrt{T}} \log T }{T})\), so no separate explicit bias term is stated in the theorem.

The logarithmic factor in the growth condition arises from the cutoff
\(c_T=\sqrt{2\log T}\) used to separate units near and away from the target
quantile. The base of the logarithm is immaterial for the asymptotic condition:
replacing \(\log T\) by \(\log_{c}T\) for some $c>0$ only changes the restriction by a fixed
multiplicative constant. Hence the condition should not be interpreted as a
sharp finite-sample cutoff. Its role is to require \(N\) to be smaller than
\(T^{3/2}\) up to a slowly varying logarithmic factor.

\subsection{Discussions}\label{sec:rates}

To gain further intuition, we now compare the standard FE-QR model with the heterogeneous slope panel model studied in this paper. In particular, we focus on the different convergence rates and required sample size ratios for asymptotic unbiasedness, as presented in Table \ref{tab:comparison}.

Before comparing convergence rates, it is useful to emphasize that the two literatures target different parameters. In standard FE-QR, the parameter of interest is a common slope \(\beta_0(\tau)\) for the conditional \(\tau\)-quantile of the outcome distribution. In our framework, the parameter of interest is instead \(\theta_\tau\), the \(\tau\)-quantile of the cross-sectional distribution of unit-specific coefficients \(\{\theta_{i0}\}\). Therefore, although both approaches involve panel data, fixed effects, and quantile methods, they address different notions of heterogeneity: outcome heterogeneity in the former and coefficient heterogeneity across individuals in the latter.

\begin{table}[t!]
\centering \resizebox{\columnwidth}{!}{%
\begin{tabular}{cccc}
\hline \hline
Model  & Parameter of interest  & Convergence rate  & Ratio for asymptotic unbiasedness \tabularnewline
\hline 
$y_{it}=\alpha_{i0}(\tau)+\bm{z}_{it}^{\top}\bm{\beta}_{0}(\tau)+\varepsilon_{it}$  & $\bm{\beta}_{0}(\tau)$  & $\sqrt{NT}$  & $N/T=o(1)$ \tabularnewline
$y_{it}=\alpha_{i0}+\bm{z}_{it}^{\top}\bm{\beta}_{i0}+\varepsilon_{it}$  & $\alpha_{\tau}^{\mathrm{D}}$ or $\bm{\beta}_{\tau}^{\mathrm{D}}$  & $\sqrt{N\sqrt{T}}$  & $N^{2/3}(\log T)^{4/3}/T=o(1)$ \tabularnewline
$y_{it}=\alpha_{i0}+\bm{z}_{it}^{\top}\bm{\beta}_{i0}+\varepsilon_{it}$  & $\alpha_{\tau}^{\mathrm{S}}$ or $\bm{\beta}_{\tau}^{\mathrm{S}}$  & $\sqrt{N}$  & $\sqrt{N}/T=o(1)$ \tabularnewline
\hline \hline
\end{tabular}} \caption{Comparison of convergence rates and ratio restrictions.}
\label{tab:comparison} 
\end{table}

\paragraph{Sample size rate restrictions.}
For the standard FE-QR model, 
Kato, Galvao, and Montes-Rojas~(\citeyear{KATO201276}) establish consistency and asymptotic normality under a relatively
strong long-panel condition, essentially \(N^2/T=o(1)\). This restriction reflects
the difficulty of controlling the nonsmooth QR objective in the presence of many
individual fixed effects. Galvao and Kato \citeyearpar{galvao2016smoothed} show that smoothing the
objective function substantially relaxes the required growth condition: when
\(N/T=O(1)\), the smoothed fixed-effects QR estimator has a limiting normal
distribution with a non-negligible asymptotic bias. The leading bias is of order
\(\sqrt{N/T}\) under the usual \(\sqrt{NT}\) normalization and can be removed by an
analytic bias correction. More recently, Galvao, Gu, and Volgushev \citeyearpar{GalvaoGuVolgushev20} establish
asymptotically unbiased normality for the conventional fixed-effects QR estimator
under a much weaker long-panel requirement, roughly \(N(\log T)^2/T=o(1)\). Their
result shows that the conventional FE-QR estimator can achieve the standard
\(\sqrt{NT}\) convergence rate for the homogeneous slope parameter
\(\bm{\beta}_{0}(\tau)\), provided the incidental-parameter bias and the higher-order
remainder terms are sufficiently controlled.

Compared with standard FE-QR, the proposed heterogeneous-slope framework relaxes the ratio restriction in the stochastic design, although this flexibility comes at the cost of a slower convergence rate. The reason is straightforward:
for all models, 
\(
E(\widehat{\bm{\beta}}_{\tau}-\bm{\beta}_{\tau})
\)
is of order (around) $T^{-1}$. 
Hence, after multiplying by the relevant rates $\sqrt{NT}$, $\sqrt{N\sqrt{T}}$,
and $\sqrt{N}$, the bias terms become (around) $O\!\left(\tfrac{\sqrt{N}}{\sqrt{T}}\right),O\!\left(\tfrac{\sqrt{N}}{T^{3/4}}\right),O\!\left(\tfrac{\sqrt{N}}{T}\right)$, 
respectively. These orders therefore determine the corresponding rate
restrictions required for asymptotic unbiasedness.

\paragraph{Intuition for the stochastic- and deterministic-design rates.}

To see this distinction, we first clarify why increasing $T$
helps recover $\theta_{\tau}$ in the deterministic case but not in
the stochastic case. Suppose, hypothetically, that $T=\infty$, so
that the first-step estimates yield the exact values $\{{\theta}_{i0}\}_{i=1}^{N}$.
When $\{{\theta}_{i0}\}$ are treated as fixed, observing them exactly
allows us to estimate their empirical $\tau$-quantile $\theta_{\tau}^{\mathrm{D}}$
without any sampling error. Thus, increasing $T$ directly sharpens
first-step estimation and leads to more accurate recovery of $\theta_{\tau}^{\mathrm{D}}$.
When $\{{\theta}_{i0}\}$ are themselves random draws from an underlying
population distribution, even observing them exactly ($T=\infty$)
does \emph{not} reveal the population $\tau$-quantile $\theta_{\tau}^{\mathrm{S}}$.
Consequently, increasing $T$ improves the estimation of each ${\theta}_{i0}$
but does not reduce the sampling variability across $i$, so it does
not help in identifying $\theta_{\tau}^{\mathrm{S}}$ beyond the usual
$\sqrt{N}$ rate.

We have explained why $T$ affects the convergence rate when
$\{{\theta}_{i0}\}$ are treated as deterministic. The appearance of the
$T^{1/4}$ factor, however, is somewhat unusual in the QR
literature. A useful heuristic is to view each first-step estimator as a
noisy observation of the latent heterogeneous coefficient,
\begin{equation}\widehat{\theta}_{Ti}=\theta_{i0}+T^{-1/2}e_i,\label{eq: noisy observation}\end{equation}
where $T^{-1/2}e_i$ represents the noise of order $T^{-1/2}$. Then $\widehat{\theta}_\tau$ behaves as a quantile
estimation of noisy observations, with a typical variance function
\begin{equation*}
Var(\widehat{\theta}_\tau)
\approx
\frac{1}{f(\theta_\tau)^2}
Var\!\left(
\frac{1}{N}\sum_{i=1}^N
\mathbf{1}(\widehat{\theta}_{Ti}\le\theta_\tau)
\right)
=
\frac{1}{f(\theta_\tau)^2}\frac{1}{N^2}\sum_{i=1}^N P_i(1-P_i),
\end{equation*}
where $P_i=P(\widehat{\theta}_{Ti}\le \theta_\tau)$. If $\theta_{i0}$ is
stochastic, the first-step noise is asymptotically negligible, so
$P_i\approx P({\theta}_{i0}\le \theta_\tau)= \tau>0$, and $\frac{1}{N^2}\sum_{i=1}^N P_i(1-P_i)
\asymp \frac{1}{N}$, yielding a rate of $\sqrt{N}$. By contrast, when
$\theta_{i0}$ is deterministic, the randomness comes solely from the noise. In that case, given the magnitude of the noise, only units with $\theta_{i0}$ lying within a $T^{-1/2}$ neighborhood of $\theta_\tau$ make a non-negligible contribution. For units outside this neighborhood, we have $P_i\in\{0,1\}$, and hence $P_i(1-P_i)=0$. The number of informative units is of order $NT^{-1/2}$. Therefore,
\(
\frac{1}{N^2}\sum_{i=1}^N P_i(1-P_i)
\asymp
\frac{1}{N^2}\cdot \frac{N}{\sqrt{T}}
=
\frac{1}{N\sqrt{T}},
\)
which implies the convergence rate $\sqrt{N\sqrt{T}}$.

\paragraph{Applying the sample mean in Step 2 of Algorithm 1.}
If, in Step 2 of Algorithm \ref{Algorithm:Two-step Estimation Procedure}, we replace the quantile operator by the sample mean, namely
$\widehat{\theta}_{\tau,p}=\frac{1}{N}\sum_{i=1}^{N}\widehat{\theta}_{Ti,p}$,
then the estimator no longer targets a quantile of the cross-sectional distribution of \(\{\theta_{i0}\}\). Instead, in the stochastic case it targets the expectation \(E(\theta_{i0})\), and in the deterministic case it targets the corresponding empirical average. In this case, the convergence rate may differ from that of our original quantile-based estimator.

To illustrate this point, we use the noisy-observation representation in \eqref{eq: noisy observation}. 
Then
\(
Var(\widehat{\theta}_{\tau})
=Var\!\left(\frac{1}{N}\sum_{i=1}^N \widehat{\theta}_{Ti}\right)=\frac{1}{N}Var\!\left( \widehat{\theta}_{Ti}\right).
\) In the stochastic case, the cross-sectional randomness in \(\theta_{i0}\) is the leading source of variation, while the first-step estimation noise is asymptotically negligible. Hence,
\(
Var(\widehat{\theta}_{\tau})
\asymp \frac{1}{N}Var(\theta_{i0}),
\)
which yields the convergence rate \(\sqrt{N}\), the same as in the quantile case.

In the deterministic case, by contrast, the only source of randomness comes from the first-step estimation noise. Using \eqref{eq: noisy observation}, we obtain
$\widehat{\theta}_{\tau}
=\frac{1}{N}\sum_{i=1}^N \theta_{i0}
+\frac{1}{N\sqrt{T}}\sum_{i=1}^N e_i$,
and therefore
\(
Var(\widehat{\theta}_{\tau})
=\frac{1}{NT}Var(e_i).
\)
This implies the convergence rate \(\sqrt{NT}\), which is faster than the rate \( \sqrt{N\sqrt{T}} \) obtained in the quantile case. This difference arises because, unlike quantile regression, which is driven mainly by units in a neighborhood of the target quantile, mean regression depends equally on all units.\footnote{A related discussion for the mean regression case is provided in Section 3.2.1 of Fernández-Val et al. (\citeyear{fernandez2022dynamic}).}

\section{Bootstrap Inference}\label{sec: Bootstrap} 

In this section, we describe bootstrap procedures for
constructing practical confidence intervals under the two scenarios: stochastic design and deterministic design. Below we will provide conditions to establish the consistency of both procedures.

Consider the following procedure for the stochastic-design.

\begin{description}
\item [{Algorithm 2.}] \textbf{Stochastic-Design Quantile Bootstrap (SQB)}

\end{description}
\begin{enumerate}[Step 1]
\item Apply Algorithm 1 and compute the original estimate $\widehat{\theta}_{\tau}$  based on $\{\bm{X}_{it}\}$. 
\item For each $i$, generate the first-step bootstrap sample $\left\{ \bm{X}_{it}^{*b}:t\geq1\right\} $
by sampling with replacement from the original sample $\left\{ \bm{X}_{it}:t\geq1\right\} $.
This resampling is performed independently across $i$. 
\item Generate the second-step bootstrap sample $\left\{ \bm{X}_{it}^{**b}:t\geq1,\,i\geq1\right\} $
by drawing units $i$ with replacement from the index set $\{1,\dots,N\}$,
and each selected $i$ includes the entire time series $\left\{ \bm{X}_{it}^{*b}:t\geq1\right\} $.  Compute the bootstrap estimate $\widehat{\theta}_{\tau}^{**b}$ following
Algorithm 1, with $\{\bm{X}_{it}\}$
being replaced by $\{\bm{X}_{it}^{**b}\}$. 
\item Repeat Step 2 to Step 3 for $B$ times. The SQB  confidence interval
of $\sqrt{N}\left(\widehat{\theta}_{\tau}-\theta_{\tau}^{{\rm {S}}}\right)$
is then constructed based on $\left\{ \sqrt{N}\left(\widehat{\theta}_{\tau}^{**b}-\widehat{\theta}_{\tau}\right)\right\} _{b}$.
\end{enumerate}

\vspace{0.25cm}

Next, we consider the following procedure for the deterministic design.

\begin{description}
\item [{Algorithm 3.}] \textbf{Centered Deterministic-Design Quantile Bootstrap (CDQB)}
\end{description}

\begin{enumerate}[Step 1]
\item Apply Algorithm 1 and compute the original estimate
\(\widehat{\theta}_{\tau}\) based on \(\{\bm{X}_{it}\}\).

\item For each \(i\), generate the first-step bootstrap sample
\(\{\bm{X}_{it}^{*b}:t\geq1\}\) by sampling with replacement from the
original sample \(\{\bm{X}_{it}:t\geq1\}\). This resampling is performed
independently across \(i\). Compute the first-step bootstrap estimates
\(\{\widehat{\theta}_{Ti}^{*b}:1\le i\le N\}\).

\item Repeat Step 2 for \(b=1,\ldots,B\). Define the bootstrap centering
probability by
\(
\widehat p_{\tau}^{*}
=
\frac1B\sum_{b=1}^{B}
\frac1N\sum_{i=1}^{N}
\mathbf 1\{\widehat{\theta}_{Ti}^{*b}\le \widehat{\theta}_{\tau}\}.
\)

\item For each bootstrap draw \(b\), define the centered deterministic-design
bootstrap estimator \(\widehat{\theta}_{\tau,c}^{*b}\) as the empirical
\(\widehat p_{\tau}^{*}\)-quantile of
\(\{\widehat{\theta}_{Ti}^{*b}:1\le i\le N\}\), namely
\[
\widehat{\theta}_{\tau,c}^{*b}
=
\inf\left\{
x:
\frac1N\sum_{i=1}^{N}
\mathbf 1\{\widehat{\theta}_{Ti}^{*b}\le x\}
\ge
\widehat p_{\tau}^{*}
\right\}.
\]

\item The CDQB confidence interval of
\(\sqrt{N\sqrt T}(\widehat{\theta}_{\tau}-\theta_{\tau}^{\rm D})\)
is then constructed based on
\(
\left\{
\sqrt{N\sqrt T}
\left(
\widehat{\theta}_{\tau,c}^{*b}
-
\widehat{\theta}_{\tau}
\right)
\right\}_{b=1}^{B}.
\)
\end{enumerate}

\begin{remark}
A naive (non-centered) deterministic-design bootstrap would compute, for each bootstrap
sample \(b\), the empirical \(\tau\)-quantile of
\(\{\widehat{\theta}_{Ti}^{*b}:1\le i\le N\}\). This directly mimics the
definition of the original estimator. In the deterministic design, 
we have $\frac{1}{N}\sum_iP(\widehat{\theta}_{Ti}\le \theta_\tau)\approx\tau$. However, the bootstrap
empirical distribution need not place probability exactly \(\tau\) below the
original estimate \(\widehat{\theta}_{\tau}\). That
is, the \emph{average} bootstrap fraction of first-step estimates lying below
\(\widehat{\theta}_{\tau}\) is $\frac{1}{N}\sum_iP^*(\widehat{\theta}_{Ti}^*\le \widehat\theta_\tau)$, 
which may differ from \(\tau\).

For this reason, the deterministic-design bootstrap uses
\(\widehat p_{\tau}^{*}\) as the bootstrap quantile level. Equivalently, for
each bootstrap draw \(b\), it chooses the point in
\(\{\widehat{\theta}_{Ti}^{*b}:1\le i\le N\}\) whose empirical rank matches the
average bootstrap rank of the original estimate \(\widehat{\theta}_{\tau}\).
This adjustment keeps the bootstrap comparison local around
\(\widehat{\theta}_{\tau}\), rather than forcing every bootstrap sample to use
the fixed rank \(\tau\). The centered version is therefore preferable in the
deterministic design, where the cross-sectional distribution is fixed and the
bootstrap distribution is generated around the estimated first-step quantities.
\end{remark}


The theoretical analysis treats the ideal version in which  $\widehat p_{\tau}^{*}=\frac{1}{N}\sum_iP^*(\widehat{\theta}_{Ti}^*\le \widehat\theta_\tau)$. In implementation, this probability is approximated by the Monte Carlo average over $B\to\infty$ bootstrap draws. Unless otherwise stated, we report the symmetric-tail bootstrap \(p\)-values,
defined for the SQB and CDQB procedures, respectively, as
\[
p^{*}_{\mathrm{S}}
=
\frac{1}{B}\sum_{b=1}^{B}
\mathbb{I}\left\{
\left|\widehat{\theta}^{**b}_{\tau}-\widehat{\theta}_{\tau}\right|
\ge
\left|\widehat{\theta}_{\tau}-\theta_{\tau}^\mathrm{S}\right|
\right\},
\qquad
p^{*}_{\mathrm{D}}
=
\frac{1}{B}\sum_{b=1}^{B}
\mathbb{I}\left\{
\left|\widehat{\theta}^{*b}_{\tau,c}-\widehat{\theta}_{\tau}\right|
\ge
\left|\widehat{\theta}_{\tau}-\theta_{\tau}^\mathrm{D}\right|
\right\}.
\] Let $P^{*}$, $E^{*}$, and $Var^{*}$ denote
the bootstrap probability, expectation, and variance, respectively,
conditional on the original sample, $\left\{ \bm{X}_{it}\right\}$.
Similarly, let $P^{**}$, $E^{**}$, and $Var^{**}$ denote the bootstrap
probability, expectation, and variance conditional on the first step
bootstrap data, $\left\{ \bm{X}_{it}^{*}\right\}$. Let ${\sigma}_{i}^{*2}=\lim_{T\to\infty}Var^{*}\left(\sqrt{T}\widehat{{\theta}}_{Ti}^{*}\right)$
denote the SQB and CDQB first-step bootstrap asymptotic variance  (the two are the same procedure).

\begin{assumption}[Bootstrap high-level conditions] \label{as: high level-bootstrap random}
Conditional on $\{{\theta}_{i0}\}_{i}$, 
let $W_{Ti}^{*}={\sigma}_{i}^{*-1}\sqrt{T}\left(\widehat{{\theta}}_{Ti}^{*}-\widehat{{\theta}}_{Ti}\right)$ be the first-step bootstrap statistic. 
\begin{enumerate}[(i)]
\item (Identical asymptotic variance) For each $i$, $ \sigma_i^2= \sigma_i^{*2}$; 
\item (Bootstrap Edgeworth expansion) For the polynomials \(b_1(x)\), \(b_2(x)\), \(b_3(x)\), and population cumulants $\kappa_{3,\theta_{i,0}}$ and $\kappa_{4,\theta_{i,0}}$ in $p_{1,\theta_{i0}}(x)$ and $p_{2,\theta_{i0}}(x)$, let \(
\widehat p_{1i}(x)=b_1(x)'\widehat\kappa_{3i},
\ 
\widehat p_{2i}(x)
=
b_2(x)'\widehat\kappa_{4i}
+
b_3(x)'\widehat\kappa_{3,\theta_{i0}}\otimes\widehat\kappa_{3,\theta_{i0}}
\),
with the sample cumulants satisfying $\sup_{i\le N}\widehat{\kappa}_{3,i}=O_{P}\left(1\right)$, $\sup_{i\le N}\widehat{\kappa}_{4,i}=O_{P}\left(1\right)$, $\sup_{i\le N}E\|\widehat\kappa_{3,\theta_{i0}}-\kappa_{3,\theta_{i0}}\|^2=O(T^{-1}),$ $\sup_{i\le N}E\|\widehat\kappa_{4,\theta_{i0}}-\kappa_{4,\theta_{i0}}\|^2=O(T^{-1})$. Moreover,
\begin{equation*}
\sup_{i\le N}\sup_{x\in\mathbb{R}}\left|P^{*}\left(W_{Ti}^{*}\le x\right)-\left[\Phi\left(x\right)+T^{-1/2}\widehat{p}_{1,i}\left(x\right)\phi\left(x\right)+T^{-1}\widehat{p}_{2,i}\left(x\right)\phi\left(x\right)\right]\right|=o_{P}\left(T^{-1}\right).
\end{equation*}
\end{enumerate}
\end{assumption}Assumption \ref{as: high level-bootstrap random}
is the bootstrap counterpart of the Assumptions \ref{as: high level random-1} and \ref{as: high level fixed-1}. 

\begin{theorem}\label{thm: bootstrap validity}

(i) When $\{\theta_{i0}\}_{i}$ are stochastic, under Assumptions 
\ref{as: iid}-\ref{as: high level random-1} and
\ref{as: high level-bootstrap random}, as $N,T\rightarrow\infty$
with $\sqrt{N}/{T}=O(1)$, 
\begin{equation}
\sup_{x\in\mathbb{R}}\left\vert P^{*}\left(\sqrt{N}(\widehat{\theta}_{\tau}^{**}-\widehat{\theta}_{\tau})\leq x\right)-P\left(\sqrt{N}(\widehat{\theta}_{\tau}-\theta_{\tau}^{{\rm {S}}})\leq x\right)\right\vert \xrightarrow{P}0;\label{eq:thm bootstrap 1}
\end{equation}

 (ii) Conditional on $\{\theta_{i0}\}_{i}$, under Assumptions 
\ref{as: iid}, \ref{as: uniquely minimizer-1}-\ref{as: high level-bootstrap random}, as $N,T\rightarrow\infty$
with $T^{1/2}\ll N\ll \tfrac{T^{3/2}}{(\log T)^2}$, 
\begin{equation}
\sup_{x\in\mathbb{R}}\left\vert P^{*}\left(\sqrt{N\sqrt{T}}(\widehat{\theta}_{\tau,c}^{*}-\widehat{\theta}_{\tau})\leq x\right)-P\left(\sqrt{N\sqrt{T}}(\widehat{\theta}_{\tau}-\theta_{\tau}^{{\rm {D}}})\leq x\right)\right\vert \xrightarrow{P}0.\label{eq:thm bootstrap 2}
\end{equation}
\end{theorem}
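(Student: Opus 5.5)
The plan is to prove each part by showing that the bootstrap statistic has, conditionally on the data and in probability, the same Gaussian limit as the original statistic. The original limits are given by Theorems \ref{thm: theta_tau normality} and \ref{thm: theta_tau normality fixed}. Because both limits are continuous, P\'olya's theorem then turns pointwise convergence of the conditional CDFs into the uniform statements \eqref{eq:thm bootstrap 1} and \eqref{eq:thm bootstrap 2}.

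\emph{Part (i): stochastic design.} I would mirror the proof of Theorem \ref{thm: theta_tau normality} with the two bootstrap layers in place of the two sources of randomness.

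First I would use the Bahadur-type linearization behind Theorem \ref{thm: theta_tau normality}:
\[
\sqrt N(\widehat\theta_\tau-\theta_\tau^{\rm S})\approx f_{\rm S}(\theta_\tau)^{-1}N^{-1/2}\sum_i\bigl(\tau-\mathbf 1\{\widehat\theta_{Ti}\le\theta_\tau\}\bigr).
\]
Conditioning on $\theta_{i0}$, the term $P(\widehat\theta_{Ti}\le\theta_\tau\mid\theta_{i0})$ is expanded with the Edgeworth expansion of Assumption \ref{as: high level random-1}(iv). Integrating against $f_{\rm S}$ and Taylor expanding in $T^{-1/2}$ produces the bias $B_{\rm S}$.

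For the bootstrap, the second-step resampling of units with replacement gives, conditionally on $\{\bm X^*_{it}\}$, a standard bootstrap of a sample quantile of the i.i.d.\ points $\{\widehat\theta^*_{Ti}\}$. The $m$-out-of-$n$-free Bahadur representation for the bootstrapped quantile yields the variance $\tau(1-\tau)/\widehat f^{*}(\widehat\theta_\tau)^2$. Here $\widehat f^*$ is the density of $\widehat\theta^*_{Ti}$, which converges to $f_{\rm S}$ at the relevant point.

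The bias requires the first-step layer. Under Assumption \ref{as: high level-bootstrap random}, the relevant quantity is
\[
E^*\bigl[N^{-1}\sum_i\mathbf 1\{\widehat\theta^*_{Ti}\le x\}\bigr]-N^{-1}\sum_i\mathbf 1\{\widehat\theta_{Ti}\le x\}.
\]
Using $\sigma_i^*=\sigma_i$ and the sample-cumulant rates $O(T^{-1/2})$, I would show that this difference reproduces the same $T^{-1}$ expansion, with $f_{\rm S}$ replaced by the density of $\widehat\theta_{Ti}$. That density differs from $f_{\rm S}$ by $O(T^{-1})$ in the derivative terms, so the bootstrap bias equals $B_{\rm S}+o_P(1)$ whenever $\sqrt N/T=O(1)$.

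\emph{Part (ii): deterministic design.} I would reuse the decomposition in the proof of Theorem \ref{thm: theta_tau normality fixed}. Write $G_N(x)=N^{-1}\sum_iP(\widehat\theta_{Ti}\le x)$ and its empirical analogue $\widehat G_N$, so that
\[
\widehat\theta_\tau-\theta_\tau^{\rm D}\approx-\bigl(\widehat G_N-G_N\bigr)(\theta_\tau)/f_{\rm D}(\theta_\tau)+\text{bias}.
\]
Only units with $|\theta_{i0}-\theta_\tau|\le c_T\sigma/\sqrt T$, where $c_T=\sqrt{2\log T}$, contribute to the variance. Applying the Lindeberg CLT to this triangular array gives the variance $\sigma/(\sqrt\pi f_{\rm D})$, which comes from $\int\Phi(1-\Phi)=1/\sqrt\pi$.

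In the bootstrap, conditionally on the data, the $\widehat\theta^*_{Ti}$ are independent across $i$ and centered at $\widehat\theta_{Ti}$. The points $\widehat\theta_{Ti}$ play the role of the fixed array, and Assumption \ref{as: high level-bootstrap random}(ii) supplies the uniform Edgeworth expansion. The centering level $\widehat p^*_\tau=G^*_N(\widehat\theta_\tau)$ is chosen precisely so that the bootstrap target is $\widehat\theta_\tau$. This removes the bootstrap bias without requiring $G^*_N(\widehat\theta_\tau)=\tau$. The bootstrap linearization then gives
\[
\widehat\theta^*_{\tau,c}-\widehat\theta_\tau\approx-(\widehat G^*_N-G^*_N)(\widehat\theta_\tau)/g^*_N(\widehat\theta_\tau),
\]
and a conditional Lindeberg CLT finishes the argument once two facts are established.

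\emph{Main obstacle.} Those two facts are where the difficulty lies:
\[
\mathrm{(a)}\quad N^{-1}\sqrt T\sum_i\Phi_i^*(1-\Phi_i^*)\to\frac{\sigma(\theta_\tau)}{\sqrt\pi}f_{\rm D}(\theta_\tau),
\qquad
\mathrm{(b)}\quad g^*_N(\widehat\theta_\tau)\to f_{\rm D}(\theta_\tau)
\]
in probability, where $g^*_N$ is the slope of $G^*_N$. The array $\{\widehat\theta_{Ti}\}$ is a noisy version of $\{\theta_{i0}\}$ with noise of the same $T^{-1/2}$ scale as the bandwidth, so Assumption \ref{as: thetai different thetatau-2}(ii) does not transfer to it directly.

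My approach is to compute $E$ and $Var$ of the sum in (a) over the original sampling. Its mean is a convolution of two Gaussians, so the sum behaves like $\sqrt2\sigma/\sqrt T$ smoothing of the deterministic array. Its variance is $O(1/(N T^{-1/2}))\to0$ because $T^{1/2}\ll N$. I then expect that the Gaussian integral $\int\Phi(u)(1-\Phi(u))\,du$ still yields $\sigma/\sqrt\pi$, because the random shifts only translate each unit's contribution.

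The same expectation-plus-variance argument handles (b). Replacing $\widehat\theta_\tau$ by $\theta_\tau$ inside these sums costs $O_P((N\sqrt T)^{-1/2}\sqrt T)=o_P(1)$, which follows from the Lipschitz property of $\Phi$ and the rate established in Theorem \ref{thm: theta_tau normality fixed}. Bias remainders are controlled exactly as in that theorem, under $N\ll T^{3/2}/(\log T)^2$.
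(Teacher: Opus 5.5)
Your proposal is correct. Part (ii) follows the paper's argument essentially step for step:
\begin{itemize}
\item the same expansion around $\widehat\theta_\tau$;
\item the bias removed exactly by the choice $\widehat p^*_\tau=G^*_N(\widehat\theta_\tau)$;
\item a conditional Lindeberg CLT;
\item convergence of the bootstrap variance and slope (the paper's $V^*(\{\theta_{i0}\}_i)$ and $Q^{\rm D*}$, which it delegates to a supplementary lemma).
\end{itemize}
Your expectation-plus-variance argument for (a) and (b) rests on the translation invariance of $\int\Phi(1-\Phi)=1/\sqrt\pi$ and $\int\phi=1$. It is a sound way to supply that lemma without needing Assumption \ref{as: thetai different thetatau-2}(ii) for the noisy array. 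You also need bootstrap versions of the rate and local-equicontinuity lemmas, which your linearization uses implicitly; they follow from the same variance count as (b).

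Part (i) takes a different route. The paper never separates the two bootstrap layers. It writes a single expansion $A^*+B^*+C^*$ of $\widehat\theta^{**}_\tau$ around $\widehat\theta_\tau$ under the combined measure $P^*$, and computes $V^*$ and the slope $Q^*=N^{-1}\sum_i\phi(W_{Ti})\sqrt T/\sigma_i^*$ directly. It obtains the bias from explicit Edgeworth integrals. Since $E[\Phi(a_i-W_i)\mid\theta_{i0}]\approx\Phi(a_i/\sqrt2)$, the bootstrap bias is the $\lambda=\sqrt2$ smoothing bias minus the $\lambda=1$ one. That difference equals the original bias because the bias scales with $\lambda^2$.

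You instead split $\sqrt N(\widehat\theta^{**}_\tau-\widehat\theta_\tau)=\sqrt N(\widehat\theta^{**}_\tau-\widehat\theta^{*}_\tau)+\sqrt N(\widehat\theta^{*}_\tau-\widehat\theta_\tau)$, with $\widehat\theta^*_\tau$ the $\tau$-quantile of $\{\widehat\theta^*_{Ti}\}$. State this split explicitly; your text only implies it. The first piece is a classical quantile bootstrap and carries the variance. The second has $P^*$-fluctuations of order at most $T^{-1/4}$ and carries the bias.

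Your shortcut of re-applying the original bias formula to $\{\widehat\theta_{Ti}\}$ with density $f_{\widehat\theta}$ is the paper's $\lambda^2$ identity in another guise. Making it rigorous needs essentially the paper's computation, for three reasons:
\begin{itemize}
\item the bootstrap noise scale is $\sigma(\theta_{i0})$, not $\sigma(\widehat\theta_{Ti})$;
\item $\widehat\kappa_{3,i}$ is correlated with $\widehat\theta_{Ti}$ (both this and the previous effect are $o(T^{-1})$);
\item the expansion is evaluated at $\widehat\theta_\tau$, so it must hold uniformly over an $N^{-1/2}$-neighbourhood of $\theta_\tau$.
\end{itemize}

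Your route buys transparency. Unit resampling reproduces $\tau(1-\tau)/f_{\rm S}^2$, and time resampling adds one more $T^{-1/2}$ noise layer and hence reproduces $B_{\rm S}$. It also buys some robustness. The classical quantile bootstrap needs smoothness of the law of $\widehat\theta^*_{Ti}$ only at scale $N^{-1/2}$. The paper's $Q^*$ is a kernel average with bandwidth of order $T^{-1/2}$ and variance of order $\sqrt T/N$. Its convergence therefore implicitly needs $N\gg\sqrt T$, which $\sqrt N/T=O(1)$ does not impose.

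The cost of your route is threefold. You need a triangular-array version of the classical result, since the law of $\widehat\theta^*_{Ti}$ depends on $T$ and is smooth only up to an $o(T^{-1})=o(N^{-1/2})$ remainder. You need a separate linearization of $\widehat\theta^*_\tau-\widehat\theta_\tau$. And you need the step combining a conditional weak limit with a $P^*$-probability limit.
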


Theorem~\ref{thm: bootstrap validity} establishes the validity of the two bootstrap procedures under the same ratio restriction as the original asymptotic theory. It is worth noting that, when $\theta_{i0}$ is stochastic and $\sqrt{N}/T \to c \in (0,\infty)$, the bias term is asymptotically non-negligible. Nevertheless, the bootstrap remains valid in this case and correctly captures the bias as well.

\section{Primitive Conditions for the Least Squares First-Step Estimator}\label{sec:An-Application}

This section provides an example considering the ordinary least squares estimator as the first-step estimator, given its important role in empirical work. 

\begin{assumption}[First-step least squares estimation] \label{as:first-step-ols} For model \eqref{eq: dgp}, 
\begin{enumerate}[(i)]
\item (Moment conditions) $E\left(\bm{Z}_{it}\varepsilon_{it}\right)=\bm{0}$,
$E\left(\left\Vert \bm{Z}_{it}\right\Vert ^{8}\right)<\infty$, and
$E\left(\left|\varepsilon_{it}\right|^{8}\right)<\infty$, $E\left(\bm{Z}_{it}\bm{Z}_{it}^{\top}\right)$
and $Var\left(\bm{Z}_{it}\varepsilon_{it}\right)$ exist and are non-singular. 
\item (Cram\'er's condition) Let $\bm{\mathcal{Z}}_{it}=\left(\bm{Z}_{it}^{\top}\varepsilon_{it},vech\left(\bm{Z}_{it}\bm{Z}_{it}^{\top}\right)^{\top}\right)^{\top}$.
For every nonzero vector $\bm{t}\in\mathbb{R}^{K+K\left(K+1\right)/2}$, 
$\limsup_{\left\Vert \bm{t}\right\Vert \to\infty}\left|E\left(\exp\left(i\bm{t}^{\top}\bm{\mathcal{Z}}_{it}\right)\right)\right|<1$.
\end{enumerate}
\end{assumption}
Assumption \ref{as:first-step-ols} imposes standard moment conditions
and a Cram\'er-type nonlattice condition for the OLS estimator. The
component involving $vech\left(\bm{Z}_{it}\bm{Z}_{it}^{\top}\right)$
is included mainly for technical convenience, so that the Edgeworth
expansion can be derived by treating OLS as a smooth function of the sample
moments. It is not essential to the underlying result and could be
relaxed with a more involved proof.

\begin{theorem}\label{thm: m-estimator}

(i) Under Assumptions \ref{as: iid}-\ref{as: continuously differentiable} and \ref{as:first-step-ols},
as $N,T\rightarrow\infty$ with $\sqrt{N}/{T}=O(1)$, results in Theorems \ref{thm: theta_tau consistent},
\ref{thm: theta_tau normality}, and \ref{thm: bootstrap validity}(i)
continue to hold;

(ii) Under Assumptions  \ref{as: iid}, \ref{as: uniquely minimizer-1}, 
\ref{as: thetai different thetatau-2}, and \ref{as:first-step-ols}, 
as $N,T\rightarrow\infty$ with $T^{1/2}\ll N\ll \frac{T^{3/2}}{(\log T)^2}$, results in Theorem
\ref{thm: theta_tau consistent fixed}, \ref{thm: theta_tau normality fixed},
and \ref{thm: bootstrap validity}(ii) continue to hold. \end{theorem}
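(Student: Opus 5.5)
Theorem~\ref{thm: m-estimator} is a verification result. Theorems~\ref{thm: theta_tau consistent}--\ref{thm: bootstrap validity} are already established under the high-level Assumptions~\ref{as: high level random-1}, \ref{as: high level fixed-1} and \ref{as: high level-bootstrap random}. The plan is therefore to show that, when $\widehat{\bm\theta}_{Ti}$ is OLS on $\{(y_{it},\bm Z_{it})\}_t$, these conditions follow from Assumptions~\ref{as: iid} and \ref{as:first-step-ols}.

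\textbf{Step 1: smooth-function representation.} I will write
\[
\widehat{\bm\theta}_{Ti}-\bm\theta_{i0}=\Bigl(\tfrac1T\sum_t \bm Z_{it}\bm Z_{it}^\top\Bigr)^{-1}\tfrac1T\sum_t\bm Z_{it}\varepsilon_{it}=H(\bar{\bm{\mathcal Z}}_{i}),
\]
where $\bar{\bm{\mathcal Z}}_i=T^{-1}\sum_t\bm{\mathcal Z}_{it}$ and $H$ is smooth near the mean $E\bm{\mathcal Z}_{it}$.

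The key observation is that $\widehat{\bm\theta}_{Ti}-\bm\theta_{i0}$ does not depend on $\bm\theta_{i0}$. By Assumption~\ref{as: iid}, its law is identical across $i$ and free of $\theta_{i0}$. Hence $\sigma(\theta)^2$ is a constant: it is bounded away from $0$ and $\infty$ by the nonsingularity in Assumption~\ref{as:first-step-ols}(i), and trivially continuously differentiable. The same holds for the cumulants $\kappa_{3},\kappa_4$. This settles the bounded-variance, differentiable-variance and moment parts of Assumptions~\ref{as: high level random-1} and \ref{as: high level fixed-1} at once, in both their ``in expectation'' and ``$\sup_{i\le N}$'' forms.

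\textbf{Step 2: Edgeworth expansions.} I will invoke Bhattacharya and Ghosh (1978) and Hall (1992, Thm.~2.2) for smooth functions of means. The $8$th moments and the Cramér condition of Assumption~\ref{as:first-step-ols}(ii) give a two-term expansion of the studentized coordinate $W_T$ with remainder $o(T^{-1})$, uniform in $x$. The Berry--Esseen bound (i) then follows as a corollary.

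Since the law of $W_T(\theta_{i0})$ is the same for all $i$, the sup over $i$ and the expectation over $\theta_{i0}$ are free. Uniform consistency, Assumption~\ref{as: high level fixed-1}(i), I will obtain by a union bound: Markov with $8$th moments (or Fuk--Nagaev) gives $P(\max_{i\le N}|\widehat\theta_{Ti}-\theta_{i0}|>\delta)\le N\,O(T^{-4})\to0$, since $N\ll T^{3/2}$. In part (i), where $N$ is only restricted by $\sqrt N/T=O(1)$, this condition is not needed.

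\textbf{Step 3: bootstrap conditions.} This is the main obstacle. For Assumption~\ref{as: high level-bootstrap random}(i), the iid bootstrap variance of OLS equals the sandwich plug-in. I will read $\sigma_i^{*2}=\sigma_i^2$ as holding to the order needed, i.e.\ $\sigma_i^{*2}-\sigma_i^2=O_P(T^{-1/2})$ uniformly. That error enters only at order $T^{-1/2}$ in the scaled argument, and the proofs absorb it in the same way as the $\widehat\kappa$ errors.

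For (ii) I need a bootstrap Edgeworth expansion whose remainder is $o_P(T^{-1})$ \emph{uniformly in $i\le N$}. I will follow Hall (1992, Sec.~5.2), which controls the remainder on an event where the empirical moments of $\bm{\mathcal Z}_{it}$ are close to the population ones and the empirical characteristic function satisfies a Cramér bound on a range $|t|\le T^{c}$. I will show this event has probability $1-o(N^{-1})$ per unit. The moment deviations are handled by Markov with the $8$th moments. The empirical characteristic function is handled by Bernstein's inequality on a polynomial grid. A union bound over $i$ then works with $N\ll T^{3/2}$. The conditions $\sup_iE\|\widehat\kappa-\kappa\|^2=O(T^{-1})$ follow because sample cumulants are smooth functions of sample moments with finite $8$th moments, and the delta method gives the rate.

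With all high-level assumptions verified, parts (i) and (ii) follow directly from Theorems~\ref{thm: theta_tau consistent}--\ref{thm: bootstrap validity}.
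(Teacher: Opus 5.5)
Your overall architecture matches the paper's. You treat OLS as a smooth function of $\bar{\bm{\mathcal Z}}_i$ and apply Hall's two-term Edgeworth expansion. You use that the law of $\widehat{\bm\theta}_{Ti}-\bm\theta_{i0}$ is free of $\bm\theta_{i0}$ and common across $i$, so $\sigma^2,\kappa_3,\kappa_4$ are constants. You use the sandwich variance, and Markov plus a union bound for uniform consistency. In that last step the $\widehat{\bm Q}_{iT}$ block has only four moments, so the bound is $N\,O(T^{-2})$, as in the paper, not $N\,O(T^{-4})$; this is still enough. The gaps are in Step 3, where you go beyond what the paper does.

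\textbf{The variance mismatch is not absorbed like the $\widehat\kappa$ errors.} The $\widehat\kappa$ errors sit behind a $T^{-1/2}$ prefactor. An $O_P(T^{-1/2})$ error $\widehat\sigma_i-\sigma$ instead enters the leading term $\Phi(\widehat u_i/\widehat\sigma_i)$ of $B^*_{NT}$ in the proof of Theorem~\ref{thm: bootstrap validity}(i). The Cauchy--Schwarz/localization bound used for $\widehat\kappa$ gives only $E|\Phi(\widehat u_i/\sigma)-\Phi(\widehat u_i/\widehat\sigma_i)|=O(T^{-3/4})$. Even integrating out $\theta_{i0}$ first gives only $O(T^{-1})$. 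That is a contribution of order $\sqrt N/T$ to $B^*_{NT}$, exactly the order of the bias SQB must reproduce when $\sqrt N/T\to c>0$.

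What rescues it is a signed cancellation, not an absolute bound. The leading piece is $\sigma^{-1}E[\phi(a_i-W_i)(a_i-W_i)(\widehat\sigma_i-\sigma)]$. For OLS the joint law of $(W_i,\widehat\sigma_i)$ does not depend on $\theta_{i0}$. So you may integrate $a_i$ first against the locally flat density $\sigma f_{\mathrm S}(\theta_\tau)T^{-1/2}$, and $\int u\phi(u)\,du=0$ removes the leading piece, leaving $O(T^{-3/2})$ per unit. You need to supply this argument.

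The paper sidesteps the issue by taking $\sigma_i^{*2}$ to be the $T\to\infty$ limit of $Var^*$, i.e.\ $\bm V$, so that (i) holds by definition. However, the Hall bootstrap expansion it then cites standardizes by the plug-in variance. So your concern is real, and the cancellation is what actually closes it.

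\textbf{The union bound does not deliver uniformity in $i$.} You are right that identical marginal laws, which is the paper's justification, do not by themselves control $\sup_{i\le N}$ of $N$ random remainders. Your fix falls short in two ways.

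First, Theorem~\ref{thm: bootstrap validity}(i) also needs the $\sup_{i\le N}$ clauses of Assumption~\ref{as: high level-bootstrap random}. Under $\sqrt N/T=O(1)$ one may have $N\asymp T^2$, which is outside your $N\ll T^{3/2}$ union bound. For $B^*_{NT}$, the remainders enter through $N^{-1/2}\sum_i$ of i.i.d.\ terms, so you should aim instead for an $L^1$ bound: a per-unit bad event of probability $o(T^{-1})$.

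Second, eight moments do not give per-unit failure probability $o(N^{-1})$ once $N\gg T$, which part (ii) allows. The fourth-order empirical moments of the $\bm Z_{it}\varepsilon_{it}$ block enter $\widehat\kappa_{4,i}$ and Hall's remainder. They average terms with only two guaranteed moments, so Chebyshev yields only $O(T^{-1})$. This is not slack in the inequality. If $\varepsilon_{it}$ has Pareto tails of index $8+\delta$ with $0<\delta<2$, then $\max_{i\le N}T^{-1}\sum_t\varepsilon_{it}^4\to\infty$ in probability whenever $N\gg T^{1+\delta/4}$. So even $\sup_{i\le N}\widehat\kappa_{4,i}=O_P(1)$ can fail under Assumption~\ref{as:first-step-ols}.

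Similarly, the delta method gives distributional, not mean-square, rates. So $\sup_iE\|\widehat\kappa-\kappa\|^2=O(T^{-1})$ needs a uniform-integrability argument, not just finite eighth moments. Completing Step 3 along your route requires moment conditions beyond Assumption~\ref{as:first-step-ols}, or weaker, averaged versions of the bootstrap conditions.
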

Theorem~\ref{thm: m-estimator} shows that, under identification and CDF conditions for $\theta_{\tau}$, together with standard moment and Cram\'er's conditions for the OLS estimator and an appropriate ratio restriction, consistency, asymptotic normality, and bootstrap validity all hold in both the stochastic and deterministic cases. Overall, these are standard  conditions for Edgeworth approximations in the least-squares case.

\section{Simulation Experiments}

\label{sec: Simulation}

\subsection{Simulation Results for Sample Mean}

We present simulation results for inference on the sample mean. The DGP is
\begin{equation*}
X_{it}\overset{i.i.d.}{\sim}
\mathrm{Lognormal}\!\left(
\ln\frac{\theta_{i0}^{2}}{\sqrt{\theta_{i0}^{2}+\sigma_{i0}^{2}}},
\ln\!\left(1+\frac{\sigma_{i0}^{2}}{\theta_{i0}^{2}}\right)
\right),
\end{equation*}
so that \(E(X_{it})=\theta_{i0}\) and
\(\mathrm{Var}(X_{it})=\sigma_{i0}^{2}\). Equivalently, after standardization,
one may write \(X_{it}=\theta_{i0}+\sigma_{i0}\varepsilon_{it}\), where
\((\theta_{i0},\sigma_{i0})\) are heterogeneous across \(i\).\footnote{Although this
specification is not covered exactly by model~\eqref{eq: dgp}, the simulation
results suggest that the proposed method remains reliable in this more general
setting.}

We consider both stochastic and deterministic designs. In the stochastic design,
\(\theta_{i0}\overset{i.i.d.}{\sim}\chi_1^2\). In the deterministic design,
the heterogeneous means are set to the deterministic quantile grid from the
\(\chi^2_1\) distribution: \(\theta_{i0}=F^{-1}_{\chi^2_1}(i/(N+1))\),
\(i=1,\ldots,N\). The default quantile index is \(\tau=0.70\), and the
baseline sample sizes are \(N=T=80\), varying one dimension at a time.
Table~\ref{tab:coverage probability varying NT} reports the bias and coverage
probabilities for the two bootstrap methods, SQB and CDQB.

Panels~A and~B report the stochastic-design results. SQB is designed to
approximate the distribution that includes both the cross-sectional randomness
in \(\theta_{i0}\) and the time-series estimation uncertainty in
\(\widehat\theta_{Ti}\). Its coverage is close to the nominal level across the
reported values of \(N\) and \(T\). In Panel~A, with \(\sigma_{i0}=1\), the SQB
coverage probabilities range from \(94.49\%\) to \(96.74\%\). In Panel~B, with
heterogeneous \(\sigma_{i0}\), they range from \(94.31\%\) to \(95.71\%\). These
results support the use of SQB for the stochastic design.

By contrast, CDQB is not intended for the stochastic-design distribution. Since
it conditions on the realized heterogeneous coefficients and centers the
first-step bootstrap distribution. Therefore, in Panels~A and~B CDQB
substantially undercovers. This undercoverage is expected and should not be
interpreted as a failure of CDQB in its target deterministic-design setting.
Rather, it reflects the fact that CDQB approximates a different, conditional
distribution.

We next consider the bias \(E(\widehat{\theta}_{\tau}-\theta_{\tau})\). The bias appears to be of order \(O(1/N+1/T)\). Two lower-order components contribute to the finite-sample bias. The first is the estimation error due to the time-series variation, denoted by \(B_{1}=O(1/T)\), which is negative. The second comes from the sampling of $\theta_{i0}$, denoted by \(B_{2}=o(1/\sqrt{N})\), which is asymptotically negligible even after multiplying by the rate $\sqrt{N}$ as \(N\to\infty\) but can be positive in finite samples. These two components partially offset each other. As \(T\) increases, the negative component \(B_{1}\) shrinks toward zero, while the positive component \(B_{2}\) may remain visible in finite samples, so the bias tends to become more positive. Conversely, as \(N\) increases, the positive term \(B_{2}\) shrinks, and the negative term \(B_{1}\) becomes apparent. This explains the pattern observed in the simulations. Importantly, \(B_{2}\) is asymptotically negligible, whereas \(B_{1}\) becomes relevant under large $N$ settings. Consistent with this observation, unreported simulation results show that in large $N$ cases the bias is of order \(O(1/T)\).

\begin{table}[t!]
{\centering \resizebox{\columnwidth}{!}{%
\begin{tabular}{lccccccccc}
\hline \hline
$N$  & 40  & 40  & 40  & 80  & 80  & 80  & 160  & 160  & 160 \tabularnewline
$T$  & 40  & 80  & 160  & 40  & 80  & 160  & 40  & 80  & 160 \tabularnewline
\hline 
\multicolumn{10}{c}{Panel A: Stochastic-Design, Homo $(\sigma_{i0}=1)$.}\tabularnewline
Bias &    -0.0037 &   0.0059 &   0.0111  & -0.0130&   -0.0036&    0.0023  & -0.0185   &-0.0081 &  -0.0021\\
SQB&0.9538 &0.9498 &0.9449 &0.9623 &0.9583 &0.9477 &0.9674 &0.9549 &0.9533 \\
CDQB&0.4072 &0.3240 &0.2397 &0.4517 &0.3633 &0.2878 &0.4585 &0.3873 &0.3198 \\
\hline
\multicolumn{10}{c}{Panel B: Stochastic-Design, Hetero $(\sigma_{i0}\overset{i.i.d.}{\sim}\chi_1^2)$.}\tabularnewline
Bias &   0.0022 &   0.0084  &  0.0124   &-0.0067&   -0.0001  &  0.0036&   -0.0111 &  -0.0043&   -0.0004\\
SQB&0.9477 &0.9431 &0.9457 &0.9529 &0.9476 &0.9445 &0.9571 &0.9498 &0.9476 \\
CDQB&0.3546 &0.2941 &0.2383 &0.3851 &0.3230 &0.2656 &0.3860 &0.3340 &0.2881 \\
\hline 
\multicolumn{10}{c}{Panel C: Deterministic-Design, Homo $(\sigma_{i0}=1)$.}\tabularnewline
Bias &    -0.0146  & -0.0051 &   0.0007   &-0.0190  & -0.0092 &  -0.0029  & -0.0211   &-0.0111  & -0.0046\\
SQB&1.0000 &1.0000 &1.0000 &1.0000 &1.0000 &1.0000 &1.0000 &1.0000 &1.0000 \\
CDQB&0.9469 &0.9520 &0.9399 &0.9369 &0.9299 &0.9469 &0.9269 &0.9520 &0.9489 \\
\hline 
\multicolumn{10}{c}{Panel D: Deterministic-Design, Hetero $(\sigma_{i0}=F^{-1}_{\chi^2_1}(\frac{i}{N+1}))$.}\tabularnewline
Bias  &     -0.0001 &   0.0029  &  0.0045 &  -0.0039   &-0.0004  &  0.0014 &  -0.0057 &  -0.0019  & -0.0001\\
SQB&1.0000 &1.0000 &1.0000 &1.0000 &1.0000 &1.0000 &1.0000 &1.0000 &1.0000 \\
CDQB&0.9433 &0.9395 &0.9411 &0.9351 &0.9431 &0.9439 &0.9407 &0.9442 &0.9446 \\
\hline \hline
\end{tabular}} \caption{\textbf{Bias and coverage probabilities for sample mean model, varying $N$ and
$T$.} The default setting is $\tau=0.70$. Results are based on 10,000
replications. The predetermined significance level is 5\%.}
\label{tab:coverage probability varying NT}} 
\end{table}

Panels~C and~D of Table~\ref{tab:coverage probability varying NT} report the
deterministic-design results. In this design,  SQB is  too conservative because it
incorporates an additional cross-sectional sampling component that is not part
of the deterministic-design distribution. 

CDQB is the appropriate procedure for the deterministic design. In Panel~C,
where \(\sigma_{i0}=1\), CDQB coverage ranges from \(92.69\%\) to \(95.20\%\).
In Panel~D, with deterministic heterogeneity in \(\sigma_{i0}\), CDQB coverage
ranges from \(93.51\%\) to \(94.46\%\). Thus, CDQB provides reliable inference across the reported
values of \(N\) and \(T\). The results are also stable under scale
heterogeneity, indicating that the performance of CDQB is not driven by the
homoskedastic design.

\begin{table}[t!]
{\centering \resizebox{\columnwidth}{!}{%
\begin{tabular}{lccccccccc}
\hline \hline
$\tau$  & 0.10  & 0.20  & 0.30  & 0.40  & 0.50  & 0.60  & 0.70  & 0.80  & 0.90 \tabularnewline
\hline 
\multicolumn{10}{c}{Panel A: Stochastic-Design, Homo $(\sigma_{i0}=1)$.}\tabularnewline
Bias &    -0.0044   &-0.0080  & -0.0040   & 0.0056 &   0.0163  &  0.0227 &   0.0234   & 0.0232  &  0.0241\\
SQB&0.8163 &0.8876 &0.9328 &0.9510 &0.9552 &0.9575 &0.9550 &0.9459 &0.9399 \\
CDQB&0.3543 &0.4205 &0.4579 &0.4552 &0.4394 &0.4110 &0.3694 &0.2999 &0.2153 \\
\hline 
\multicolumn{10}{c}{Panel B: Stochastic-Design, Hetero $(\sigma_{i0}\overset{i.i.d.}{\sim}\chi^2_1)$.}\tabularnewline
Bias &   -0.0029 &  -0.0039   &-0.0001  &  0.0059   & 0.0127    &0.0174   & 0.0205  &  0.0233  &  0.0257\\
SQB&0.8711 &0.9170 &0.9428 &0.9484 &0.9543 &0.9525 &0.9498 &0.9465 &0.9403 \\
CDQB&0.3791 &0.4189 &0.4240 &0.4076 &0.3853 &0.3558 &0.3256 &0.2834 &0.2218 \\
\hline 
\multicolumn{10}{c}{Panel C: Deterministic-Design, Homo $(\sigma_{i0}=1)$.}\tabularnewline
Bias &  -0.0021 &  -0.0032  & -0.0029   &-0.0023   &-0.0012 &  -0.0001  &  0.0007  &  0.0032   & 0.0060 \\
SQB&0.9911 &0.9998 &1.0000 &1.0000 &1.0000 &1.0000 &1.0000 &1.0000 &1.0000 \\
CDQB&0.2942 &0.5915 &0.8107 &0.9027 &0.9270 &0.9362 &0.9448 &0.9529 &0.9439 \\
\hline 
\multicolumn{10}{c}{Panel D: Deterministic-Design, Hetero $(\sigma_{i0}=F^{-1}_{\chi^2_1}(\frac{i}{N+1}))$.}\tabularnewline
Bias  &    -0.0076  & -0.0137   &-0.0118   &-0.0033  &  0.0064  &  0.0110 &   0.0092  &  0.0061  &  0.0040\\
SQB&1.0000 &1.0000 &1.0000 &1.0000 &1.0000 &1.0000 &1.0000 &1.0000 &1.0000 \\
CDQB&0.7633 &0.8530 &0.9036 &0.9191 &0.9336 &0.9378 &0.9421 &0.9473 &0.9456 \\
\hline \hline
\end{tabular}} \caption{\textbf{Bias and coverage probabilities for sample mean model, varying $\tau$.}
The default setting is $N=80$ and $T=80$. Results are based on 10,000
replications. The predetermined significance level is 5\%.}
\label{tab:coverage probability varying tau}} 
\end{table}

Table~\ref{tab:coverage probability varying tau} reports sensitivity to the
quantile index \(\tau\) with \(N=T=80\). The stochastic-design panels again
show that SQB is the appropriate procedure when \(\theta_{i0}\)'s are randomly
drawn. SQB performs well for central and upper quantiles, especially for
\(\tau\ge 0.30\), but coverage is lower for the lower tail. For example, in
Panel~A the SQB coverage probability is \(81.63\%\) at \(\tau=0.10\), while it
is close to the nominal level for \(\tau\in[0.40,0.90]\). A similar pattern
appears in Panel~B, although the heterogeneous-scale design improves coverage
at the lowest quantiles.

In the deterministic-design panels, CDQB performs well for central and upper
quantiles, but coverage deteriorates in the lower tail. This deterioration is
particularly visible in Panel~C, where CDQB coverage is \(29.42\%\) at
\(\tau=0.10\) and \(59.15\%\) at \(\tau=0.20\), before improving to \(90.27\%\)
at \(\tau=0.40\) and remaining close to the nominal level for larger values of
\(\tau\). Panel~D shows the same qualitative pattern but with substantially
better lower-tail performance. The weaker performance at low quantiles is not
surprising in this DGP because the \(\chi^2_1\) distribution is highly skewed
and the density behavior near the lower tail makes the corresponding quantiles
more difficult to estimate accurately in finite samples. Overall, the table
shows that CDQB provides reliable deterministic-design inference for central
and upper quantiles, while it is less reliable in extreme lower tails for highly skewed designs. In applications involving extreme quantiles, the CDQB intervals should therefore be interpreted cautiously.

Taken together, Tables~\ref{tab:coverage probability varying NT} and
\ref{tab:coverage probability varying tau} support the theoretical distinction
between the two bootstrap procedures. SQB should be used for stochastic designs,
where the randomness of the heterogeneous coefficients is part of the target
distribution. CDQB should be used for deterministic designs, where inference is
conditional on the realized sequence of heterogeneous coefficients. Using SQB
in deterministic designs leads to conservative inference, while using CDQB in
stochastic designs omits the cross-sectional sampling variation and therefore
understates uncertainty.

\subsection{Simulation Results for Least Squares Estimators}

As an important application, we next consider inference in a regression
model with a scalar dependent variable $y_{it}$ and $K$ regressors
$\bm{z}_{it}\in\mathbb{R}^{K}$, where $\bm{z}_{it}=(z_{it,k})_{k=1}^{K}$
and $z_{it,1}=1$ denotes the intercept. The model is specified as
\begin{align}
y_{it}=\sum_{k=1}^{K-1}\beta_{i0,k}z_{it,k}+\theta_{i0}z_{it,K}+\varepsilon_{it},\qquad E(\bm{z}_{it}\varepsilon_{it})=0.
\end{align}
The parameter of interest is the $\tau$th quantile of the heterogeneous
slope coefficient $\theta_{i0}$ on the last regressor $z_{it,K}$,
whose distribution across individuals $i$ characterizes the extent
of heterogeneity in the sensitivity of $y_{it}$ to this regressor.

To enhance robustness and avoid overreliance on specific distributional
assumptions, we adopt a different distribution for $\theta_{i0}$
than that used in Section~\ref{sec: Simulation}. In
the stochastic design, we generate $\theta_{i0}\overset{i.i.d.}{\sim}\Phi$,
where $\Phi$ denotes the standard normal cumulative distribution
function. In the deterministic design, we fix the heterogeneity pattern
by setting $\theta_{i0}=\Phi^{-1}(i/(N+1))$ for $i=1,\ldots,N$,
which evenly spans the support of $\Phi^{-1}$ across individuals.
For each $i$, we compute the corresponding least squares estimator
$\widehat{\theta}_{Ti}$ from the regression of $y_{it}$ on $\bm{x}_{it}$
over $t=1,\ldots,T$.

 We set $K=10$ to ensure a moderate-dimensional regression
design that avoids the degenerate case of a constant term and a single
regressor. This choice follows the recommendation of MacKinnon, Nielsen,
and Webb (\citeyear{mackinnon2023fast}), who emphasize that very
small values of $K$ (e.g., $K=2$) can lead to spuriously optimistic
finite-sample performance. A relatively larger $K$ introduces realistic
estimation noise and more variability in the estimated coefficients,
thereby providing a more stringent test for the proposed quantile
inference procedures.

\begin{table}[t!]
{\centering \resizebox{\columnwidth}{!}{%
\begin{tabular}{lccccccccc}
\hline \hline
$N$  & 40  & 40  & 40  & 80  & 80  & 80  & 160  & 160  & 160 \tabularnewline
$T$  & 40  & 80  & 160  & 40  & 80  & 160  & 40  & 80  & 160 \tabularnewline
\hline 
\multicolumn{10}{c}{Panel A: Stochastic-Design, Homo ($\beta_{i0,k}=1$, for each $i,k$).}\tabularnewline
Bias &    0.0048 &  -0.0007  & -0.0011  &  0.0078 &   0.0019 &  -0.0004  &  0.0076   & 0.0032   & 0.0018 \\
SQB&0.9550 &0.9392 &0.9385 &0.9607 &0.9532 &0.9487 &0.9617 &0.9602 &0.9540 \\
CDQB&0.4494 &0.3271 &0.2376 &0.4686 &0.3668 &0.2926 &0.4869 &0.3901 &0.3116 \\
\hline 
\multicolumn{10}{c}{Panel B: Stochastic-Design, Hetero ($\beta_{i0,k}=\Phi^{-1}(i/(N+1))$,
for each $i,k$).}\tabularnewline
Bias & 0.0040 &   0.0007 &  -0.0002  &  0.0076 &   0.0036 &   0.0005  &  0.0081  &  0.0036 &   0.0011\\
SQB&0.9575 &0.9477 &0.9402 &0.9635 &0.9540 &0.9505 &0.9647 &0.9575 &0.9495 \\
CDQB&0.4636 &0.3233 &0.2318 &0.4686 &0.3786 &0.2926 &0.4941 &0.3946 &0.3076 \\
\hline 
\multicolumn{10}{c}{Panel C: Deterministic-Design, Homo ($\beta_{i0,k}=1$, for each $i,k$).}\tabularnewline
Bias  &-0.0049&   -0.0095  & -0.0119 &   0.0019 &  -0.0034  & -0.0052    &0.0051 &   0.0002 &  -0.0017\\
SQB&1.0000 &1.0000 &1.0000 &1.0000 &1.0000 &1.0000 &1.0000 &1.0000 &1.0000 \\
CDQB&0.9577 &0.9510 &0.9505 &0.9540 &0.9497 &0.9470 &0.9475 &0.9400 &0.9495 \\
\hline 
\multicolumn{10}{c}{Panel D: Deterministic-Design, Hetero ($\beta_{i0,k}=\Phi^{-1}(i/(N+1))$,
for each $i,k$).}\tabularnewline
Bias &  -0.0048  & -0.0101&   -0.0122 &   0.0019  & -0.0032  & -0.0053 &   0.0053 &   0.0002 &  -0.0018 \\
SQB&1.0000 &1.0000 &1.0000 &1.0000 &1.0000 &1.0000 &1.0000 &1.0000 &1.0000 \\
CDQB&0.9617 &0.9557 &0.9527 &0.9712 &0.9550 &0.9465 &0.9597 &0.9592 &0.9507 \\
\hline \hline
\end{tabular}} \caption{\textbf{Bias and Coverage probability for regression model, varying $N$ and
$T$.} The default setting is $\tau=0.70$ and $K=10$. Results are
based on 3,999 replications. The predetermined significance level
is 5\%.}
\label{tab:coverage probability varying NT, ols}} 
\end{table}

The simulation results are reported in Table~\ref{tab:coverage probability varying NT, ols}.
Overall, the patterns closely resemble those observed for the mean
parameter in Section~\ref{sec: Simulation}. Under the
stochastic design, SQB exhibits satisfactory 
performance.
Its behavior is nearly identical across the homoskedastic and heteroskedastic
cases. Under the deterministic design, where the heterogeneity
pattern of $\theta_{i0}$ is fixed across individuals, CDQB performs
relatively better.

Overall, in both sample mean and least square experiments, the results confirm the design-specific nature of the two bootstrap
procedures. SQB performs reasonably well under the stochastic design, with coverage
close to the nominal level in most case. In contrast, CDQB  performs better in the deterministic design. Applying a bootstrap procedure designed
for the wrong source of uncertainty can lead to substantial overcoverage or
undercoverage.

\section{Empirical Application}
\label{sec: Empirical Studies}
We examine mutual fund performance through the cross-sectional quantiles
of fund-specific coefficients. While prior studies document substantial
heterogeneity in fund skill; see, e.g., Kaplan and Schoar
(\citeyear{kaplan2005private}), Kosowski et al.
(\citeyear{kosowski2006can}), Fama and French
(\citeyear{fama2010luck}), and Hounyo and Lin
(\citeyear{hounyo2025can}), they do not directly study the cross-sectional
pattern of managerial skill. Quantiles are useful because they indicate
whether heterogeneity is broad-based or concentrated in particular parts of
the distribution.

Motivated by concerns about misspecification in standard performance
models, we estimate the timing-augmented factor model
\begin{align}
r_{it}
&=
\alpha_i+\beta_{i,1}RMRF_t+\beta_{i,2}SMB_t+\beta_{i,3}HML_t
+\beta_{i,4}MOM_t+\gamma_{i,1}RMRF_t^2 \nonumber\\
&\quad
+\gamma_{i,2}\big[(V_t-\overline V)RMRF_t\big]
+\gamma_{i,3}\big[(L_t-\overline L)RMRF_t\big]
+\varepsilon_{it},
\label{Eq_regression-Project 2 -Future Research}
\end{align}
where \(r_{it}\) is fund \(i\)'s excess return. The factors
\(RMRF_t\), \(SMB_t\), and \(HML_t\) are from Fama and French
\citeyearpar{fama1993common}, and \(MOM_t\) is from Carhart
\citeyearpar{carhart1997persistence}. Volatility \(V_t\) is computed from
daily demeaned market excess returns within each month, and liquidity
\(L_t\) follows P{\'a}stor and Stambaugh
\citeyearpar{pastor2003liquidity}.\footnote{We obtain qualitatively
similar results using the Amihud \citeyearpar{amihud2002illiquidity}
measure.}  The rolling averages \(\overline V\) and \(\overline L\) are
computed using the past 60 months.  The common factor regressors are treated as observed common covariates as required by Assumption \ref{as: iid}. 

The timing coefficients \(\gamma_{i,1}\), \(\gamma_{i,2}\), and
\(\gamma_{i,3}\) capture return, volatility, and liquidity timing,
respectively. Positive \(\gamma_{i,1}\), negative \(\gamma_{i,2}\), and
positive \(\gamma_{i,3}\) indicate successful timing ability. Let
\[
\bm{\theta}_{i0}
=
(\alpha_i,\beta_{i,1},\beta_{i,2},\beta_{i,3},\beta_{i,4},
\gamma_{i,1},\gamma_{i,2},\gamma_{i,3})^\top .
\]
We estimate \(\widehat{\bm{\theta}}_{Ti}\) fund by fund and compute
cross-sectional coefficient quantiles for \(\tau\in[0.01,0.99]\), with
95\% confidence
intervals constructed using SQB and CDQB.

The sample is a balanced monthly panel of \(N=187\) actively managed
mutual funds over \(T=228\) months, from January 1984 to December 2002,
drawn from the CRSP Survivor-Bias-Free U.S. Mutual Fund Database.\footnote{Although the growth rates are moderate relative to the sufficient condition \(T^{1/2}\ll N\ll T^{3/2}/(\log T)^2\), this asymptotic restriction is not a sharp finite-sample cutoff. The sample satisfies the main scale requirement \(T^{1/2}<N<T^{3/2}\), so the deterministic-design approximation remains informative.} We
exclude index funds, following Ferson and Lin
\citeyearpar{ferson2014alpha} and Busse and Tong
\citeyearpar{busse2012mutual}, and remove funds with total net assets
below \$15 million to mitigate incubation bias, as recommended by Elton
et al. \citeyearpar{elton2001first}. Factor returns are obtained from
Ken French's data library.\footnote{\url{https://mba.tuck.dartmouth.edu/pages/faculty/ken.french/data_library.html}}

Figure~\ref{fig: first four beta} reports the quantile estimates and
confidence intervals for the intercept and timing parameters. The solid
line gives the point estimates, while the dashed and dotted lines report
the stochastic- and deterministic-design intervals, respectively.
Consistent with the simulations, the stochastic-design intervals are
generally wider because they account for additional cross-sectional
sampling uncertainty.

\begin{figure}[t!]
\centering
\begin{subfigure}[t]{0.48\textwidth}
\subcaption{$\widehat{\alpha}_{\tau}$}
\includegraphics[width=1\textwidth]{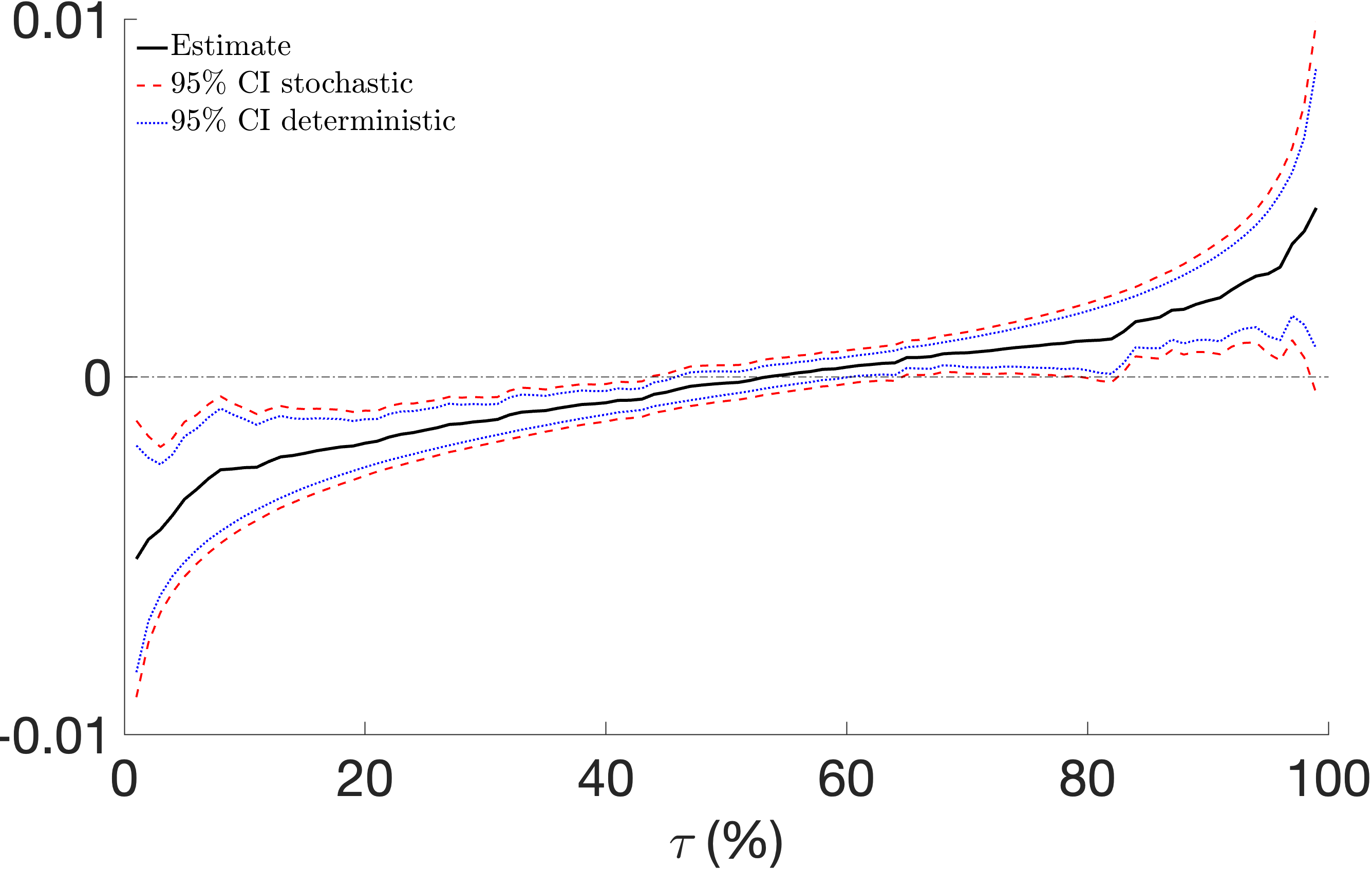}
\end{subfigure}
\begin{subfigure}[t]{0.48\textwidth}
\subcaption{$\widehat{\gamma}_{1,\tau}$}
\includegraphics[width=1\textwidth]{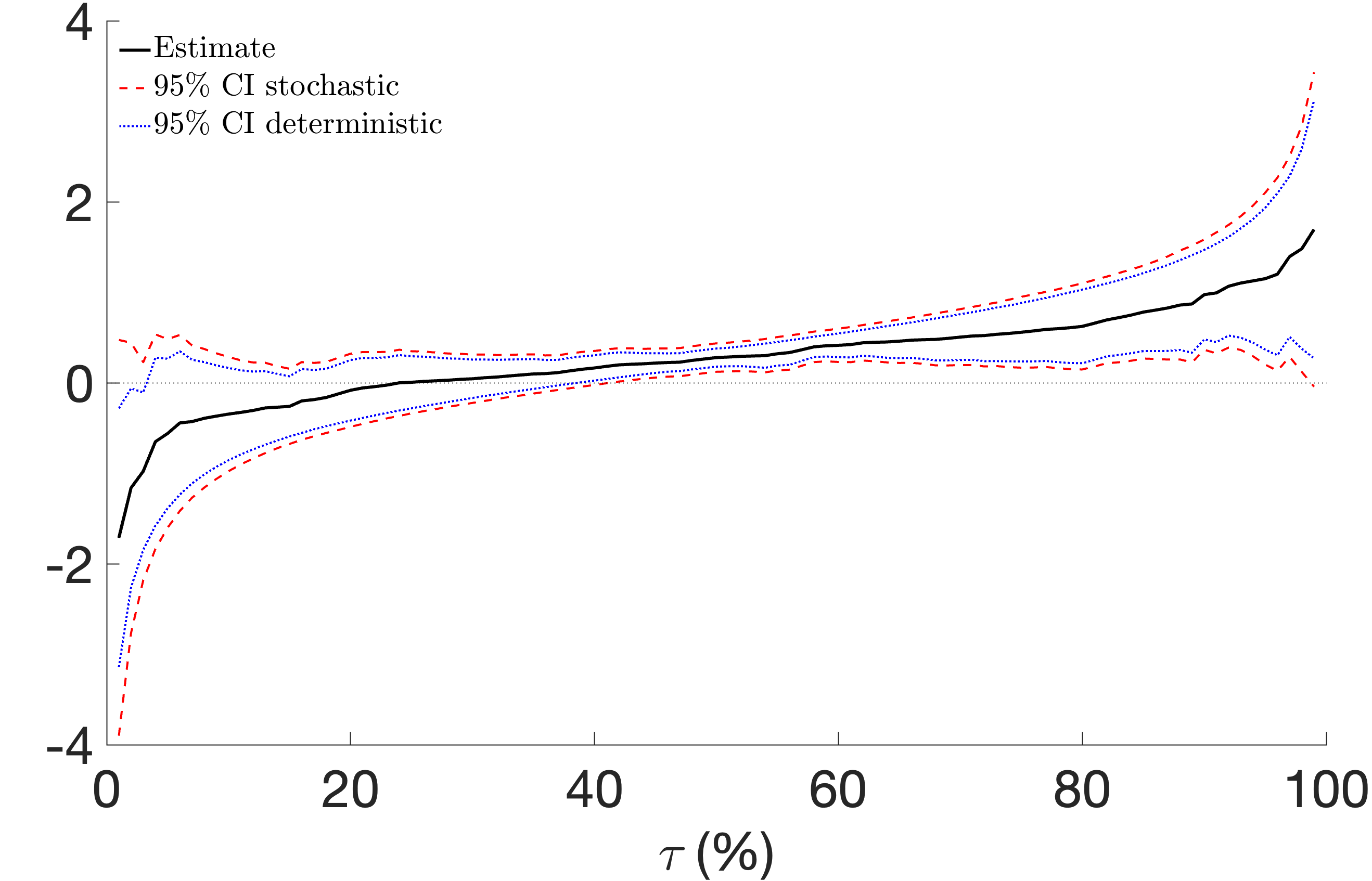}
\end{subfigure}

\centering
\begin{subfigure}[t]{0.48\textwidth}
\subcaption{$\widehat{\gamma}_{2,\tau}$}
\includegraphics[width=1\textwidth]{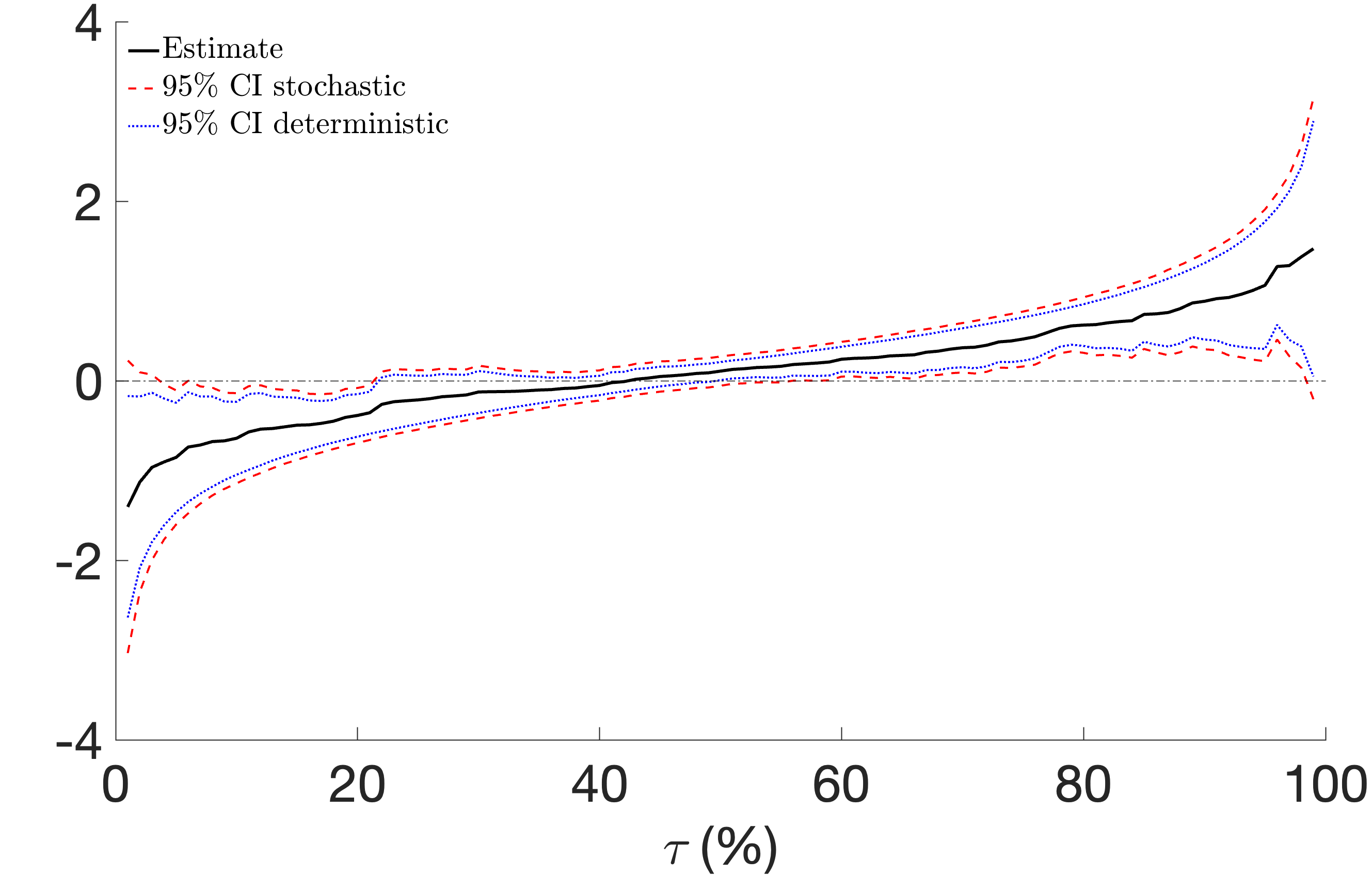}
\end{subfigure}
\begin{subfigure}[t]{0.48\textwidth}
\subcaption{$\widehat{\gamma}_{3,\tau}$}
\includegraphics[width=1\textwidth]{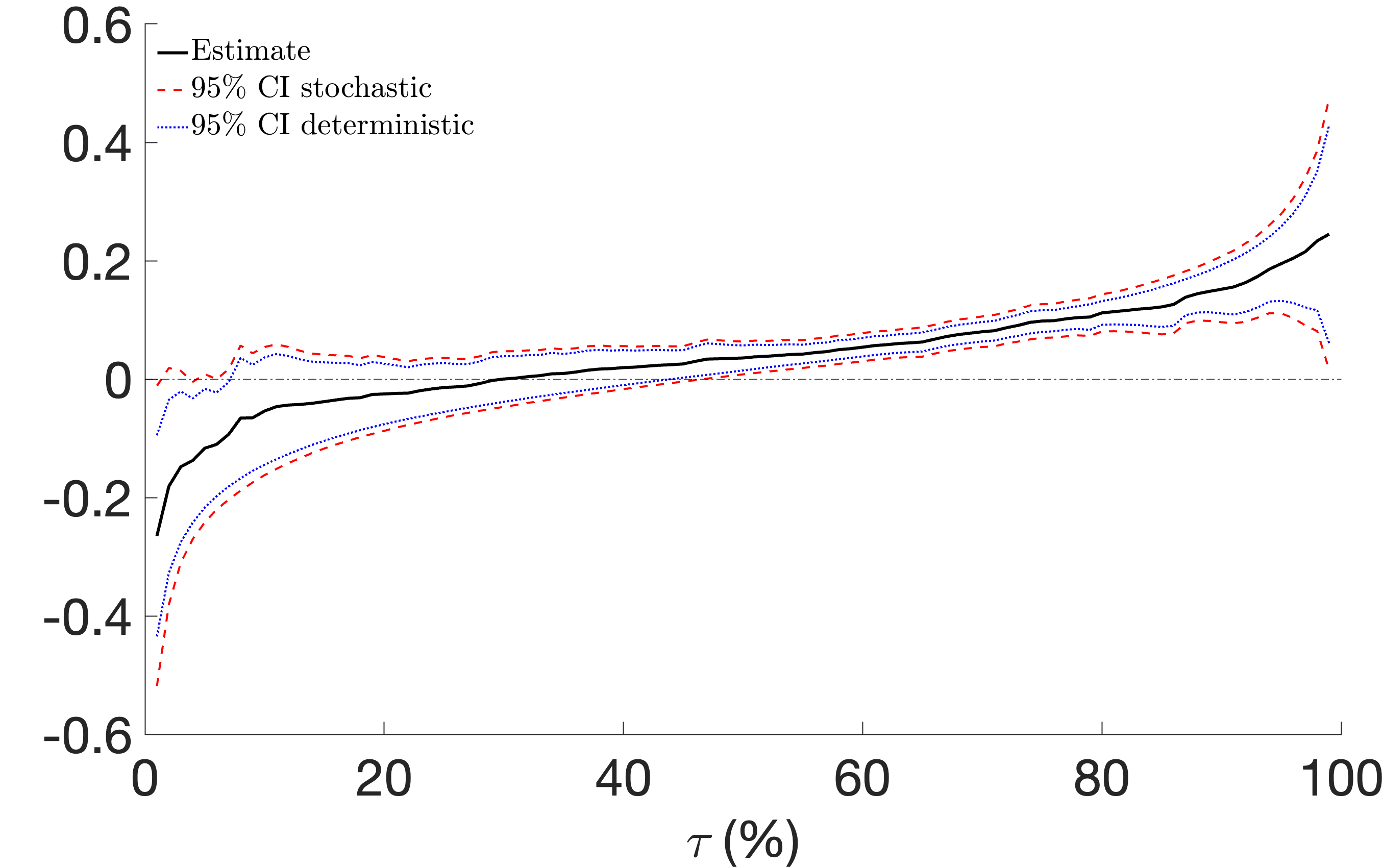}
\end{subfigure}
\caption{\textbf{Estimated quantiles and corresponding confidence intervals for intercept and timing ability parameters.}}
\label{fig: first four beta}
\end{figure}

Panel~(a) shows that \(\widehat{\alpha}_{\tau}\) is centered near zero,
suggesting little abnormal performance for the median fund, although the
upper tail indicates positive alphas for a small group of top-performing
funds. At the 90th percentile, the lower bounds of the 95\% SQB and CDQB
intervals are \(0.05\%\) and \(0.09\%\) per month, respectively. The
approximately symmetric shape around \(\tau=0.5\), together with a
Kolmogorov-Smirnov \(p\)-value of \(0.15\) for variance-rescaled
\(\widehat{\alpha}_{\tau}\), suggests no significant departure from
normality with zero mean.

Panels~(b)-(d) summarize timing ability. Return-timing estimates
\(\widehat{\gamma}_{1,\tau}\) become positive above the midrange, with the
lower confidence bounds crossing zero around the 40th percentile. This
suggests that roughly 60\% of funds exhibit statistically positive
return-timing ability. Volatility timing, measured by
\(\widehat{\gamma}_{2,\tau}\), is more limited: since successful volatility
timing corresponds to negative coefficients, only a small lower-tail group
shows statistically significant ability.\footnote{A negative volatility timing coefficient reflects reduced exposure in high-volatility periods (Busse \citeyearpar{busse1999volatility}).} Liquidity timing,
\(\widehat{\gamma}_{3,\tau}\), is more widespread, with positive estimates
over a large range of quantiles and lower confidence bounds crossing zero
near the median.

Figure~\ref{fig: last four beta} reports the quantile estimates for the
four traditional Fama-French-Carhart factor loadings. The market loading
\(\widehat{\beta}_{1,\tau}\) is positive and precisely estimated throughout,
confirming that \(RMRF\) is the dominant driver of mutual fund excess
returns. The SMB loading \(\widehat{\beta}_{2,\tau}\) displays a pronounced
J-shaped pattern, indicating stronger and more dispersed small-firm
exposure in the upper quantiles. The HML loading
\(\widehat{\beta}_{3,\tau}\) is negative below the 40th percentile and
positive above it, suggesting heterogeneous growth and value tilts across
funds. The momentum loading $\widehat{\beta}_{4,\tau}$ is negative at lower quantiles but turns positive in the upper tail for a small fraction of funds, reflecting limited yet heterogeneous momentum exposure.

\begin{figure}[t!]
\centering
\begin{subfigure}[t]{0.48\textwidth}
\subcaption{$\widehat{\beta}_{1,\tau}$}
\includegraphics[width=1\textwidth]{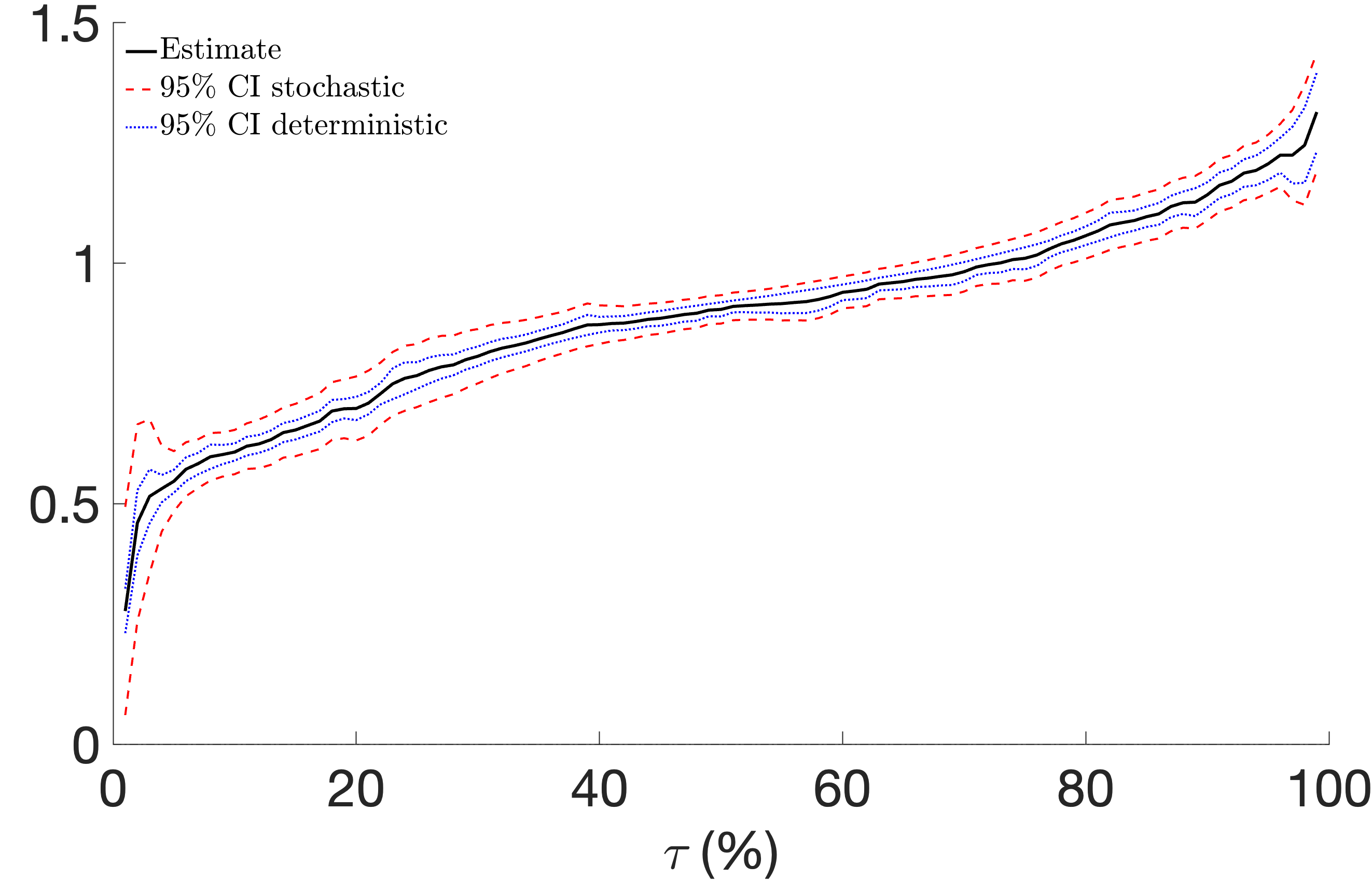}
\end{subfigure}
\begin{subfigure}[t]{0.48\textwidth}
\subcaption{$\widehat{\beta}_{2,\tau}$}
\includegraphics[width=1\textwidth]{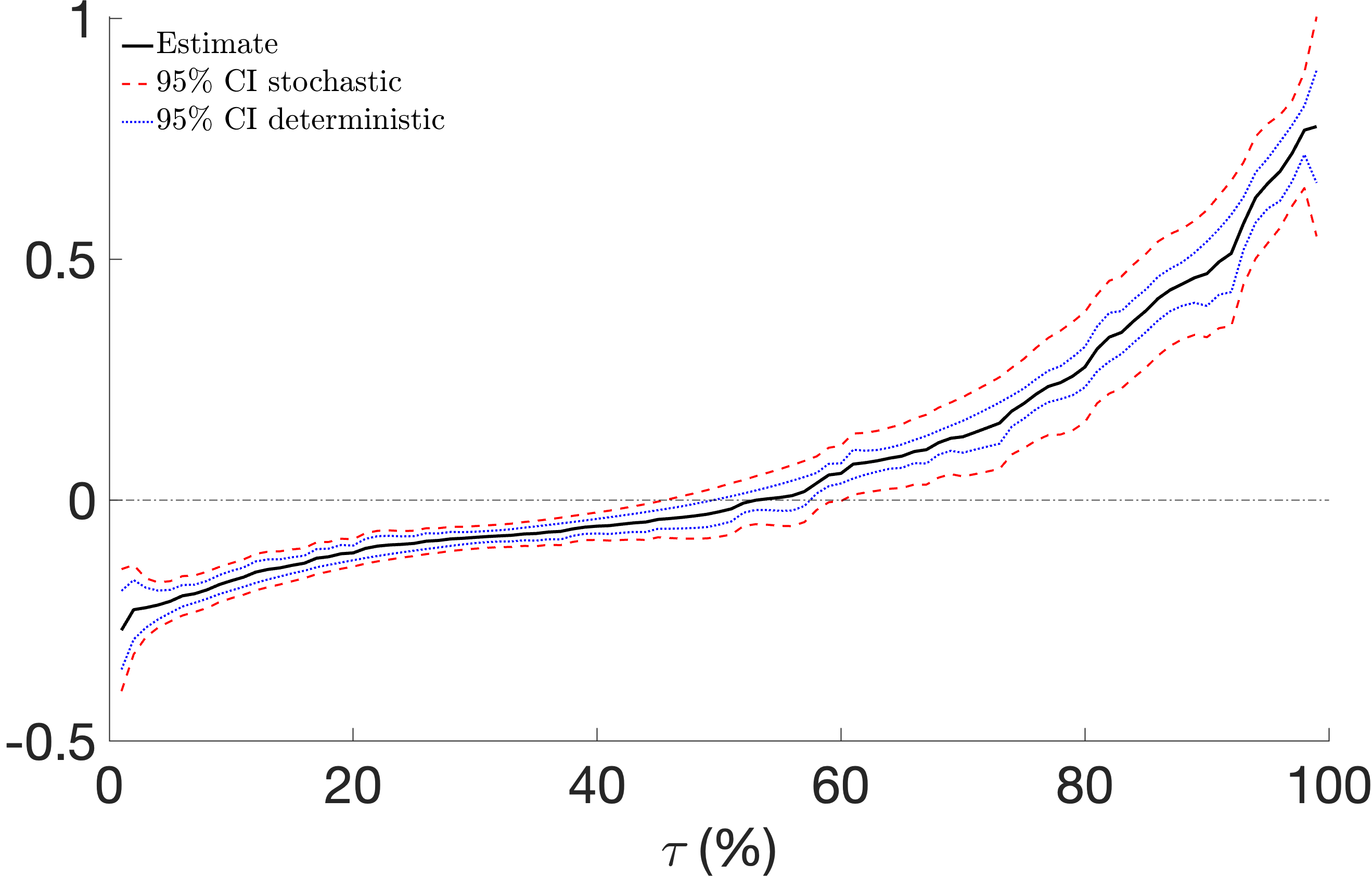}
\end{subfigure}

\centering
\begin{subfigure}[t]{0.48\textwidth}
\subcaption{$\widehat{\beta}_{3,\tau}$}
\includegraphics[width=1\textwidth]{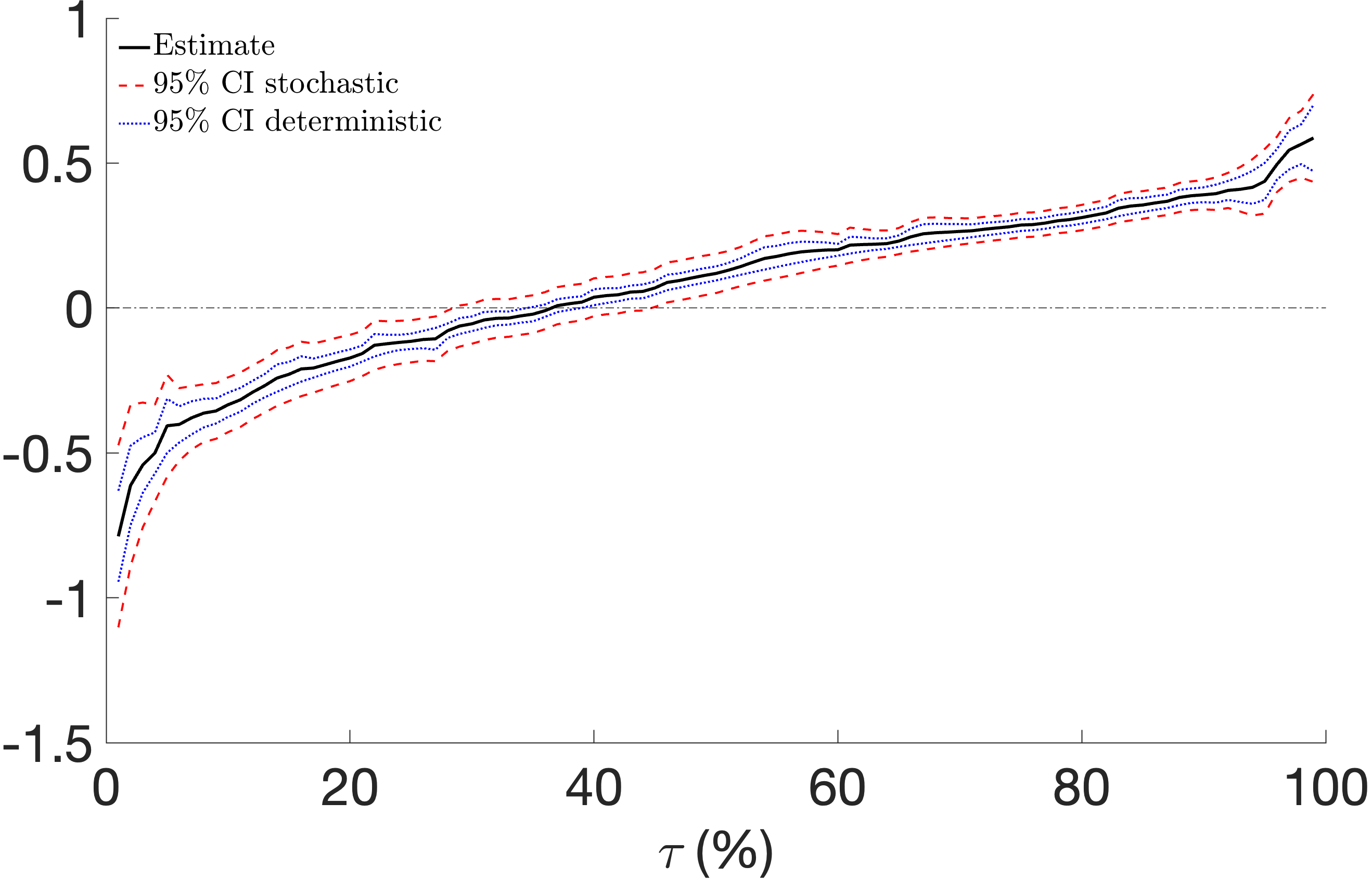}
\end{subfigure}
\begin{subfigure}[t]{0.48\textwidth}
\subcaption{$\widehat{\beta}_{4,\tau}$}
\includegraphics[width=1\textwidth]{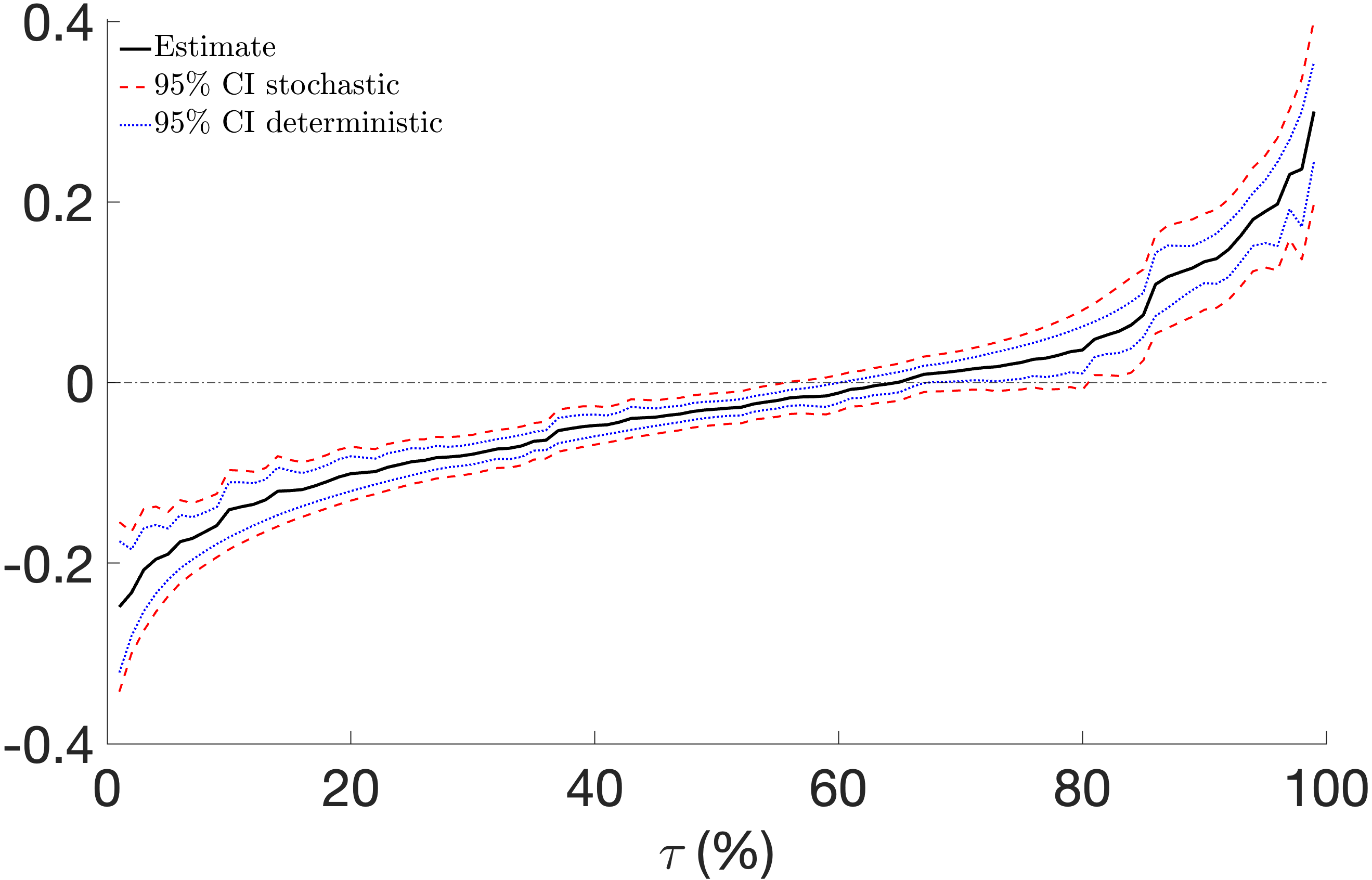}
\end{subfigure}
\caption{\textbf{Estimated quantiles and corresponding confidence intervals for the standard factor loadings.}}
\label{fig: last four beta}
\end{figure}

Overall, abnormal performance and volatility-timing ability appear limited,
whereas return- and liquidity-timing behavior is more widespread. The
factor loadings also display asymmetric cross-sectional heterogeneity,
highlighting diverse style exposures and risk profiles across funds.

\section{Conclusion}

\label{sec: Conclusion} This paper proposes a two-step framework for estimating and conducting
inference on the $\tau$-quantile of individual-specific heterogeneous
coefficients in panel data. Unlike conventional panel quantile regression,
where $\tau$ indexes heterogeneity in the conditional distribution of the
outcome variable, the proposed approach uses $\tau$ to summarize the
cross-sectional distribution of structural coefficients across individuals.

We establish large-sample theory under both stochastic and deterministic
designs. In the stochastic-design case, where individual coefficients are
viewed as random draws from a population, the estimator is
$\sqrt{N}$-consistent and asymptotically normal under a relatively mild
growth condition. In the deterministic-design case, where the observed
individuals form the population of interest, the estimator converges at the
rate $\sqrt{N\sqrt{T}}$, reflecting the absence of cross-sectional sampling
uncertainty. We also develop two bootstrap procedures, SQB and CDQB, and
show that they consistently approximate the corresponding limiting
distributions.

The simulations support the theoretical results: SQB performs well under
stochastic designs, while CDQB provides stable inference under deterministic
designs, especially as the time dimension grows. The empirical application
to mutual fund performance further illustrates the usefulness of the method
for studying cross-sectional heterogeneity in fund-specific coefficients.
The results reveal asymmetric heterogeneity in return- and liquidity-timing
abilities, while abnormal-return and volatility-timing heterogeneity appear
more limited.

Overall, the proposed framework complements existing panel QR methods by
shifting attention from outcome heterogeneity to heterogeneity in structural
effects. Future work may extend the analysis to settings with serial or
cross-sectional dependence, as well as to dynamic panel models.

\appendix \setcounter{section}{0}

\setcounter{equation}{0} \setcounter{assumption}{0}
\setcounter{figure}{0} \setcounter{table}{0}

 \renewcommand{%
\theequation}{A.\arabic{equation}}
\renewcommand{\thelemma}{A.\arabic{lemma}}
\renewcommand{\theassumption}{A.\arabic{assumption}} \renewcommand{%
\thetheorem}{A.\arabic{theorem}}
\renewcommand{\thetable}{A.\arabic{table}}
\renewcommand{\thefigure}{A.\arabic{figure}}
\renewcommand{\thesubsection}{A.\arabic{subsection}}

\section*{A. Proofs of Main Results}

To simplify notation, we omit the superscripts \(\mathrm{S}\) and \(\mathrm{D}\) in the proofs below. This should not lead to any ambiguity, as the stochastic-design and deterministic-design results are established in separate arguments.

\subsection{Proof of Theorem \protect\ref{thm: theta_tau consistent}}
\begin{proof}
Under Assumptions \ref{as: thetai iid}, $\{\theta_{i0}\}_{i=1}^{N}$
are i.i.d.\ and each $\widehat{\theta}_{Ti}$ is a measurable function
of $\{\bm{X}_{it}\}_{t=1}^{T}$. Hence $\{\widehat{\theta}_{Ti}\}_{i=1}^{N}$
are i.i.d.\ across $i$ when $\bm Z_{it}$ is i.i.d. over $i$ and $t$. If $\bm Z_{it}$ is common over $i$, then $\{\widehat{\theta}_{Ti}\}_{i=1}^{N}$
are i.i.d.\ across $i$ conditional on $\bm Z_{it}$. Therefore,
by a uniform LLN or a conditional uniform LLN with law of iterated probability, for any $\varepsilon>0$, 
\begin{equation*}
P\!\left(\sup_{\theta\in\Theta}\left|\frac{1}{N}\sum_{i=1}^{N}\rho_{\tau}(\widehat{\theta}_{Ti}-\theta)-E\big[\rho_{\tau}(\widehat{\theta}_{Ti}-\theta)\big]\right|>\varepsilon\right)\to0.
\end{equation*}
It remains to show that $E[\rho_{\tau}(\widehat{\theta}_{Ti}-\theta)]\to E[\rho_{\tau}(\theta_{i0}-\theta)]$
uniformly in $\theta$.

Fix $i$ and define 
\begin{equation*}
z_{i}:=\frac{\sqrt{T}(\widehat{\theta}_{Ti}-\theta_{i0})}{\sigma_{i}},
\end{equation*}
where $\sigma_{i}^{2}=\lim_{T\to\infty}Var\left(\sqrt{T}\widehat{\theta}_{Ti}\vert\theta_{i0}\right)\in(0,\infty)$
is the standard asymptotic variance, so that $\widehat{\theta}_{Ti}=\theta_{i0}+\sigma_{i}z_{i}/\sqrt{T}$.
Using $\rho_{\tau}(u)=u(\tau-\mathbf{1}\{u\le0\})$, conditional on
$\theta_{i0}$ we have 
\begin{align*}
E\!\left[\rho_{\tau}(\widehat{\theta}_{Ti}-\theta)\mid\theta_{i0}\right]= & E\left[(\widehat{\theta}_{Ti}-\theta)\tau\mid\theta_{i0}\right]-E\left[(\widehat{\theta}_{Ti}-\theta)\mathbf{1}\{\widehat{\theta}_{Ti}\le\theta\}\Big)\mid\theta_{i0}\right]\\
= & (\theta_{i0}-\theta)\tau-\int_{-\infty}^{a_{T,i}(\theta,\theta_{i0})}\Big(\tfrac{\sigma_{i}}{\sqrt{T}}x+\theta_{i0}-\theta\Big)\,dF_{z_{i}\mid\theta_{i0}}(x),
\end{align*}
where $a_{T,i}(\theta,\theta_{i0}):=\sqrt{T}(\theta-\theta_{i0})/\sigma_{i}$.

Conditional on $\theta_{i0}$, by Assumption \ref{as: high level random-1}(i), 
we have $F_{z_{i}\mid\theta_{i0}}(x)\to\Phi(x)$ at continuity points.
Applying the Portmanteau lemma (notice that the function $g_{T,i}\left(x\right)=\left(\tfrac{\sigma_{i}}{\sqrt{T}}x+\theta_{i0}-\theta\right)\mathbf{1}\left\{ x\le a_{T,i}(\theta,\theta_{i0})\right\} $
is continuous at $x=a_{T,i}(\theta,\theta_{i0})$) with Assumption \ref{as: high level random-1}(i) yields 
\begin{equation*}
E\left[\int_{-\infty}^{a_{T,i}}\Big(\tfrac{\sigma_{i}}{\sqrt{T}}x+\theta_{i0}-\theta\Big)\,dF_{z_{i}\mid\theta_{i0}}(x)\right]=E\left[\int_{-\infty}^{a_{T,i}}\Big(\tfrac{\sigma_{i}}{\sqrt{T}}x+\theta_{i0}-\theta\Big)\,d\Phi(x)\right]+o(1),
\end{equation*}
and the Gaussian integral is explicit: 
\begin{equation*}
\int_{-\infty}^{a_{T,i}}\Big(\tfrac{\sigma_{i}}{\sqrt{T}}x+\theta_{i0}-\theta\Big)\,d\Phi(x)=-\tfrac{\sigma_{i}}{\sqrt{T}}\phi(a_{T,i})+(\theta_{i0}-\theta)\Phi(a_{T,i}).
\end{equation*}
Substituting back and taking expectation over $\theta_{i0}$ gives
\begin{align}
E\big[\rho_{\tau}(\widehat{\theta}_{Ti}-\theta)\big] & =E\left(E\!\left[\rho_{\tau}(\widehat{\theta}_{Ti}-\theta)\mid\theta_{i0}\right]\right)\nonumber \\
 & =E\Big[(\theta_{i0}-\theta)\big(\tau-\Phi(a_{T,i}(\theta,\theta_{i0}))\big)\Big]+\frac{1}{\sqrt{T}}E\!\left[\sigma_{i}\left(\theta_{i0}\right)\phi(a_{T,i}(\theta,\theta_{i0}))\right]+o(1).\label{eq:ETcheck_expansion}
\end{align}
The $\phi(\cdot)$-term is $O(T^{-1/2})$ uniformly in $\theta\in\Theta$
because $\phi\le(2\pi)^{-1/2}$ and $\sigma_{i}<\infty$ under Assumption
\ref{as: high level random-1}(ii); hence it vanishes.

It remains to replace $\Phi(a_{T,i}(\theta,\theta_{i0}))$ by $\mathbf{1}\{\theta_{i0}\le\theta\}$.
For each fixed $\theta$, $\,\Phi(a_{T,i}(\theta,\theta_{i0}))\to\mathbf{1}\{\theta_{i0}\le\theta\}$
pointwise in $\theta_{i0}$, and the difference is non-negligible
only when $\theta_{i0}$ lies within $O(T^{-1/2})$ of $\theta$.
Given $\theta_{i0}$ admits a density $f$ that is continuous (and
locally bounded) on a neighborhood of $\Theta$, then a change of
variables shows the discrepancy is of smaller order: 
\begin{align}
 & E\Big[(\theta-\theta_{i0})\big(\mathbf{1}\{\theta_{i0}\le\theta\}-\Phi(a_{T,i}(\theta,\theta_{i0}))\big)\Big]\nonumber \\
 & \qquad=\int(\theta-x)\Big(\mathbf{1}\{0\le\tfrac{\sqrt{T}(\theta-x)}{\sigma_{i}\left(x\right)}\}-\Phi(\tfrac{\sqrt{T}(\theta-x)}{\sigma_{i}\left(x\right)})\Big)f(x)\,dx\nonumber \\
 & \qquad=-\int_{-\infty}^{\infty}\tfrac{t}{\sqrt{T}}\left(\mathbf{1}\left\{ 0\leq\tfrac{t}{\sigma_{i}\left(\theta-t/\sqrt{T}\right)}\right\} -\Phi\left(\tfrac{t}{\sigma_{i}\left(\theta-t/\sqrt{T}\right)}\right)\right)f\left(\theta-t/\sqrt{T}\right)\tfrac{\sigma_{i}\left(\theta-t/\sqrt{T}\right)}{\sqrt{T}}dt\nonumber \\
 & \qquad=\frac{f\left(\theta\right)\sigma_{i}\left(\theta\right)^{2}}{2T}+o\left(\frac{1}{T}\right).\label{eq: proof theorem 3.1-2}
\end{align}
Here, the last equality applies the dominated convergence theorem
with the fact that for each fixed $t$, as $T\to\infty$, $\sigma_{i}\left(\theta-t/\sqrt{T}\right)\to\sigma_{i}\left(\theta\right)$
and $f\left(\theta-t/\sqrt{T}\right)\to f\left(\theta\right)$ and
$\int t\left(\mathbf{1}\left\{ 0\leq t\right\} -\Phi\left(t\right)\right)dt=\frac{1}{2}$.
Provided that $f(\theta_\tau)<\infty$ and $\Theta$ is compact, we have $\sup_{\theta\in\Theta}f(\theta)<\infty$. Hence, for each fixed $\theta\in\Theta$, 
\begin{equation*}
E\Big[(\theta_{i0}-\theta)\Phi(a_{T,i}(\theta,\theta_{i0}))\Big]=E\Big[(\theta_{i0}-\theta)\mathbf{1}\{\theta_{i0}\le\theta\}\Big]+o(1),
\end{equation*}
and \eqref{eq:ETcheck_expansion} implies
\begin{equation*}
E\big[\rho_{\tau}(\widehat{\theta}_{Ti}-\theta)\big]\to E\Big[(\theta_{i0}-\theta)\big(\tau-\mathbf{1}\{\theta_{i0}\le\theta\}\big)\Big]=E\big[\rho_{\tau}(\theta_{i0}-\theta)\big].
\end{equation*}

To upgrade pointwise convergence to uniform convergence over $\Theta$,
use the Lipschitz property of the check function: for all $\theta_{1},\theta_{2}\in\Theta$
and any random $U$, $|\rho_{\tau}(U-\theta_{1})-\rho_{\tau}(U-\theta_{2})|\le|\theta_{1}-\theta_{2}|$.
Hence $\theta\mapsto E[\rho_{\tau}(\widehat{\theta}_{Ti}-\theta)]$
is uniformly equicontinuous (with modulus $|\theta_{1}-\theta_{2}|$),
and likewise $\theta\mapsto E[\rho_{\tau}(\theta_{i0}-\theta)]$.
On a compact $\Theta$, equicontinuity plus pointwise convergence
implies uniform convergence, so 
\begin{equation*}
\sup_{\theta\in\Theta}\left|E\big[\rho_{\tau}(\widehat{\theta}_{Ti}-\theta)\big]-E\big[\rho_{\tau}(\theta_{i0}-\theta)\big]\right|\to0.
\end{equation*}

Combining this with the ULLN and the triangle inequality yields 
\begin{equation*}
\sup_{\theta\in\Theta}\left|\frac{1}{N}\sum_{i=1}^{N}\rho_{\tau}(\widehat{\theta}_{Ti}-\theta)-E\big[\rho_{\tau}(\theta_{i0}-\theta)\big]\right|\overset{P}{\to}0.
\end{equation*}
Finally, by Theorem~2.1 of Newey and McFadden~(\citeyear{newey1994large})
and Assumption~\ref{as: thetai iid}, the sample minimizer
$\widehat{\theta}_{\tau}$ satisfies $\widehat{\theta}_{\tau}\overset{P}{\to}\theta_{\tau}$
as $N,T\to\infty$. 
\end{proof}

\subsection{Proof of Theorem \protect\ref{thm: theta_tau normality}}
\begin{proof}
Define $\widehat{h}_{i}(\theta)=\psi(\widehat{\theta}_{Ti}-\theta)$,
$h_{i}(\theta)=\psi(\theta_{i0}-\theta)$, and 
\begin{equation*}
\mathbb{H}_{N}(\theta)=\frac{1}{N}\sum_{i=1}^{N}\widehat{h}_{i}(\theta)-\frac{1}{N}\sum_{i=1}^{N}E[\widehat{h}_{i}(\theta)].
\end{equation*}
By construction of $\widehat{\theta}_{\tau}$, $\frac{1}{N}\sum_{i=1}^{N}\widehat{h}_{i}(\widehat{\theta}_{\tau})=0$.
Hence, expanding around $\theta_{\tau}$, 
\begin{align*}
0 & =\frac{1}{\sqrt{N}}\sum_{i=1}^{N}\widehat{h}_{i}(\widehat{\theta}_{\tau})\\
 & =\underbrace{\frac{1}{\sqrt{N}}\sum_{i=1}^{N}\widehat{h}_{i}(\theta_{\tau})}_{(A)}+\underbrace{\frac{1}{\sqrt{N}}\sum_{i=1}^{N}\Big(E[\widehat{h}_{i}(\widehat{\theta}_{\tau})]-E[\widehat{h}_{i}(\theta_{\tau})]\Big)}_{(B)}+\underbrace{\sqrt{N}\big(\mathbb{H}_{N}(\widehat{\theta}_{\tau})-\mathbb{H}_{N}(\theta_{\tau})\big)}_{(C)}.
\end{align*}

\paragraph{Term (A).}

Write 
\begin{equation}
\frac{1}{\sqrt{N}}\sum_{i=1}^{N}\widehat{h}_{i}(\theta_{\tau})=\frac{1}{\sqrt{N}}\sum_{i=1}^{N}\Big(\widehat{h}_{i}(\theta_{\tau})-E[\widehat{h}_{i}(\theta_{\tau})]\Big)+\sqrt{N}\,E[\widehat{h}_{i}(\theta_{\tau})].\label{eq:A-decomp}
\end{equation}
Since $\mathbf{1}\{\widehat{\theta}_{Ti}\le\theta_{\tau}\}$ is bounded
by $1$ and the first-step moments are uniformly bounded, we have
\begin{equation*}
\sup_{N,i}E\Big|\widehat{h}_{i}(\theta_{\tau})-E[\widehat{h}_{i}(\theta_{\tau})]\Big|^{4}<\infty.
\end{equation*}
Conditional on the common regressor sequence, the summands are independent across $i$. A conditional triangular-array CLT followed by the law of iterated expectations/probabilities 
and Lemma \ref{lemma: Q and V positive} implies that 
\begin{equation*}
\frac{1}{\sqrt{N}}\sum_{i=1}^{N}\Big(\widehat{h}_{i}(\theta_{\tau})-E[\widehat{h}_{i}(\theta_{\tau})]\Big)\overset{d}{\to}\mathcal{N}(0,V),
\end{equation*}
where $V=\tau\left(1-\tau\right)$.

It remains to bound the bias $\sqrt{N}\,E[\widehat{h}_{i}(\theta_{\tau})]$.
Since $E[\widehat{h}_{i}(\theta_{\tau})]=\tau-P(\widehat{\theta}_{Ti}\le\theta_{\tau})$, 
using iterated expectation and Assumption \ref{as: high level random-1}(iv),
we have
\begin{align*}
E[\widehat{h}_{i}(\theta_{\tau})]= & \tau-E\left[P\!\left(\frac{\sqrt{T}(\widehat{\theta}_{Ti}-x)}{\sigma\left(x\right)}\le\frac{\sqrt{T}(\theta_{\tau}-\theta_{i0})}{\sigma\left(\theta_{i0}\right)}\Bigm|x\right)\right]\\
 =& \tau-E\left[\Phi\left(\frac{\sqrt{T}(\theta_{\tau}-\theta_{i0})}{\sigma\left(\theta_{i0}\right)}\right)\right]-E\left[\frac{1}{\sqrt{T}}p_{1}\left(\frac{\sqrt{T}(\theta_{\tau}-\theta_{i0})}{\sigma\left(\theta_{i0}\right)}\right)\phi\left(\frac{\sqrt{T}(\theta_{\tau}-\theta_{i0})}{\sigma\left(\theta_{i0}\right)}\right)\right]\\
 &-E\left[\frac{1}{T}p_{2}\left(\frac{\sqrt{T}(\theta_{\tau}-\theta_{i0})}{\sigma\left(\theta_{i0}\right)}\right)\phi\left(\frac{\sqrt{T}(\theta_{\tau}-\theta_{i0})}{\sigma\left(\theta_{i0}\right)}\right)\right]+o\left(\frac{1}{T}\right).
\end{align*}

Now expand the first two terms using the change of variable $t=\sqrt{T}(\theta_{\tau}-\theta_{i0})$:
\begin{align*}
\tau-E\!\left[\Phi\!\left(\frac{\sqrt{T}(\theta_{\tau}-\theta_{i0})}{\sigma\left(\theta_{i0}\right)}\right)\right] & =E\!\left[\mathbf{1}\{\theta_{i0}\le\theta_{\tau}\}-\Phi\!\left(\frac{\sqrt{T}(\theta_{\tau}-\theta_{i0})}{\sigma\left(\theta_{i0}\right)}\right)\right]\\
 & =\int\big(\mathbf{1}\{t\ge0\}-\Phi\left(t/\sigma\left(\theta_{\tau}-t/\sqrt{T}\right)\right)\big)f\!\left(\theta_{\tau}-t/\sqrt{T}\right)\frac{1}{\sqrt{T}}\,dt.
\end{align*}
Using $f\!\left(\theta_{\tau}-t/\sqrt{T}\right)=f(\theta_{\tau})-\frac{t}{\sqrt{T}}f'(\theta_{\tau})+o(T^{-1/2})$
and $\sigma\left(\theta_{\tau}-t/\sqrt{T}\right)=\sigma(\theta_{\tau})-\frac{t}{\sqrt{T}}\sigma'(\theta_{\tau})+o(T^{-1/2})$
together with 
\begin{equation*}
\int_{-\infty}^{\infty}\big(\mathbf{1}\{u\ge0\}-\Phi(u)\big)\,du=0,\qquad\int_{-\infty}^{\infty}\big(\mathbf{1}\{u\ge0\}-\Phi(u)\big)u\,du=\frac{1}{2},
\end{equation*}
gives 
\begin{equation*}
\tau-E\!\left[\Phi\!\left(\frac{\sqrt{T}(\theta_{\tau}-\theta_{i0})}{\sigma_{i}\left(\theta_{i0}\right)}\right)\right]=-\frac{1}{T}\left(\frac{f'(\theta_{\tau})}{2}\sigma\left(\theta_{\tau}\right)^{2}+\sigma\left(\theta_{\tau}\right)\sigma'\left(\theta_{\tau}\right)f\left(\theta_{\tau}\right)\right)+o(T^{-1}).
\end{equation*}

By an analogous argument, the Edgeworth correction terms reduces to $\tfrac{\sigma\left(\theta_{\tau}\right)f\left(\theta_{\tau}\right)}{T}\int p_{1,\theta_\tau}(u)\phi (u)du+o\left(T^{-1}\right)$.
Thus, combining the above results yields $E[\widehat{h}_{i}(\theta_{\tau})]=O(T^{-1})$, 
and hence $\sqrt{N}\,E[\widehat{h}_{i}(\theta_{\tau})]=O\!\left(\frac{\sqrt{N}}{T}\right)$.

\paragraph{Term (B).}

Define
\(
a_i(\theta):=\frac{\sqrt{T}\big(\theta-\theta_{i0}\big)}{\sigma_i(\theta_{i0})}.
\)
Then, since $\theta_{i0}$ is i.i.d.\ across $i$, Assumption \ref{as: high level random-1}(iv) gives
\begin{align*}
\frac{1}{N}\sum_{i=1}^{N}\Big(E[\widehat{h}_{i}(\theta)]\big|_{\theta=\widehat{\theta}_{\tau}}-E[\widehat{h}_{i}(\theta_{\tau})]\Big)
&=
E\!\left[
P\!\left(
a_i(\theta)<\frac{\sqrt{T}(\widehat{\theta}_{Ti}-\theta_{i0})}{\sigma_i(\theta_{i0})}\le a_i(\theta_\tau)
\right)\Bigg|_{\theta=\widehat{\theta}_{\tau}}
\right] \\
&=
E\!\left[
\Phi\!\big(a_i(\theta_\tau)\big)-\Phi\!\big(a_i(\theta)\big)
\Bigg|_{\theta=\widehat{\theta}_{\tau}}
\right] \\
&\quad+\frac{1}{\sqrt{T}}E\!\left[
p_{1,i}\!\big(a_i(\theta_\tau)\big)\phi\!\big(a_i(\theta_\tau)\big)
-p_{1,i}\!\big(a_i(\theta)\big)\phi\!\big(a_i(\theta)\big)
\Bigg|_{\theta=\widehat{\theta}_{\tau}}
\right] \\
&\quad+\frac{1}{T}E\!\left[
p_{2,i}\!\big(a_i(\theta_\tau)\big)\phi\!\big(a_i(\theta_\tau)\big)
-p_{2,i}\!\big(a_i(\theta)\big)\phi\!\big(a_i(\theta)\big)
\Bigg|_{\theta=\widehat{\theta}_{\tau}}
\right]
+o(T^{-1}).
\end{align*}

Consider first the leading term,
\(
E\!\left[
\Phi\!\big(a_i(\theta_\tau)\big)-\Phi\!\big(a_i(\theta)\big)
\Bigg|_{\theta=\widehat{\theta}_{\tau}}
\right].
\)
A first-order Taylor expansion around $\theta=\theta_\tau$ 
yields
\begin{equation*}
E\!\left[
\Phi\!\big(a_i(\theta_\tau)\big)-\Phi\!\big(a_i(\theta)\big)
\Bigg|_{\theta=\widehat{\theta}_{\tau}}
\right]
=
-Q(\widehat{\theta}_{\tau}-\theta_{\tau})+o_P(\widehat{\theta}_{\tau}-\theta_{\tau}),
\end{equation*}
where
\(
Q
:=
\lim_{T\to\infty}
E\!\left[
\frac{\sqrt{T}}{\sigma_i(\theta_{i0})}\phi\!\big(a_i(\theta_\tau)\big)
\right].
\)

The remaining terms are of order $o_P(N^{-1/2})$ under $\sqrt{N}/T=O(1)$. Indeed, applying the same Taylor expansion argument to
\begin{equation*}
p_{1,i}\!\big(a_i(\theta)\big)\phi\!\big(a_i(\theta)\big)
\qquad\text{and}\qquad
p_{2,i}\!\big(a_i(\theta)\big)\phi\!\big(a_i(\theta)\big),
\end{equation*}
shows that they are both $o_P(N^{-1/2})$. Therefore,
\begin{align}
(B)=-Q\sqrt{N}(\widehat{\theta}_{\tau}-\theta_{\tau})+o_P({1})+o_P(\sqrt{N}(\widehat{\theta}_{\tau}-\theta_{\tau})).\label{eq: term B}
\end{align}

\paragraph{Term (C).}

Note that 
\begin{equation*}
\sqrt{N}\,\mathbb{H}_{N}(\theta)=-\frac{1}{\sqrt{N}}\sum_{i=1}^{N}\Big(\mathbf{1}\{\widehat{\theta}_{Ti}\le\theta\}-E[\mathbf{1}\{\widehat{\theta}_{Ti}\le\theta\}]\Big).
\end{equation*}
This is an empirical process indexed by the threshold class $\{\mathbf{1}\{x\le\theta\}:\theta\in\Theta\}$,
which is a bounded VC class. Under independence across $i$, stochastic
equicontinuity follows (e.g.\ Example 1 in \citet{andrews1994empirical}).
Since $\widehat{\theta}_{\tau}\overset{P}{\to}\theta_{\tau}$ (from
the consistency result proved in Theorem \ref{thm: theta_tau consistent}),
we obtain 
\begin{equation*}
\mathbb{H}_{N}(\widehat{\theta}_{\tau})-\mathbb{H}_{N}(\theta_{\tau})=o_{P}(N^{-1/2}),\quad\text{so}\quad(C)=o_{P}(1).
\end{equation*}

\paragraph{Collecting terms.}

Substituting \eqref{eq:A-decomp} and the expansions for (B) and (C)
into the decomposition gives 
\begin{align*}
0 & =\frac{1}{\sqrt{N}}\sum_{i=1}^{N}\Big(\widehat{h}_{i}(\theta_{\tau})-E[\widehat{h}_{i}(\theta_{\tau})]\Big)+\sqrt{N}\,E[\widehat{h}_{i}(\theta_{\tau})]-\sqrt{N}Q(\widehat{\theta}_{\tau}-\theta_{\tau})+o_{P}(1).
\end{align*}
Therefore, 
\begin{align*}
\sqrt{N}(\widehat{\theta}_{\tau}-\theta_{\tau}) & =(Q+o_{P}(1))^{-1}\left[\frac{1}{\sqrt{N}}\sum_{i=1}^{N}\Big(\widehat{h}_{i}(\theta_{\tau})-E[\widehat{h}_{i}(\theta_{\tau})]\Big)+\sqrt{N}\,E[\widehat{h}_{i}(\theta_{\tau})]+o_{P}(1)\right].
\end{align*}
By Lemma \ref{lemma: Q and V positive}, the first bracketed term
converges in distribution to $\mathcal{N}(0,\tau\left(1-\tau\right))$
and $\sqrt{N}\,E[\widehat{h}_{i}(\theta_{\tau})]=O(\sqrt{N}/T)$,
we conclude 
\begin{equation*}
\sqrt{N}(\widehat{\theta}_{\tau}-\theta_{\tau})\overset{d}{\to}\mathcal{N}\left(B_{R},f\left(\theta_{\tau}\right)^{-2}\tau\left(1-\tau\right)\right),
\end{equation*}
where 
\begin{align*}
B_{R}
=\lim_{N,T\to\infty}-f\left(\theta_{\tau}\right)^{-1}\frac{\sqrt{N}}{T}\left(\frac{f'(\theta_{\tau})}{2}\sigma\left(\theta_{\tau}\right)^{2}+\sigma\left(\theta_{\tau}\right)\sigma'\left(\theta_{\tau}\right)f\left(\theta_{\tau}\right)+{\sigma\left(\theta_{\tau}\right)f\left(\theta_{\tau}\right)}\int_{\mathbb{R}}p_{1,\theta_\tau}\left(u\right)\phi\left(u\right)du\right).
\end{align*}
 In particular, if $\sqrt{N}/T=o(1)$, then $\sqrt{N}\,E[\widehat{h}_{i}(\theta_{\tau})]\to0$,
so $B_{R}=0$ and 
\begin{equation*}
\sqrt{N}(\widehat{\theta}_{\tau}-\theta_{\tau})\overset{d}{\to}\mathcal{N}\left(0,f\left(\theta_{\tau}\right)^{-2}\tau\left(1-\tau\right)\right).
\end{equation*}
\end{proof}

\subsection{Proof of Theorem \protect\ref{thm: theta_tau consistent fixed}}
\begin{proof}
We seek to apply Theorem 2.1 of Newey and McFadden (\citeyear{newey1994large}),
which requires uniform convergence: 
\begin{align*}
 & P\left(\sup_{\theta\in\Theta}\left|\frac{1}{N}\sum_{i=1}^{N}\rho_{\tau}\left(\widehat{\theta}_{Ti}-\theta\right)-\frac{1}{N}\sum_{i=1}^{N}\rho_{\tau}\left({\theta}_{i0}-\theta\right)\right|>\varepsilon\vert\{\theta_{i0}\}_{i}\right)\rightarrow0.
\end{align*}
For check function, we have $\left|\rho_{\tau}\left(u\right)-\rho_{\tau}\left(v\right)\right|\leq\left|u-v\right|$
for all $\tau\in\left(0,1\right)$, $u,v\in\mathbb{R}$. Thus, applying
the triangular inequality yields that 
\begin{align*}
\sup_{\theta\in\Theta}\left|\frac{1}{N}\sum_{i=1}^{N}\rho_{\tau}\left(\widehat{\theta}_{Ti}-\theta\right)-\frac{1}{N}\sum_{i=1}^{N}\rho_{\tau}\left(\theta_{i0}-\theta\right)\right|\leq & \frac{1}{N}\sum_{i=1}^{N}\left|\widehat{\theta}_{Ti}-\theta_{i0}\right|\leq\sup_{1\leq i\leq N}\left|\widehat{\theta}_{Ti}-\theta_{i0}\right|.
\end{align*}
Hence, applying Theorem 2.1 of Newey and McFadden (\citeyear{newey1994large})
yields $\widehat{\theta}_{\tau}\overset{P}{\rightarrow}\theta_{\tau}$
as $N,T\rightarrow\infty$. 
\end{proof}

\subsection{Proof of Theorem \protect\ref{thm: theta_tau normality fixed}}
\begin{proof}
Define $\widehat{h}_{i}(\theta)=\psi(\widehat{\theta}_{Ti}-\theta)$,
$h_{i}(\theta)=\psi(\theta_{i0}-\theta)$, and 
\begin{equation*}
\mathbb{H}_{N}(\theta)=\frac{1}{N}\sum_{i=1}^{N}\widehat{h}_{i}(\theta)-\frac{1}{N}\sum_{i=1}^{N}E\!\left(\widehat{h}_{i}(\theta)\mid\theta_{i0}\right).
\end{equation*}
Since $\widehat{\theta}_{\tau}$ solves $N^{-1}\sum_{i=1}^{N}\widehat{h}_{i}(\widehat{\theta}_{\tau})=0$,
we have the basic decomposition 
\begin{equation}
0=\underbrace{\frac{1}{N}\sum_{i=1}^{N}\widehat{h}_{i}(\theta_{\tau})}_{A^{\rm D}}+\underbrace{\Bigg[\frac{1}{N}\sum_{i=1}^{N}E\!\left(\widehat{h}_{i}(\widehat{\theta}_{\tau})\mid\theta_{i0}\right)-\frac{1}{N}\sum_{i=1}^{N}E\!\left(\widehat{h}_{i}(\theta_{\tau})\mid\theta_{i0}\right)\Bigg]}_{B^{\rm D}}+\underbrace{\bigl[\mathbb{H}_{N}(\widehat{\theta}_{\tau})-\mathbb{H}_{N}(\theta_{\tau})\bigr]}_{C^{\rm D}}.\label{eq:decomp}
\end{equation}

\medskip{}
\noindent \textbf{Term $A^{\rm D}$.}
Multiply term $A^{\rm D}$ by $\sqrt{N\sqrt{T}}$ and split: 
\begin{align}
\sqrt{N\sqrt{T}}\,A^{\rm D} & =\frac{1}{\sqrt{N}}\sum_{i=1}^{N}T^{1/4}\Bigl[\widehat{h}_{i}(\theta_{\tau})-E\!\left(\widehat{h}_{i}(\theta_{\tau})\mid\theta_{i0}\right)\Bigr]+T^{1/4}\frac{1}{\sqrt{N}}\sum_{i=1}^{N}\Bigl[E\!\left(\widehat{h}_{i}(\theta_{\tau})\mid\theta_{i0}\right)-h_{i}(\theta_{\tau})\Bigr]\nonumber \\
 & \qquad+\frac{T^{1/4}}{\sqrt{N}}\sum_{i=1}^{N}h_{i}(\theta_{\tau}).\label{eq:A-split}
\end{align}
Consider the last term in \eqref{eq:A-split}, we have 
$$\frac{T^{1/4}}{\sqrt{N}}\sum_{i=1}^{N}h_{i}(\theta_{\tau})=\sqrt{N}T^{1/4}(\tau-F_N(\theta_\tau))=O(T^{1/4}/\sqrt{N})=o(1).$$
Here, the last second equality holds 
by the definition of $\theta_{\tau}$ and Lemma \ref{lemma: empirical cdf}, which gives $\tau=F_{\mathrm{D}}(\theta_\tau)$ and  $F_{N}(\theta_\tau)-F_{\mathrm{D}}(\theta_\tau)=O(N^{-1})$. For the first term in \eqref{eq:A-split}, set 
\begin{equation*}
Z_{i}:=T^{1/4}\Bigl(\mathbf{1}\{\widehat{\theta}_{Ti}\le\theta_{\tau}\}-P(\widehat{\theta}_{Ti}\le\theta_{\tau}\mid\theta_{i0})\Bigr),\qquad s_{N}^{2}:=\sum_{i=1}^{N}Var(N^{-1/2}Z_{i}\mid\theta_{i0}).
\end{equation*}
By Lemma \ref{lemma: theta_tau normality fixed V and Q}, $s_{N}^{2}=\frac{f(\theta_{\tau})\sigma(\theta_{\tau})}{\sqrt{\pi}}+o(1)>0$. 
Moreover, since $|Z_{i}|\le T^{1/4}$ and $N^{-1}T^{1/2}=o(1)$, for
any fixed $\varepsilon>0$ we have $|N^{-1/2}Z_{i}|\le N^{-1/2}T^{1/4}<\varepsilon s_{N}$
for all large $(N,T)$, implying the conditional Lindeberg condition
$\frac{1}{s_{N}^{2}}\sum_{i=1}^{N}E\bigl((N^{-1/2}Z_{i})^{2}\mathbf{1}\{|N^{-1/2}Z_{i}|>\varepsilon s_{N}\}|\theta_{i0}\bigr)=o(1)$. 
Hence, by the Lindeberg CLT, 
\begin{equation}
\frac{1}{\sqrt{N}}\sum_{i=1}^{N}Z_{i}=\frac{1}{\sqrt{N}}\sum_{i=1}^{N}T^{1/4}\Bigl[\widehat{h}_{i}(\theta_{\tau})-E\!\left(\widehat{h}_{i}(\theta_{\tau})\mid\theta_{i0}\right)\Bigr]\xrightarrow{d}\mathcal{N}\!\left(0,\frac{f(\theta_{\tau})\sigma(\theta_{\tau})}{\sqrt{\pi}}\right).\label{eq:CLT}
\end{equation}

It remains to show the bias term in \eqref{eq:A-split} is $o(1)$
after scaling: 
\begin{equation}
T^{1/4}\frac{1}{\sqrt{N}}\sum_{i=1}^{N}\Bigl[E\!\left(\widehat{h}_{i}(\theta_{\tau})\mid\theta_{i0}\right)-h_{i}(\theta_{\tau})\Bigr]=o(1).\label{eq: bias fixed}
\end{equation}
 Since $\sigma(\cdot)$ is continuously differentiable and bounded away from zero in a neighborhood of $\theta_\tau$, the replacement of $\sigma(\theta_{i0})$ by $\sigma(\theta_\tau)$ inside the local window introduces only higher-order terms, which are controlled by the same local-grid condition used in Lemma \ref{lem:odd_pairing}. Hence, for simplicity, we assume that $\sigma_{i}=1$
for each $i$. For some $0<\Delta\ll\sqrt{T}$, define 
\begin{equation*}
I_{T}(\Delta)\equiv\bigl\{\,i:\ |\theta_{i0}-\theta_{\tau}|\le\Delta/\sqrt{T}\,\bigr\}.
\end{equation*} Write $u_{i}=\sqrt{T}(\theta_{\tau}-\theta_{i0})$
 and $c_{T}=\sqrt{2\log T}$,
and decompose 
\begin{equation*}
\frac{1}{N}\sum_{i=1}^{N}\Bigl[E\!\left(\widehat{h}_{i}(\theta_{\tau})\mid\theta_{i0}\right)-h_{i}(\theta_{\tau})\Bigr]=D_{1}+D_{2}+D_{3},
\end{equation*}
where 
\begin{align*}
D_{1} & :=\frac{1}{N}\sum_{i\in I_{T}(c_{T})}\Bigl[P(\widehat{\theta}_{Ti}\le\theta_{\tau}\mid\theta_{i0})-\Phi(u_{i})\Bigr],\\
D_{2} & :=\frac{1}{N}\sum_{i\in I_{T}(c_{T})}\Bigl[\Phi(u_{i})-\mathbf{1}\{0\le u_{i}\}\Bigr],\\
D_{3} & :=\frac{1}{N}\sum_{i\notin I_{T}(c_{T})}\Bigl[P(\widehat{\theta}_{Ti}\le\theta_{\tau}\mid\theta_{i0})-\mathbf{1}\{0\le u_{i}\}\Bigr].
\end{align*}
Recall that $NT^{-3/2}(\log T)^2=o(1)$, it suffices to show each term is order of $\frac{\log T}{T}$.  Consider $D_3$, we have  $i\notin I_{T}(c_{T})$, and hence $|u_{i}|>c_{T}$. By Assumptions \ref{as: high level fixed-1}(iv),
we have a uniform bound 
\begin{align}
\sup_{i\le N}\sup_{|u_i|>c_T}\Bigl|P_{i}(u_{i})-\Phi(u_{i})-T^{-1/2}p_{1}(u_{i})\phi(u_{i})\Bigr|=O(T^{-1}).
\end{align}
For $u_{i}>c_{T}$, by the application of Mill's ratio, 
\begin{equation*}
|\mathbf{1}\{0\le u_{i}\}-\Phi(u_{i})|\le|1-\Phi(c_{T})|=O(e^{-c_{T}^{2}/2}/|c_{T}|)=o(T^{-1}),
\end{equation*}
and $\phi(u_{i})\le\phi(c_{T})=O(e^{-c_{T}^{2}/2})=O(T^{-1})$. A similar result holds for $u_i<-c_T$, and hence the application of the triangular inequality yields $D_{3}=O(T^{-1})$.  Note that the specific choice \(c_T=\sqrt{2\log T}\) is not essential. It is a convenient sufficient choice to obtain a simple uniform tail bound. Since \(D_3\) is an average over \(i\notin I_T(c_T)\), one may instead use the local grid argument.

For \(D_1\), define
\[
g_T(u)
=
\sup_{i\le N}
\left|
P\left(
\frac{\sqrt{T}(\widehat\theta_{Ti}-\theta_{i0})}{\sigma_i}
\le u
\mid \theta_{i0}
\right)
-\Phi(u)
\right|.
\]
Then
\(
|D_1|
\le
\frac1N\sum_{i\in I_T(c_T)} g_T(u_i)\le \frac{\# I_T(c_T)}{N}\sup_{|u|\le c_T}g_T(u).
\)
By Assumption \ref{as: high level fixed-1}(iv), uniformly over \(i\le N\) and
\(|u|\le c_T\),
\[
P\left(
\frac{\sqrt{T}(\widehat\theta_{Ti}-\theta_{i0})}{\sigma_i}
\le u
\mid \theta_{i0}
\right)
-\Phi(u)
=
T^{-1/2}p_{1i}(u)\phi(u)+T^{-1}p_{2i}(u)\phi(u)
+R_{T,2}(u,\theta_{i0}),
\]
where the polynomial terms are uniformly bounded in the sense that, for some
constant \(C<\infty\) and some integer \(m\),
\[
\sup_{i\le N}|p_{1i}(u)|+\sup_{i\le N}|p_{2i}(u)|
\le C(1+|u|^m),
\]
and
\(
\sup_{i\le N}\sup_{u\in\mathbb R}|R_{T,2}(u,\theta_{i0})|
=o(T^{-1}).
\)
Hence, uniformly over \(|u|\le c_T\),
\[
g_T(u)
\le
C T^{-1/2}(1+|u|^m)\phi(u)
+
C T^{-1}(1+|u|^m)\phi(u)
+
o(T^{-1}).
\]
Since \((1+|u|^m)\phi(u)\le C(1+|u|^m)\exp(-u^2/2)\) is bounded uniformly over \(u\in\mathbb R\), it follows that
\(
\sup_{|u|\le c_T}g_T(u)=O(T^{-1/2}).
\)
By Lemma \ref{lemma: cardinality},
\(
\frac{\# I_T(c_T)}{N}
=
O\left(\frac{c_T}{\sqrt T}\right).
\)
Consequently,
\(
|D_1|
=
O\left(\frac{c_T}{T}\right).
\)
Since \(c_T=\sqrt{2\log T}\), we have
\(
D_1=O\left(\frac{\sqrt{\log T}}{T}\right)
=o\left(\frac{\log T}{T}\right).
\)

Consider $D_2$. Applying Lemma \ref{lem:odd_pairing} yields that 
\(
|D_2|
=
o\!\left(\frac{\log T}{T}\right)+O(\frac{1}{N}).
\)
 Combining orders for $D_1$-$D_3$, it follows that the bias term   $\frac{1}{N}\sum_{i=1}^{N}\Bigl[E\!\left(\widehat{h}_{i}(\theta_{\tau})\mid\theta_{i0}\right)-h_{i}(\theta_{\tau})\Bigr]=o(\tfrac{\log T}{T})+O(\frac{1}{N})$.
Thus, given that $N\ll \tfrac{T^{3/2}}{(\log T)^2}$ and $T^{1/2}\ll N$, 
\begin{equation*}
T^{1/4}\frac{1}{\sqrt{N}}\sum_{i=1}^{N}\Bigl[E\!\left(\widehat{h}_{i}(\theta_{\tau})\mid\theta_{i0}\right)-h_{i}(\theta_{\tau})\Bigr]=O\!\left(\sqrt{N}\,T^{-3/4}\log T\right)+O(N^{-1/2}T^{1/4})=o(1).
\end{equation*}
Combining with \eqref{eq:CLT}, we conclude that $\sqrt{N\sqrt{T}}\,A^{\rm D}$
is asymptotically normal with mean 0 and variance $f(\theta_{\tau})\sigma(\theta_{\tau})/\sqrt{\pi}$.

\noindent \textbf{Step 2 (Term $B^{\rm D}$ and Term $C^{\rm D}$).} For Term $B^{\rm D}$, applying Lemma   \ref{lem:order-theta-tau-fixed} yields that \(
\widehat{\theta}_\tau-\theta_\tau
=
O_P\!\left(N^{-1/2}T^{-1/4}\right)
\). It follows that  
\begin{equation*}
B^{\rm D}=-Q^{\rm D}(\widehat{\theta}_{\tau}-\theta_{\tau})+o_{P}\left(N^{-1/2}T^{-1/4}\right),
\end{equation*}
where $Q^{\rm D}=\lim_{N,T\rightarrow\infty}\frac{1}{N}\sum_{i=1}^{N}\frac{\sqrt{T}}{\sigma_{i}\left(\theta_{i0}\right)}\phi\left(\tfrac{\sqrt{T}\left(\theta_{\tau}-\theta_{i0}\right)}{\sigma_{i}\left(\theta_{i0}\right)}\right)$ following the analogous argument for \eqref{eq: term B}.

For the term \(C^{\rm D}\), recall that
\[
Z_i(\theta):=
T^{1/4}\Bigl(
\mathbf{1}\{\widehat{\theta}_{Ti}\le \theta\}
-
P(\widehat{\theta}_{Ti}\le \theta\mid \theta_{i0})
\Bigr).
\]
Lemma   \ref{lem:order-theta-tau-fixed},  \(\widehat\theta_\tau-\theta_\tau=O_P(N^{-1/2}T^{-1/4})\), so for any \(\varepsilon>0\), there exists \(M<\infty\) such that
\[
P\!\left(|\widehat\theta_\tau-\theta_\tau|\le M N^{-1/2}T^{-1/4}\right)\ge 1-\varepsilon
\]
for all sufficiently large \(N,T\). On this event, Lemma \ref{lem:local-stochastic-equicontinuity} gives
\[
\left|
\frac{1}{\sqrt N}\sum_{i=1}^N
\{Z_i(\widehat\theta_\tau)-Z_i(\theta_\tau)\}
\right|
\le
\sup_{|\theta-\theta_\tau|\le M N^{-1/2}T^{-1/4}}
\left|
\frac{1}{\sqrt N}\sum_{i=1}^N
\{Z_i(\theta)-Z_i(\theta_\tau)\}
\right|
=o_P(1).
\]
Because
\(
\mathbb H_N(\theta)
=
\frac{1}{N}\sum_{i=1}^N T^{-1/4}Z_i(\theta),
\)
we obtain
\[
C^{\rm D}
=
\mathbb H_N(\widehat\theta_\tau)-\mathbb H_N(\theta_\tau)
=
o_P(N^{-1/2}T^{-1/4}).
\]

 Combining with the limit of $A^{\rm D},B^{\rm D},C^{\rm D}$ with \eqref{eq:decomp} yields 
\begin{equation*}
\sqrt{N\sqrt{T}}\,(\widehat{\theta}_{\tau}-\theta_{\tau})\xrightarrow{d}\mathcal{N}\!\left(0,\frac{1}{(Q^{\rm D})^2}\cdot\frac{f(\theta_{\tau})\sigma(\theta_{\tau})}{\sqrt{\pi}}\right)=\mathcal{N}\!\left(0,\frac{\sigma(\theta_{\tau})}{\sqrt{\pi}\,f(\theta_{\tau})}\right),
\end{equation*}
where the last equality uses $Q^{\rm D}=f(\theta_{\tau})$ (as in Lemma
\ref{lemma: theta_tau normality fixed V and Q}). 
\end{proof}

\subsection{Proof of Theorem \protect\ref{thm: bootstrap validity} ($i$)}
\begin{proof}
As usual in the bootstrap literature, we write $T_{GH}^{\ast}\rightarrow^{d^{\ast}}D$, 
in probability, if conditional on a sample with probability that converges
to one, $T_{GH}^{\ast}$ weakly converges to the distribution $D$
under $P^{\ast}$, i.e., $E^{\ast}\left(f\left(T_{GH}^{\ast}\right)\right)\rightarrow^{P}E\left(f\left(D\right)\right)$
for all bounded and uniformly continuous function $f$.

We analyze the sequential bootstrap with a first-step resampling of
the time series within each selected unit and a second-step resampling
of individuals. The second step estimator $\widehat{\theta}_{Ti}^{**}$
can be regarded as an i.i.d. draw from $\{\widehat{\theta}_{Ti}^{*}\}_{i}$,
while the first step estimator $\widehat{\theta}_{i}^{*}$ is derived
by using the first-step bootstrap time series sample. We define $\widehat{h}_{i}^{**}(\theta)=\psi(\widehat{\theta}_{i}^{**}-\theta)$,
and denote the bootstrap probability conditional on the original sample
and the first-step bootstrap sample by $P^{*}$ and $P^{**}$, respectively.
Similar for the bootstrap expectation and variance, $E^{*}$ and $E^{**}$,
$Var^{*}$ and $Var^{**}$. Let 
\begin{equation*}
\mathbb{H}_{N}^{*}(\theta)=\frac{1}{N}\sum_{i=1}^{N}\widehat{h}_{i}^{**}(\theta)-\frac{1}{N}\sum_{i=1}^{N}E^{*}(\widehat{h}_{i}^{**}(\theta)).
\end{equation*}
The estimator $\widehat{\theta}_{\tau}^{**}$ solves $N^{-1}\sum_{i=1}^{N}\widehat{h}_{i}^{**}(\widehat{\theta}_{\tau}^{**})=0$.
Hence 
\begin{equation*}\small
0=\frac{1}{N}\sum_{i=1}^{N}\widehat{h}_{i}^{**}(\widehat{\theta}_{\tau}^{**})=\underbrace{\frac{1}{N}\sum_{i=1}^{N}\widehat{h}_{i}^{**}(\widehat{\theta}_{\tau})}_{A^{*}}+\underbrace{\frac{1}{N}\sum_{i=1}^{N}E^{*}(\widehat{h}_{i}^{**}(\widehat{\theta}_{\tau}^{**}))-\frac{1}{N}\sum_{i=1}^{N}E^{*}(\widehat{h}_{i}^{**}(\widehat{\theta}_{\tau}))}_{B^{*}}+\underbrace{\mathbb{H}_{N}^{*}(\widehat{\theta}_{\tau}^{**})-\mathbb{H}_{N}^{*}(\widehat{\theta}_{\tau})}_{C^{*}}.
\end{equation*}

\textbf{Term $(A^{*})$.} Decompose 
\begin{equation*}
\sqrt{N}\frac{1}{N}\sum_{i=1}^{N}\widehat{h}_{i}^{**}(\widehat{\theta}_{\tau})=\frac{1}{\sqrt{N}}\sum_{i=1}^{N}\big[\widehat{h}_{i}^{**}(\widehat{\theta}_{\tau})-E^{*}(\widehat{h}_{i}^{**}(\widehat{\theta}_{\tau}))\big]+\frac{1}{\sqrt{N}}\sum_{i=1}^{N}E^{*}(\widehat{h}_{i}^{**}(\widehat{\theta}_{\tau})).
\end{equation*}
Conditional on the original sample, $\widehat{h}_{i}^{**}(\widehat{\theta}_{\tau})$
is i.i.d. over $i$, and since $\mathbf{1}\{\widehat{\theta}_{i}^{**}\le\widehat{\theta}_{\tau}\}\in[0,1]$,
$\sup_{N,i}E^{**}\big|\widehat{h}_{i}^{**}(\widehat{\theta}_{\tau})-E^{*}(\widehat{h}_{i}^{**}(\widehat{\theta}_{\tau}))\big|^{4}\le1<\infty.$

Hence, by Theorem 6.5 in Hansen (\citeyear{hansen2022econometrics}),
\begin{equation*}
\frac{1}{\sqrt{N}}\sum_{i=1}^{N}\big[\widehat{h}_{i}^{**}(\widehat{\theta}_{\tau})-E^{*}(\widehat{h}_{i}^{**}(\widehat{\theta}_{\tau}))\big]\xrightarrow{d^{*}}\mathcal{N}(0,V^{*}),
\end{equation*}
where $V^{*}=\lim_{N,T\to\infty}N^{-1}\sum_{i=1}^{N}Var^{*}\big(\psi_{\tau}(\widehat{\theta}_{Ti}^{**}-\widehat{\theta}_{\tau})\big)$.
We note that $\widehat{\theta}_{Ti}^{**}$ here is based on the second-step
bootstrap sample, while $Var^{*}$ is conditional on the original
sample. Hence, $Var^{*}$ deals with the two-layers of bootstrap sampling
process together. By Lemma \ref{lemma: bootstrap Q and V}, we have
$V^{*}=V+o_{P}(1)$.

Consider the bias component $\frac{1}{\sqrt{N}}\sum_{i=1}^{N}E^{*}\!\left(\widehat{h}_{i}^{**}\!\left(\widehat{\theta}_{\tau}\right)\right)$.
Observe that we have 
\begin{align*}
\frac{1}{\sqrt{N}}\sum_{i=1}^{N}E^{*}(\widehat{h}_{i}^{**}(\widehat{\theta}_{\tau})) & =\frac{1}{\sqrt{N}}\sum_{i=1}^{N}\Big(\tau-E^{*}\left(\mathbf{1}\left\{ \widehat{\theta}_{Ti}^{**}\le\widehat{\theta}_{\tau}\right\} \right)\Big)\\
&=\frac{1}{\sqrt{N}}\sum_{i=1}^{N}\Big(\tau-E^{*}\left(E^{**}\left(\mathbf{1}\left\{ \widehat{\theta}_{Ti}^{**}\le\widehat{\theta}_{\tau}\right\} \right)\right)\Big)\\
 & =\frac{1}{\sqrt{N}}\sum_{i=1}^{N}\Big(\mathbf{1}\left\{ \widehat{\theta}_{Ti}\le\widehat{\theta}_{\tau}\right\} -P^{*}\left(\widehat{\theta}_{Ti}^{*}\le\widehat{\theta}_{\tau}\right)\Big),
\end{align*}

Let \(\widehat{u}_{i}:=\sqrt{T}(\widehat{\theta}_{\tau}-\widehat{\theta}_{Ti})\). By Assumption \ref{as: high level-bootstrap random}, uniformly over \(i\),
\begin{equation*}
P^{*}\!\left(\widehat{\theta}_{Ti}^{*}\le \widehat{\theta}_{\tau}\right)
=\Phi\!\left(\widehat{u}_{i}/\sigma_{i}\right)
+\tfrac{1}{\sqrt{T}}\widehat{p}_{1i}\!\left(\widehat{u}_{i}/\sigma_{i}\right)\phi\!\left(\widehat{u}_{i}/\sigma_{i}\right)
+\tfrac{1}{T}\widehat{p}_{2i}\!\left(\widehat{u}_{i}/\sigma_{i}\right)\phi\!\left(\widehat{u}_{i}/\sigma_{i}\right)
+o_{P}(T^{-1}),
\end{equation*}
where we used \(\sigma_i^*=\sigma_i\). Therefore
\begin{equation*}
B_{NT}^{*}
=\frac{1}{\sqrt{N}}\sum_{i=1}^{N}A_i\!\left(\widehat{u}_{i}/\sigma_i\right)+o_P(\sqrt{N}/T),
\end{equation*}
with \(A_i(v):=\mathbf{1}\{0\le v\}-\Phi(v)-T^{-1/2}\widehat p_{1i}(v)\phi(v)-T^{-1}\widehat p_{2i}(v)\phi(v)\).

Now write \(\widehat{u}_{i}=Z_{Ti}+\sqrt{T}(\theta_{\tau}-\theta_{i0})+\sqrt{T/N}\,Z_N\), where \(Z_N:=\sqrt{N}(\widehat{\theta}_{\tau}-\theta_{\tau})\) and \(Z_{Ti}:=-\sqrt{T}(\widehat{\theta}_{Ti}-\theta_{i0})\). Since \(Z_N=O_P(1)\) and  when \(\sqrt{N}/T\to c\in(0,\infty)\), we have \(\sqrt{T/N}\,Z_N=o_P(1)\). Hence it is enough to replace \(\widehat{u}_{i}/\sigma_i\) by \(U_i:=[Z_{Ti}+\sqrt{T}(\theta_\tau-\theta_{i0})]/\sigma_i\). Indeed, by the mean-value bound for the smooth part of \(A_i\) and the fact that the discontinuity of \(\mathbf{1}\{0\le v\}\) contributes only on an interval of length \(o_P(1)\), the conditional expectation of \(|A_i(U_i+\delta_{Ni})-A_i(U_i)|\) is \(o_P(1)\) uniformly in \(i\), where \(\delta_{Ni}:=\sqrt{T/N}\,Z_N/\sigma_i=o_P(1)\) (the details are given in Lemma \ref{lem:Ai-shift}). Therefore, independence over $i$ gives that 
$\frac{1}{\sqrt{N}}\sum_{i=1}^{N}\Bigl[A_i(\widehat{u}_i/\sigma_i)-A_i(U_i)\Bigr]=o_P(1)$,
and hence 
$B_{NT}^{*}=\frac{1}{\sqrt{N}}\sum_{i=1}^{N}A_i(U_i)+o_P(1).$

Next, \(A_i(U_i)\) are i.i.d.\ across \(i\), so by the WLLN,
\begin{equation*}
\frac{1}{\sqrt{N}}\sum_{i=1}^{N}A_i(U_i)-\sqrt{N}E[A_i(U_i)]=o_P(1),
\end{equation*}
provided \(\sqrt{N}E[A_i(U_i)]=O(1)\). Thus it remains to study \(E[A_i(U_i)]\). By definition,
\begin{equation*}
E[A_i(U_i)]
=
E\!\left[\mathbf{1}\{0\le U_i\}-\Phi(U_i)\right]
-\frac{1}{\sqrt{T}}E\!\left[\widehat p_{1i}(U_i)\phi(U_i)\right]
-\frac{1}{T}E\!\left[\widehat p_{2i}(U_i)\phi(U_i)\right].
\end{equation*}
Let \(a_i:=\sqrt{T}(\theta_\tau-\theta_{i0})/\sigma_i\) and
\(W_i:=\sqrt{T}(\widehat{\theta}_{Ti}-\theta_{i0})/\sigma_i\), so that
\(U_i=a_i-W_i\). Conditional on \(\theta_{i0}\), the Edgeworth expansion for the conditional distribution of $W_i$ gives
\begin{equation*}
P(0\le U_i\mid \theta_{i0})
=
\Phi(a_i)
+
T^{-1/2}p_{1i}(a_i)\phi(a_i)
+
T^{-1}p_{2i}(a_i)\phi(a_i)
+
o(T^{-1}).
\end{equation*}
Moreover,
\(
E[\Phi(U_i)\mid \theta_{i0}]
=
\int \Phi(a_i-w)dF_{W_i\mid\theta_{i0}}(w).
\)
Using the same Edgeworth expansion and integrating by parts, the term becomes
\begin{equation*}
\Phi(a_i/\sqrt{2})+T^{-1/2}h_{1i}(a_i)+O(T^{-1}),
\qquad
h_{1i}(a):=\int_{\mathbb R}p_{1i}(w)\phi(w)\phi(a-w)dw.
\end{equation*}
Therefore, by law of iterated expectation, we have
\begin{equation*}
E\!\left[\mathbf{1}\{0\le U_i\}-\Phi(U_i)\right]
=
E\!\left[\Phi(a_i)-\Phi(a_i/\sqrt 2)\right]
+
T^{-1/2}E\!\left[p_{1i}(a_i)\phi(a_i)-h_{1i}(a_i)\right]
+
O(T^{-1}).
\end{equation*}
We next show that the first-order Edgeworth correction in the last display is
negligible after averaging over \(\theta_{i0}\). By the change of variables
\(u=\sqrt T(\theta_\tau-\theta)/\sigma(\theta)\), together with the smoothness
of \(f\), \(\sigma\), and \(p_{1,\theta}\),
\begin{equation*}
E[p_{1i}(a_i)\phi(a_i)]
=
\frac{\sigma(\theta_\tau)f(\theta_\tau)}{\sqrt T}
\int_{\mathbb R}p_{1,\theta_\tau}(u)\phi(u)du
+
o(T^{-1/2}).
\end{equation*}
Similarly,
\begin{align*}
E[h_{1i}(a_i)]
&=
\frac{\sigma(\theta_\tau)f(\theta_\tau)}{\sqrt T}
\int_{\mathbb R}\int_{\mathbb R}
p_{1,\theta_\tau}(w)\phi(w)\phi(u-w)dwd u
+
o(T^{-1/2})\\
&=
\frac{\sigma(\theta_\tau)f(\theta_\tau)}{\sqrt T}
\int_{\mathbb R}p_{1,\theta_\tau}(w)\phi(w)dw
+
o(T^{-1/2}),
\end{align*}
where the last equality follows from
\(\int_{\mathbb R}\phi(u-w)du=1\). Hence, we have
$T^{-1/2}E\!\left[p_{1i}(a_i)\phi(a_i)-h_{1i}(a_i)\right]
=
o(T^{-1}).$
It remains to evaluate the explicit Edgeworth correction in \(A_i(U_i)\).
Write \(p_{1i}(u)=b(u)'\kappa_{1i}\) and
\(\widehat p_{1i}(u)=b(u)'\widehat\kappa_{1i}\), where \(b(u)\) is a finite
vector of polynomial terms. Given that the plug-in coefficients satisfy
\(E\|\widehat\kappa_{1i}-\kappa_{1i}\|^2=O(T^{-1})\). Then, for some \(m<\infty\),
\(
|\widehat p_{1i}(U_i)-p_{1i}(U_i)|
\le
C\|\widehat\kappa_{1i}-\kappa_{1i}\|(1+|U_i|^m).
\)
Hence, by Cauchy-Schwarz,
\begin{align*}
&\left|
E[\widehat p_{1i}(U_i)\phi(U_i)]
-
E[p_{1i}(U_i)\phi(U_i)]
\right| \le
C\{E\|\widehat\gamma_i-\gamma_i\|^2\}^{1/2}
\{E[(1+|U_i|^m)^2\phi(U_i)^2]\}^{1/2}.
\end{align*}
The first factor is \(O(T^{-1/2})\), while the second factor is
\(O(T^{-1/4})\) by the localization argument around
\(\theta_\tau\), given that only $O(T^{-1/2})$ units around $\theta_\tau$ contributes and \(
\int_{\mathbb R}(1+|u|^m)^2\phi(u)^2du<\infty,
\). Therefore,
\[
E[\widehat p_{1i}(U_i)\phi(U_i)]
=
E[p_{1i}(U_i)\phi(U_i)]
+
O(T^{-3/4})
=
E[p_{1i}(U_i)\phi(U_i)]
+
o(T^{-1/2}).
\]
Conditioning on \(W_i\) and applying the same change of variables to
\(u=a_i-W_i\), with the tails controlled by the Gaussian factor, gives
\begin{equation*}
E[p_{1i}(U_i)\phi(U_i)]
=
\frac{\sigma(\theta_\tau)f(\theta_\tau)}{\sqrt T}
\int_{\mathbb R}p_{1,\theta_\tau}(u)\phi(u)du
+
o(T^{-1/2}).
\end{equation*}
Therefore
\begin{equation*}
\frac{1}{\sqrt T}E[\widehat p_{1i}(U_i)\phi(U_i)]
=
\frac{\sigma(\theta_\tau)f(\theta_\tau)}{T}
\int_{\mathbb R}p_{1,\theta_\tau}(u)\phi(u)du
+
o(T^{-1}).
\end{equation*}
Moreover, the polynomial-growth envelope also gives
$\frac{1}{T}E[\widehat p_{2i}(U_i)\phi(U_i)]
=
o(T^{-1})$.
Combining the preceding displays yields
\begin{equation*}
E[A_i(U_i)]
=
E\!\left[\Phi(a_i)-\Phi(a_i/\sqrt 2)\right]
-
\frac{\sigma(\theta_\tau)f(\theta_\tau)}{T}
\int_{\mathbb R}p_{1,\theta_\tau}(u)\phi(u)du
+
o(T^{-1}).
\end{equation*}

Finally, the same calculation as for the original bias term, but with variance scale multiplied by \(\lambda\), gives
\begin{equation*}
\tau-E\!\left[\Phi\!\left(\frac{\sqrt{T}(\theta_\tau-\theta_{i0})}{\lambda\sigma(\theta_{i0})}\right)\right]
=-\frac{\lambda^2}{T}\left[\frac{f'(\theta_\tau)}{2}\sigma(\theta_\tau)^2+\sigma(\theta_\tau)\sigma'(\theta_\tau)f(\theta_\tau)\right]+o(T^{-1}).
\end{equation*}
Applying this with \(\lambda=1\) and \(\lambda=\sqrt{2}\) and taking difference, we obtain
\begin{equation*}
E[\Phi(a_i)-\Phi(a_i/\sqrt 2)]
=-\frac{1}{T}\left[\frac{f'(\theta_\tau)}{2}\sigma(\theta_\tau)^2+\sigma(\theta_\tau)\sigma'(\theta_\tau)f(\theta_\tau)\right]+o(T^{-1}).
\end{equation*}
Therefore
\begin{align*}
\sqrt{N}E[A_i(U_i)]
=&-\frac{\sqrt{N}}{T}\left[\frac{f'(\theta_\tau)}{2}\sigma(\theta_\tau)^2+\sigma(\theta_\tau)\sigma'(\theta_\tau)f(\theta_\tau)\right]\\&-\sqrt{N}\frac{\sigma(\theta_\tau)f(\theta_\tau)}{T}
\int_{\mathbb R}p_{1,\theta_\tau}(u)\phi(u)du+o(\sqrt{N}/T).
\end{align*}

Combining the above displays and moving the term to the left-hand side,
\begin{equation}
B_{NT}^{*}
=-\frac{\sqrt{N}}{T}\left[\frac{f'(\theta_\tau)}{2}\sigma(\theta_\tau)^2+\sigma(\theta_\tau)\sigma'(\theta_\tau)f(\theta_\tau)+
\sigma(\theta_\tau)f(\theta_\tau)
\int_{\mathbb R}p_{1,\theta_\tau}(u)\phi(u)du\right]+o_P(\sqrt{N}/T).\label{eq: bootstrap bias}
\end{equation}
Hence \(B_{NT}^{*}=o_P(1)\) when \(\sqrt{N}/T=o(1)\), and when \(\sqrt{N}/T\to c\in(0,\infty)\), the bootstrap bias matches the original bias at the first order.

\textbf{Term $(B^{*})$.} Given that the bootstrap resampling mechanism
is discrete, the mean value theorem is not directly applicable as the
original statistic. By the second-step bootstrap properties, we can
write 
\begin{small}
\begin{align*}
\frac{1}{N}\sum_{i=1}^{N}E^{*}(\widehat{h}_{i}^{**}(\theta))\vert_{\theta=\widehat{\theta}_{\tau}^{**}}-\frac{1}{N}\sum_{i=1}^{N}E^{*}(\widehat{h}_{i}^{**}(\widehat{\theta}_{\tau}))= & \frac{1}{N}\sum_{i=1}^{N}E^{*}\left(E^{**}\left(\mathbf{1}\left\{ \widehat{\theta}_{Ti}^{**}\le\widehat{\theta}_{\tau}\right\} -\mathbf{1}\left\{ \widehat{\theta}_{Ti}^{**}\le\theta\right\} \vert_{\theta=\widehat{\theta}_{\tau}^{**}}\right)\right).\\
= & \frac{1}{N}\sum_{i=1}^{N}E^{*}\left(\mathbf{1}\left\{ \widehat{\theta}_{Ti}^{*}\le\widehat{\theta}_{\tau}\right\} -\mathbf{1}\left\{ \widehat{\theta}_{Ti}^{*}\le\theta\right\} \vert_{\theta=\widehat{\theta}_{\tau}^{**}}\right)\\
= & \frac{1}{N}\sum_{i=1}^{N}P^{*}\left(\tfrac{\sqrt{T}\left(\widehat{\theta}_{\tau}^{**}-\widehat{\theta}_{Ti}\right)}{\sigma_{i}^{*}}<\tfrac{\sqrt{T}\left(\widehat{\theta}_{Ti}^{*}-\widehat{\theta}_{Ti}\right)}{\sigma_{i}^{*}}\le\tfrac{\sqrt{T}\left(\widehat{\theta}_{\tau}-\widehat{\theta}_{Ti}\right)}{\sigma_{i}^{*}}\right)\vert_{\theta=\widehat{\theta}_{\tau}^{**}}\\
= & \frac{1}{N}\sum_{i=1}^{N}\left(\Phi\left(W_{Ti}\right)-\Phi\left(W_{Ti}^{**}\right)\right)+R_{T}^{*},
\end{align*}
\end{small}
where $W_{Ti}=\tfrac{\sqrt{T}\left(\widehat{\theta}_{\tau}-\widehat{\theta}_{Ti}\right)}{\sigma_{i}^{*}}$,
$W_{Ti}^{**}=\tfrac{\sqrt{T}\left(\widehat{\theta}_{\tau}^{**}-\widehat{\theta}_{Ti}\right)}{\sigma_{i}^{*}}$,
and $\sigma_{i}^{*}=\lim_{T\to\infty}Var^{*}\left(\sqrt{T}\widehat{\theta}_{Ti}^{*}\right)$.
Define $Q^{*}=\frac{1}{N}\sum_{i=1}^{N}\phi\left(W_{Ti}\right)\tfrac{\sqrt{T}}{\sigma_{i}^{*}}$,
and applying Lemma \ref{lemma: bootstrap Q and V}, we have $Q^{*}=Q+o_{P}\left(1\right)$.
Hence, by Taylor expansion, we have 
\begin{align*}
\frac{1}{N}\sum_{i=1}^{N}\left(\Phi\left(W_{Ti}\right)-\Phi\left(W_{Ti}^{**}\right)\right) & =\left(\widehat{\theta}_{\tau}-\widehat{\theta}_{\tau}^{**}\right)\frac{1}{N}\sum_{i=1}^{N}\phi\left(W_{Ti}\right)\tfrac{\sqrt{T}}{\sigma_{i}^{*}}+o_{P}\left(N^{-1/2}\right)\\
 & \equiv\left(\widehat{\theta}_{\tau}-\widehat{\theta}_{\tau}^{**}\right)Q^{*}+o_{P}\left(N^{-1/2}\right).
\end{align*}
A similar argument 
implies that $R_{T}^{*}=o_{P}\left(N^{-1/2}\right)$.

\textbf{Term $(C^{*})$.} The process $\sqrt{N}\,\mathbb{H}_{N}^{*}(\theta)$
is stochastically equicontinuous in probability by the same argument as proof for Theorem \ref{thm: theta_tau normality}. Since $\widehat{\theta}_{\tau}^{**}$
is consistent by Lemma \ref{lemma: bootstrap consistency}, $\mathbb{H}_{N}^{*}(\widehat{\theta}_{\tau}^{**})-\mathbb{H}_{N}^{*}(\widehat{\theta}_{\tau})=o_{P}(N^{-1/2})$.

Collecting $(A^{*})$-$(C^{*})$, 
\begin{align*}
\sqrt{N}(\widehat{\theta}_{\tau}^{**}-\widehat{\theta}_{\tau}) & =-Q^{*-1}\Big(\frac{1}{\sqrt{N}}\sum_{i=1}^{N}[\widehat{h}_{i}^{**}(\widehat{\theta}_{\tau})-E^{*}(\widehat{h}_{i}^{**}(\widehat{\theta}_{\tau}))]+\frac{1}{\sqrt{N}}\sum_{i=1}^{N}E^{*}(\widehat{h}_{i}^{**}(\widehat{\theta}_{\tau}))+o_{P}(1)\Big)\\
 & \overset{d^*}{\to}\mathcal{N}(B_{R}^{*},\lim_{N,T\to\infty}Q^{*-1}V^{*}Q^{*-1}),
\end{align*}
in probability, with $Q^{*}=E^{*}[f_{\widehat{\theta}_{Ti}^{**}}^{**}(\widehat{\theta}_{\tau})]>0$,
$V^{*}=\frac{1}{N}\sum_{i=1}^{N}Var^{*}\big(\psi_{\tau}(\widehat{\theta}_{Ti}^{**}-\widehat{\theta}_{\tau})\big)>0$,
and 
\begin{equation*}
B_{R}^{*}=\lim_{N,T\to\infty}Q^{*-1}\frac{1}{\sqrt{N}}\sum_{i=1}^{N}\left(E^{*}\!\left(\widehat{h}_{i}^{**}\!\left(\widehat{\theta}_{\tau}\right)\right)-\widehat{h}_{i}\!\left(\widehat{\theta}_{\tau}\right)\right).
\end{equation*}
Applying Lemma \ref{lemma: bootstrap Q and V} and \eqref{eq: bootstrap bias} yields the desirable
result. 
\end{proof}

\subsection{Proof of Theorem \protect\ref{thm: bootstrap validity} ($ii$)}
\begin{proof}
Let
\(
\widehat p_\tau^*
=
\frac1N\sum_{i=1}^{N}
P^*(\widehat{\theta}_{Ti}^{*}\le \widehat{\theta}_{\tau})
\)
denote the population version of the bootstrap centering probability. The
simulation version in Algorithm 3 replaces this quantity by its Monte Carlo
average over bootstrap draws. Define
\(
\widehat h_{i,c}^{*}(\theta)
=
\widehat p_\tau^*
-
\mathbf 1\{\widehat{\theta}_{Ti}^{*}\le \theta\}.
\)
By definition of the centered deterministic-design bootstrap estimator,
\(
\frac1N\sum_{i=1}^{N}
\widehat h_{i,c}^{*}(\widehat{\theta}_{\tau,c}^{*})
=0.
\)
Also define
\(
\mathbb H_{N,c}^{*}(\theta)
=
\frac1N\sum_{i=1}^{N}
\left[
\widehat h_{i,c}^{*}(\theta)
-
E^*\widehat h_{i,c}^{*}(\theta)
\right].
\)
Then, we have
\[
0
=
\underbrace{
\frac1N\sum_{i=1}^{N}
\widehat h_{i,c}^{*}(\widehat{\theta}_{\tau})
}_{A_c^{\rm D*}}
+
\underbrace{
\left[
\frac1N\sum_{i=1}^{N}
E^*\widehat h_{i,c}^{*}(\widehat{\theta}_{\tau,c}^{*})
-
\frac1N\sum_{i=1}^{N}
E^*\widehat h_{i,c}^{*}(\widehat{\theta}_{\tau})
\right]
}_{B_c^{\rm D*}}
+
\underbrace{
\left[
\mathbb H_{N,c}^{*}(\widehat{\theta}_{\tau,c}^{*})
-
\mathbb H_{N,c}^{*}(\widehat{\theta}_{\tau})
\right]
}_{C_c^{\rm D*}}.
\]

\textit{Term \(A_c^{\rm D*}\).}
By construction,
\[
\frac1N\sum_{i=1}^{N}
E^*\widehat h_{i,c}^{*}(\widehat{\theta}_{\tau})
=
\widehat p_\tau^*
-
\frac1N\sum_{i=1}^{N}
P^*(\widehat{\theta}_{Ti}^{*}\le \widehat{\theta}_{\tau})
=0.
\]
Therefore,
\begin{align*}
\sqrt{N\sqrt T}\,A_c^{\rm D*}
&=
\frac1{\sqrt N}\sum_{i=1}^{N}
T^{1/4}
\left[
\widehat h_{i,c}^{*}(\widehat{\theta}_{\tau})
-
E^*\widehat h_{i,c}^{*}(\widehat{\theta}_{\tau})
\right]=
-\frac1{\sqrt N}\sum_{i=1}^{N}
T^{1/4}
\left[
\mathbf 1\{\widehat{\theta}_{Ti}^{*}\le \widehat{\theta}_{\tau}\}
-
P^*(\widehat{\theta}_{Ti}^{*}\le \widehat{\theta}_{\tau})
\right].
\end{align*}
Define 
\begin{align*}
Z_{i}^{*} & =T^{1/4}\Bigl[\widehat{h}_{i}^{*}(\widehat{\theta}_{\tau})-E^{*}\bigl(\widehat{h}_{i}^{*}(\widehat{\theta}_{\tau})\bigr)\Bigr],\qquad V^{*}(\{\theta_{i0}\}_{i})=\sum_{i=1}^{N}Var^{*}(N^{-1/2}Z_{i}^{*}).
\end{align*}
We seek to apply the bootstrap Lindeberg CLT. By Lemma \ref{lemma: bootstrap Q and V fixed},
we obtain $V^{*}(\{\theta_{i0}\}_{i})>0$ in probability. Since $\bigl|\mathbf{1}\{\widehat{\theta}_{Ti}^{*}\le\widehat{\theta}_{\tau}\}-P^{*}(\widehat{\theta}_{Ti}^{*}\le\widehat{\theta}_{\tau})\bigr|\le1$,
for any fixed $\varepsilon>0$, if $N^{-1}T^{1/2}=o(1)$ then the
inequality $|N^{-1/2}Z_{i}^{*}|\le N^{-1/2}T^{1/4}<\varepsilon V^{*}(\{\theta_{i0}\}_{i})^{1/2}$
holds in probability. Thus 
\begin{equation*}
\frac{1}{V^{*}(\{\theta_{i0}\}_{i})}\sum_{i=1}^{N}E^{*}\bigl((N^{-1/2}Z_{i}^{*})^{2}\mathbf{1}\{|N^{-1/2}Z_{i}^{*}|>\varepsilon V^{*}(\{\theta_{i0}\}_{i})^{1/2}\}\bigr)=o_{P}\left(1\right).
\end{equation*}
By the Lindeberg CLT it follows that 
$N^{-1/2}\sum_{i=1}^{N}Z_{i}^{*}\xrightarrow{d^*}\mathcal{N}\bigl(0,\lim_{N,T\to\infty}V^{*}(\{\theta_{i0}\}_{i})\bigr)$.

\textit{Term \(B_c^{\rm D*}\).}
Since \(\widehat p_\tau^*\) does not depend on \(\theta\), it drops out after
differentiation. Hence the same argument as before gives
\[
B_c^{\rm D*}
=
(\widehat{\theta}_{\tau}-\widehat{\theta}_{\tau,c}^{*})
Q^{\rm D*}
+
o_{P^*}(N^{-1/2}T^{-1/4}),
\]
where
\(
Q^{\rm D*}
=
\frac1N\sum_{i=1}^{N}
\phi\left(
\frac{\sqrt T(\widehat{\theta}_{\tau}-\widehat{\theta}_{Ti})}{\sigma_i^*}
\right)
\frac{\sqrt T}{\sigma_i^*}.
\)  Lemma \ref{lemma: bootstrap Q and V fixed}  further shows the $Q^{\rm D*}=Q^{\rm D}+o_{P}(1)$.

\textit{Term $C^{\rm D*}$.} We show $C^{\rm D*}=o_{P^{*}}\bigl(N^{-1/2}T^{-1/4}\bigr)$.
For any $\theta$, 
\begin{align*}
T^{1/4}\frac{1}{\sqrt{N}}\mathbb{H}_{N}^{*}(\theta) & =\frac{1}{\sqrt{N}}\sum_{i=1}^{N}T^{1/4}\Bigl[(\tau-\mathbf{1}\{\widehat{\theta}_{Ti}^{*}\le\theta\})-(\tau-E^{*}(\mathbf{1}\{\widehat{\theta}_{Ti}^{*}\le\theta\}))\Bigr]\\
 & =-\frac{1}{\sqrt{N}}\sum_{i=1}^{N}T^{1/4}\Bigl[\mathbf{1}\{\widehat{\theta}_{Ti}^{*}\le\theta\}-E^{*}(\mathbf{1}\{\widehat{\theta}_{Ti}^{*}\le\theta\})\Bigr].
\end{align*}
Following the argument similar to Term $C^{\rm D}$ in proof for Theorem \ref{thm: theta_tau normality fixed}, one can deduce that 
\(
C^{\mathrm{D}*}=o_{P^{*}}\bigl(N^{-1/2}T^{-1/4}\bigr).
\)

Collecting the three terms, we obtain
\[
0
=
\sqrt{N\sqrt T}\,A_c^{\rm D*}
-
Q^{\rm D*}
\sqrt{N\sqrt T}
(\widehat{\theta}_{\tau,c}^{*}-\widehat{\theta}_{\tau})
+
o_{P^*}(1).
\]
Hence
\(
\sqrt{N\sqrt T}
(\widehat{\theta}_{\tau,c}^{*}-\widehat{\theta}_{\tau})
=
(Q^{\rm D*})^{-1}
\sqrt{N\sqrt T}\,A_c^{\rm D*}
+
o_{P^*}(1).
\)
Therefore, as \(N,T\to\infty\) and
\(T^{1/2}\ll N\ll T^{3/2}/(\log T)^2\),
\[
\sqrt{N\sqrt T}
(\widehat{\theta}_{\tau,c}^{*}-\widehat{\theta}_{\tau})
\xrightarrow{d^*}
\mathcal N
\left(
0,
\lim_{N,T\to\infty}
Q^{\rm D*-1}
V^*(\{\theta_{i0}\}_{i})
Q^{\rm D*-1}
\right),
\]
in probability. Applying Lemma \ref{lemma: bootstrap Q and V fixed} with the continuity of $\Phi$ yields the desired result.
\end{proof}

\subsection{Proof of Theorem \protect\ref{thm: m-estimator}}
\begin{proof}
\textbf{Proof of Theorem \ref{thm: m-estimator}(i).}
It suffices to verify high-level Assumptions~\ref{as: high level random-1} and \ref{as: high level-bootstrap random}. By the Cauchy-Schwarz inequality,
\begin{equation*}
E\|\bm Z_{it}\varepsilon_{it}\|^{4}
\le \bigl(E\|\bm Z_{it}\|^{8}\bigr)^{1/2}\bigl(E|\varepsilon_{it}|^{8}\bigr)^{1/2}
<\infty.
\end{equation*}
Moreover,
\begin{equation*}
W_T(\bm\theta_{i0})
=\bm a^\top \bm\sigma_i^{-1}\sqrt{T}(\widehat{\bm\theta}_{Ti}-\bm\theta_{i0})
= A\!\left(\frac{1}{T}\sum_{t=1}^T \bm{\mathcal Z}_{it}\right),
\end{equation*}
where $A(\cdot)$ is a smooth function with four continuous derivatives and $A(E\bm{\mathcal Z}_{it})=0$. Hence, by Cram\'er's condition, the moment bound above, and Theorem~2.2 of Hall~(\citeyear{hall2013bootstrap}),
\begin{equation*}
\sup_{x\in\mathbb R}\left|
P\!\left(A\!\left(\frac{1}{T}\sum_{t=1}^T \bm{\mathcal Z}_{it}\right)\le x\right)
-\Phi(x)-T^{-1/2}p_1(x)\phi(x)-T^{-1}p_2(x)\phi(x)
\right|
=o(T^{-1}),
\end{equation*}
where
\begin{equation*}
p_1(x)=-A_1\sigma^{-1}+\frac{1}{6}A_2\sigma^{-3}(x^2-1),
\end{equation*}
$\sigma^2$ is the asymptotic variance of $W_T(\bm\theta_{i0})$, and $A_1$ and $A_2$ are defined in Hall~(\citeyear{hall2013bootstrap}, equations (2.32)-(2.33)). Since $\sigma^2$,  $A(T^{-1}\sum_t \bm{\mathcal Z}_{it})$, $p_1(x)$, and $p_2(x)$ do not depend on $\bm\theta_{i0}$, and the distribution of \(\bm{\mathcal Z}_{it}\) is identical across
\(i\), the Edgeworth expansion is identical across $i$. Hence, it follows that
\begin{equation*}
\sup_{\bm\theta\in\Theta}\sup_{x\in\mathbb R}\left|
P\!\left(W_T(\bm\theta)\le x\mid \bm\theta\right)
-\Bigl(\Phi(x)+T^{-1/2}p_1(x)\phi(x)+T^{-1}p_2(x)\phi(x)\Bigr)
\right|
=o(T^{-1}),
\end{equation*}
which verifies Assumption~\ref{as: high level random-1}(i) and (iv). Under Assumption~\ref{as:first-step-ols}, the asymptotic variance
\begin{equation*}
\bm V
=
E(\bm Z_{it}\bm Z_{it}^\top)^{-1}
Var(\bm Z_{it}\varepsilon_{it})
E(\bm Z_{it}\bm Z_{it}^\top)^{-1}
\end{equation*}
is finite and nonsingular, so Assumption~\ref{as: high level random-1}(ii) and (iii) hold.

For Assumption~\ref{as: high level-bootstrap random}, note that
\begin{equation*}
W_T^*(\bm\theta_{i0})
=\bm a^\top \bm\sigma_i^{*-1}\sqrt{T}(\widehat{\bm\theta}_{Ti}^*-\widehat{\bm\theta}_{Ti})
= A\!\left(\frac{1}{T}\sum_{t=1}^T \bm{\mathcal Z}_{it}^*\right),
\end{equation*}
where $T^{-1}\sum_{t=1}^T \bm{\mathcal Z}_{it}^*$ is based on an i.i.d. bootstrap sample drawn from $\{\bm{\mathcal Z}_{it}\}_t$. Fixing $i$, Theorem~5.1 of Hall~(\citeyear{hall2013bootstrap}) together with Theorem~3.1 of Horowitz~(\citeyear{horowitz2001bootstrap}) yields
\begin{equation*}
\sup_{x\in\mathbb R}\left|
P^*\!\left(A\!\left(\frac{1}{T}\sum_{t=1}^T \bm{\mathcal Z}_{it}^*\right)\le x\right)
-\Bigl(\Phi(x)+T^{-1/2}\widehat p_1(x)\phi(x)+T^{-1}\widehat p_2(x)\phi(x)\Bigr)
\right|
=o_P(T^{-1}),
\end{equation*}
where \(\widehat p_1(x)\) and \(\widehat p_2(x)\) are obtained from \(p_1(x)\) and
\(p_2(x)\) by replacing the population cumulants with their sample analogues. 
The imposed moment conditions imply that the sample cumulants consistently estimate
their population counterparts. In particular, for \(j=3,4\),
\(
E\left\|
\widehat\kappa_{j,\theta_{i0}}-\kappa_{j,\theta_{i0}}
\right\|
=
O(T^{-1}).
\)
Since the first-step distribution is identical across \(i\), the same bound holds
uniformly over \(i\). 
 Similarly, under Assumption \ref{as: iid}, provided that $T^{-1}\sum_t \bm{\mathcal Z}_{it}^*$, $\widehat p_1(x)$, and $\widehat p_2(x)$ do not depend on $\widehat{\bm\theta}_{Ti}$, the bootstrap Edgeworth expansion holds uniformly over $i$, verifying Assumption~\ref{as: high level-bootstrap random}(ii).  Finally, the bootstrap asymptotic variance is also $\bm V$, so Assumption~\ref{as: high level-bootstrap random}(i) holds as well.

\textbf{Proof of Theorem \ref{thm: m-estimator}(ii).}
We want to verify high-level Assumptions~\ref{as: high level fixed-1} and \ref{as: high level-bootstrap random}. Notice that the limiting distribution is driven by
\(T^{-1}\sum_{t=1}^{T}\bm{\mathcal Z}_{it}\). Under the present least-squares
specification, the distribution of \(\bm{\mathcal Z}_{it}\) is the same across
\(i\) and does not depend on \(\bm\theta_{i0}\). The parameter
\(\bm\theta_{i0}\) only appears as the centering term in
\(\widehat{\bm\theta}_{Ti}-\bm\theta_{i0}\). Hence the Edgeworth correction
polynomials \(p_{1,\theta_{i0}}(x)\) and \(p_{2,\theta_{i0}}(x)\) are also common across \(i\). It follows
from the same argument used in the proof of
Theorem~\ref{thm: m-estimator}(i) that the Edgeworth expansion and the
bootstrap Edgeworth expansion hold uniformly over \(i\le N\). It then suffices to focus on Assumptions~\ref{as: high level fixed-1}(i). 

Write
\(
\widehat{\bm Q}_{iT}:=\frac1T\sum_{t=1}^T \bm Z_{it}\bm Z_{it}^{\top}\),
\(\widehat{\bm S}_{iT}:=\frac1T\sum_{t=1}^T \bm Z_{it}\varepsilon_{it},
\)
we have the usual decomposition
\(
\widehat{\bm\theta}_{Ti}-\bm\theta_{i0}
=\widehat{\bm Q}_{iT}^{-1}\widehat{\bm S}_{iT}.
\)
Hence
\begin{equation*}
\max_{1\le i\le N}\|\widehat{\bm\theta}_{Ti}-\bm\theta_{i0}\|
\le
\Bigl(\max_{1\le i\le N}\|\widehat{\bm Q}_{iT}^{-1}\|\Bigr)
\Bigl(\max_{1\le i\le N}\|\widehat{\bm S}_{iT}\|\Bigr).
\end{equation*}
We control the two factors separately.

First, since $E(\bm Z_{it}\bm Z_{it}^{\top})=\bm Q$ and $E\|\bm Z_{it}\|^{8}<\infty$, for each matrix entry $(k,\ell)$,
\begin{equation*}
E\left|\frac1T\sum_{t=1}^T \bigl(Z_{it,k}Z_{it,\ell}-E[Z_{it,k}Z_{it,\ell}]\bigr)\right|^{4}
=O(T^{-2}).
\end{equation*}
By Markov's inequality and a union bound over $i\le N$,
\begin{equation*}
P\!\left(\max_{1\le i\le N}\|\widehat{\bm Q}_{iT}-\bm Q\|>\eta\right)
\le
CN\cdot O(T^{-2}\eta^{-4})
\end{equation*}
for any fixed $\eta>0$. Because $N\ll T^{3/2}/(\log T)^2$, the right-hand side is $o(1)$. By union bound, Markov's  inequality, and the finite eighth moment, one obtains
\begin{equation*}
\max_{1\le i\le N}\|\widehat{\bm Q}_{iT}-\bm Q\|
=o_P(1).
\end{equation*}
Since $\bm Q$ is positive definite, this implies
\(
\max_{1\le i\le N}\|\widehat{\bm Q}_{iT}^{-1}\|=O_P(1).
\)

Second, since $E(\bm Z_{it}\varepsilon_{it})=\bm 0$ and
\(
E\|\bm Z_{it}\varepsilon_{it}\|^{4}
<\infty,
\)
the same argument yields
\(
\max_{1\le i\le N}\|\widehat{\bm S}_{iT}\|
=o_P\!\left(1\right).
\)
Combining the two bounds gives
$\max_{1\le i\le N}\|\widehat{\bm\theta}_{Ti}-\bm\theta_{i0}\|
=o_P\!\left(1\right)$.

\end{proof}

\bibliographystyle{chicago}
\bibliography{quantile}

\end{document}